\theoremstyle{plain}
\newtheorem{theorem}{Theorem}
\newtheorem{corollary}[theorem]{Corollary}
\newtheorem{lemma}[theorem]{Lemma}
\newtheorem{conjecture}{Conjecture}
\newtheorem{openproblem}{Open problem}
\theoremstyle{definition}
\newtheorem*{definition}{Definition}
\newtheorem{axiom}{Axiom}
\DeclareRobustCommand{\sortkey}[1]{}
\DeclareMathOperator{\bigO}{\mathcal{O}}
\DeclareMathOperator{\dom}{dom}
\DeclareMathOperator{\gap}{gap}
\DeclareMathOperator{\ic}{ic}
\DeclareMathOperator{\pad}{pad}
\newcommand{\bbN}{\mathbbm{N}}
\newcommand{\binary}{\mathbbm{2}}
\newcommand{\calF}{\mathcal{F}}
\newcommand{\calL}{\mathcal{L}}
\newcommand{\cl}[1]{\textnormal{\textbf{#1}}}	% Complexity classes
\newcommand{\clnu}[1]{\cl{#1$_\textbf{nu}$}}	% Nonuniform complexity classes
\newcommand{\clsu}[1]{\cl{#1$_\textbf{u}$}}	% Semi-uniform complexity classes (subscript `su' is used by some authors to denote strong uniformity)
\newcommand{\clO}[1]{\ensuremath{\boldsymbol{\bigO(}#1\boldsymbol{)}}}	% Big-O complexity classes
\newcommand{\clX}[1]{\cl{\textit{X}#1}}	% Slicewise complexity classes
\newcommand{\clXsu}[1]{\clsu{\textit{X}#1}}	% Semi-uniform slicewise complexity classes
\newcommand{\clXnu}[1]{\clnu{\textit{X}#1}}	% Nonuniform slicewise complexity classes
\newcommand{\deq}{=}	% Defined equal
\newcommand{\length}[1]{{\lvert#1\rvert}}
\newcommand{\lli}{linearly length increasing}
\newcommand{\nquasile}{\not\preccurlyeq}
\newcommand{\quasile}{\preccurlyeq}
\newcommand{\reland}{\;\land\;}	% Relating conditions conjunctively
\newcommand{\relor}{\;\lor\;}	% Relating conditions disjunctively
\newcommand{\rmnu}{\mathrm{nu}}
\newcommand{\st}[1][]{{\;#1{|}\;}}	% Such that (with optional size)
\newcommand{\symdiff}{\bigtriangleup}
\newcommand{\immune}[1]{almost #1-bi-immune}
\newcommand{\levelable}[1]{#1-omni-levelable}
\title{Levelable Sets and the Algebraic Structure of Parameterizations}
\author{Jouke Witteveen \and Leen Torenvliet}
\begin{document}

\bibliographystyle{plainnat}

\maketitle

\begin{abstract}
  Asking which sets are fixed-parameter tractable for a given parameterization constitutes much of the current research in parameterized complexity theory.
  This approach faces some of the core difficulties in complexity theory.
  By focussing instead on the parameterizations that make a given set fixed-parameter tractable, we circumvent these difficulties.
  We isolate parameterizations as independent measures of complexity and study their underlying algebraic structure.
  Thus we are able to compare parameterizations, which establishes a hierarchy of complexity that is much stronger than that present in typical parameterized algorithms races.
  Among other results, we find that no practically fixed-parameter tractable sets have optimal parameterizations.
\end{abstract}

\section{Introduction}

Ever since the identification of efficient computability by Cobham and Edmonds (see~\citealp{goldreich2008computational,arora2009computational}) and subsequent intractability results \citep{cook1971complexity,garey1979computers}, the computational complexity of sets has been focused on the complexity of their hardest instances, or in the case of average computational complexity on the complexity of the majority of their instances.
Yet, even very hard sets have simple instances and often lots of them.

Parameterized complexity \citep{downey1999parameterized, flum2006parameterized} was originally introduced to deal with this apparent indiscriminate judgment of computational complexity of sets.
Instead of looking at the entire set as a single computational object, the complexity of the set is stratified.
A parameter function is introduced that singles out one particular dimension of the set, such as  the size of the desired solution, and a function of this parameter is factored into the complexity of the computation.
Then, for any part of the set where this function yields a constant, membership of the set may be efficiently computable.
Sets that can thus be dissected into parts that are decidable in polynomial time, where the degree of the polynomial is invariable across all parts, form the parameterized counterpart of efficient computability, \emph{fixed-parameter tractability}.
This division into polynomial time computable parts may be uniform or nonuniform (i.e., with or without a single piecewise polynomial time algorithm) and the function of the parameter in play may be computable or non-computable.
Thus, sets can be divided up into parts that are easy or tractable from one of many viewpoints.

At the same time, many sets have infinite subsets that do not have tractable algorithms.
\citet{lynch1975reducibility} introduced \emph{complexity cores} and showed that intractable sets have infinite subsets of which all but finitely many instances resist all efficient programs.
In particular no parameter function of a fixed-parameter tractable set can take on the same value infinitely often on a complexity core.

Unfortunately, complexity cores allow finite variations and thus any specific instance can arbitrarily be made part of, or excluded from a complexity core.
Therefore, complexity cores do not lead to a useful formalization of the idea of hard instances.
Exclusion of an instance from a complexity core is possible by augmenting a program with table lookup for that instance.
This shortcoming can be overcome by taking also the sizes of programs into account.
Resource bounded \emph{instance complexity}, as introduced by \citet{orponen1994instance}, does exactly that as a combination of the complexity of the set and the individual complexity of strings \citep{li1997introduction}.

Both instance complexity and parameterized complexity can be seen as ways of extending complexity cores into a meaningful notion of the distribution of complexity inside a set.
While instance complexity focuses on sets directly, parameterized complexity instead focuses on algorithms, or, more abstractly, on the possible ways to slice up a set into tractable parts.
In general, there may be many different algorithms for deciding a particular set and in that regard instance complexity is less ambiguous than parameterized complexity.
However, any finite number of algorithms can be combined into a single one that boasts the best behavior of its constituents.
Such a compound algorithm would be structured as a big if-then-else case distinction.
On the parameterization side, this means that any finite number of parameters can be combined into a single parameter via an easy to compute pairing function.
This compound parameter then represents the complexity obtained when taking all of the finitely many algorithms considered into account.

Besides treating parameterizations as measures of complexity, it may be tempting to view a classification of a set as fixed-parameter tractable as information about the computational complexity of that set.
However, \citet[Corollary~3.8]{witteveen2016fpd} have shown that the class of fixed-parameter tractable sets equals that of fixed-parameter decidable sets.
Hence classifying a set as fixed-parameter tractable holds very little information about its computational complexity.
When imposing additional uniformity constraints, the class of fixed-parameter tractable sets quickly becomes equal to that of the decidable sets.
Again, a classification as fixed-parameter tractable is of little value from a computational complexity point of view.
We therefore consider it more informative to study parameters and parameterizations first, and sets second.

In studying parameterizations, we should not ask what sets are fixed-parameter tractable with a given parameterization, even though such questions constitute much of the current research in parameterized complexity theory (e.g.,~\citealp{bodlaender2013deterministic}).
The question ``is some set $A$ (uniformly) fixed-parameter tractable with a given parameterization?'' is equivalent to ``are the parts of $A$ with a fixed parameter value (uniformly) decidable in polynomial time?''
If this was in general an easy question, we would not have so much trouble separating \cl{P} from other complexity classes.
Instead, we should explore the parameterizations with which a given set is fixed-parameter tractable (e.g.,~\citealp{garg2016raising}).
This circumvents the difficulties associated with having to separate complexity classes.

We found that in order to study parameterizations properly, we need a definition of parameterizations that is more general than most common definitions.
Often, parameters are represented by integer values associated with instances.
Whether a parameter represents the size of a vertex cover in a graph, a bound on the treewidth in a graph, or the number of classrooms in a scheduling problem, it is usually a function that yields a number.
It has been recognized that parameters need not be restricted to one dimensional numerical values, yet to our knowledge the true limits of parameters have not been investigated before.
In Section~\ref{sec:parameterspaces}, we shall identify an axiomatic basis of traits required from parameters and parameterizations.
This allows us to build general definitions from the ground up, enforcing just the properties we desire.
The resulting definitions are rooted in order~theory and reveal a rich algebraic structure that governs parameterizations.
Notions from order theory required for our definitions are defined also in Section~\ref{sec:parameterspaces}. % Directed set, up-set, lattice, filter
Because of the role of order theory in this paper, we meet various kinds of orders that (almost) all have their own symbol.
There is $\leq$, which we reserve for natural numbers, $\bbN$.
In text, we refer to this as ``smaller''.
We use $\quasile$ both for orders on parameter spaces and for orders on parameterizations, and refer to these orders as ``before'' and ``below'' respectively.
Finally, sets of parameterizations take center stage in this paper.
For such sets, the usual set inclusion ordering $\subseteq$ is in place.
When the parameterizations with which a given set is fixed-parameter tractable form a subset of the parameterizations with which some other set is fixed-parameter tractable, we could say that the latter set is ``easier'' than the former.

Historically, parameterized complexity theory is, at least in part, practically motivated \citep{downey1999parameterized}, and its definitions are arrived at in a pragmatic fashion.
In Sections~\ref{sec:nonuniform} and~\ref{sec:uniform}, continuing our foundational journey, we recover the well-known parameterized complexity classes \clX{P} and \cl{FPT} from classical complexity theory primitives and our general definition of parameterizations.
The first of these sections deals with nonuniform parameterized complexity, whereas the second deals with semi-uniform and fully uniform parameterized complexity.
Section~\ref{sec:preliminaries} contains the required background in classical complexity theory, as far as used for building parameterized complexity theory bottom-up.
Chiefly, that section includes the definition of a \emph{slice}, which serves as a dual to the complexity \emph{core}.

Much of our parameterized analysis of complexity revolves around collections of parameterizations that put a given set in a parameterized complexity class.
Such collections function as an interface to the complexity of the sets that gave rise to them.
In Sections~\ref{sec:nonuniform} and~\ref{sec:uniform} we lift classical levelability and immunity classifications to the parameterized context.
The parameterized versions of these classifications can be expressed as properties of the collection of parameterizations that put a given set in a parameterized complexity class.
Doing so makes clear that levelability captures a usefulness criterion for parameterized algorithms.
Intuitively, a parameterized approach to decision algorithms is only useful when there are infinitely many instances corresponding to any fixed parameter value.
Levelability with respect to parameterized complexity classes is the formalization of this intuition.
Another complexity aspect captured by the collection of parameterizations that put a given set in a parameterized complexity class is the existence of optimal parameterizations.
As it turns out, these collections of parameterizations form filters in a lattice of parameterizations.
A parameterization that is better than (or in terms of this paper \emph{below}) all other parameterizations exists precisely when such a filter is a principal filter.
We obtain an almost complete characterization of the sets that have optimal parameterizations with respect to \cl{FPT} and \clX{P} with varying uniformity constraints.
From this characterization we get that uniformly fixed-parameter tractable sets for which a parameterized approach is useful admit no optimal parameterizations.
A completion of the characterization and two conjectures regarding the separations of sets based on their complexity are left as open problems.

\section{Preliminaries}
\label{sec:preliminaries}

We assume the reader is familiar with standard notation from complexity theory, as in \citep{papadimitriou2003computational,arora2009computational}.
In the current text we use a binary alphabet $\binary \deq \{0, 1\}$.
The set of nonempty finite binary sequences, \emph{strings}, is denoted by $\binary^+$.
Complexity classes are written in boldface.
We make use of $\bigO$-notation, where $n$ is the free variable in the mathematical expression following $\bigO$.
For a function $f$, the complexity class \clO{f(n)} is the class of sets that can be decided deterministically by a Turing machine with a running time bound in $\bigO(f(n))$ using some agreed upon alphabet and number of tapes.

Of interest to us are decision procedures that not necessarily decide on membership of every element.
\begin{definition}
  A Turing machine $\Phi$ is a \emph{partial decision procedure} for a set $A$ if we have, for all $x \in \binary^+$:
  \begin{itemize}
    \item $\Phi(x) = 1 \implies x \in A$.
    \item $\Phi(x) = 0 \implies x \notin A$
  \end{itemize}
  The machine is said to \emph{decide} the elements of its \emph{domain}:
  \begin{equation*}
    \dom(\Phi) \deq \{x \st \Phi(x) \in \{1, 0\}\}.
  \end{equation*}
\end{definition}

Outside its domain, a partial decision procedure either does not halt at all, or outputs anything other than $0$ or $1$.
Limiting partial decision procedures to a polynomial running time gets us polynomial approximations as in {\citep{ko1981completeness,balcazar1985bi}}.
\begin{definition}
  A procedure $\Phi$ is a \emph{\cl{P}-approximation} for a set $A$ if it is a partial decision procedure for $A$ that runs in polynomial time.
\end{definition}

Note that although a \cl{P}-approximation halts on every input, we do not demand that its domain is $\binary^+$.
Necessarily, however, the domain of a \cl{P}-approximation is in \cl{P}.

Much of our work revolves around sets that occur as domains of partial decision procedures.
Therefore, we shall introduce a name for such sets.
\begin{definition}
  A set $S$ is a \emph{\cl{P}-slice} for a set $A$ if there is a \cl{P}-approximation $\Phi$ for $A$ satisfying $\dom(\Phi) = S$.
\end{definition}

The name \emph{slice} is inspired by the use of that term in parameterized complexity theory.
Slices function as a dual to the complexity cores of \citet{lynch1975reducibility}.
\begin{definition}
  A set $C$ is a \emph{\cl{P}-core} for a set $A$ if for every \cl{P}-slice $S$ for $A$ the intersection $C \cap S$ is finite.
\end{definition}

Subsets of \cl{P}-cores for a set are also \cl{P}-cores for that set, as are finite variations.
This complicates thinking of the members of a core for a set as the inherently hard instances of that set.
However, for some sets the collection of \cl{P}-cores contains a maximal element with respect to inclusion up to finite variation.
Such sets are split into an easy part and a hard part.
\begin{theorem}
\label{thm:maximal}
  A set $A$ has a maximal (up to finite variations) \cl{P}-slice if and only if it has a maximal \cl{P}-core.
\end{theorem}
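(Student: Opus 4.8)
The plan is to prove both implications by explicit construction, exploiting the duality between slices and cores. For the forward direction, suppose $A$ has a maximal $\cl{P}$-slice $S$ (maximal up to finite variation among all $\cl{P}$-slices for $A$). I claim that the complement $C \deq \binary^+ \setminus S$ is a maximal $\cl{P}$-core for $A$. First I would check that $C$ is a $\cl{P}$-core at all: given any $\cl{P}$-slice $S'$ for $A$, the union $S \cup S'$ is again a $\cl{P}$-slice for $A$ (combine the two $\cl{P}$-approximations by running both and agreeing output where defined — they cannot conflict since both are partial decision procedures for $A$), so by maximality $S \cup S'$ is a finite variation of $S$, whence $S' \setminus S = S' \cap C$ is finite. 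Thus $C$ is a $\cl{P}$-core. For maximality of $C$: suppose $C'$ is a $\cl{P}$-core for $A$ with $C \subseteq C'$ up to finite variation; I must show $C' \subseteq C$ up to finite variation, i.e. $C' \cap S$ is finite. But $S$ is itself a $\cl{P}$-slice for $A$, so $C' \cap S$ is finite by the defining property of a core. Hence $C$ is maximal.

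For the reverse direction, suppose $A$ has a maximal $\cl{P}$-core $C$. The natural guess is again that $S \deq \binary^+ \setminus C$ is a maximal $\cl{P}$-slice. The first obstacle, and the one I expect to be the real work, is showing $S$ is a $\cl{P}$-slice at all — that is, exhibiting a $\cl{P}$-approximation $\Phi$ for $A$ with $\dom(\Phi) = S$. It is not immediate that the complement of a maximal core is even in $\cl{P}$, let alone that it supports a correct partial decision procedure. Here I would lean on the standard characterization from the complexity-core literature (cf.\ \citealp{lynch1975reducibility,book1988polynomial,orponen1985polynomial}): a maximal $\cl{P}$-core, when it exists, arises as the complement of a union of $\cl{P}$-slices that is itself a $\cl{P}$-slice. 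Concretely, one shows that the family of $\cl{P}$-slices for $A$ is closed under finite union (as above) and that, when a maximal core exists, this family has a maximum up to finite variation — essentially because "$C$ is a maximal core" forces every slice to be contained, up to finite variation, in one fixed slice $S_0$, and then $S_0$ and $\binary^+ \setminus C$ agree up to finite variation. Finite variations of $\cl{P}$-slices are $\cl{P}$-slices (patch the approximation with a finite table), so we may take $S = \binary^+ \setminus C$ itself.

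Once $S$ is known to be a $\cl{P}$-slice, maximality is the easy mirror of the forward argument: if $S'$ is any $\cl{P}$-slice for $A$ with $S \subseteq S'$ up to finite variation, then $\binary^+ \setminus S'$ is a $\cl{P}$-core (same union argument) contained in $C$ up to finite variation, so by maximality of $C$ it equals $C$ up to finite variation, giving $S' = S$ up to finite variation.

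The crux is therefore the passage, in the reverse direction, from "maximal core exists" to "the slices have a maximum element." I would isolate this as the key lemma: \emph{the $\cl{P}$-slices for $A$, ordered by inclusion up to finite variation, have a greatest element if and only if the $\cl{P}$-cores for $A$ do}, with the two greatest elements being complementary. Both theorem implications then follow formally. The only subtlety to watch is that "maximal" (no strictly larger element) and "greatest/maximum" (above all others) coincide here because the ordering of slices up to finite variation is directed (closure under finite union), so a maximal element is automatically a maximum — and the same holds for cores since a union of two cores need not be a core, but a maximal core dominates every core by the argument in the forward direction. I would state this directedness observation up front and reuse it on both sides.
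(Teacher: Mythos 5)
Your forward direction is correct and is essentially the paper's argument (complement of a maximal slice is a core, and it dominates every core because every core meets the slice finitely). The genuine gap is in the reverse direction, which you yourself call ``the crux'' but never actually prove: you assert, with a pointer to the literature, that the existence of a maximal $\cl{P}$-core ``forces every slice to be contained, up to finite variation, in one fixed slice $S_0$'', and the ``key lemma'' you propose to isolate is just a restatement of the theorem (the maximal-versus-greatest issue being harmless by directedness, as you note). No argument is given for why a maximal core cannot coexist with a family of slices that is unbounded under inclusion up to finite variation. The paper closes exactly this hole with a diagonal selection: enumerate the $\cl{P}$-slices $S_1, S_2, \ldots$ for $A$ (there are only countably many, one for each $\cl{P}$-approximation), and suppose $S \deq \binary^+ \setminus C$ is not a slice, where $C$ is the maximal core. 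Then for every $j$ the set $S \setminus T_j$ with $T_j \deq \bigcup_{i \le j} S_i$ must be infinite; otherwise, since $C \cap T_j$ is finite ($C$ is a core and $T_j$ is a slice), $S$ would be a finite variation of the slice $T_j$ and hence itself a slice. Choosing a fresh $x_j \in S \setminus T_j$ for each $j$ produces an infinite set $D$ disjoint from $C$ such that $C \cup D$ is still a core (any slice equals some $S_m$ and contains at most $x_1, \ldots, x_{m-1}$ from $D$), contradicting the maximality of $C$. Some argument of this kind---turning ``$S$ is not a slice'' into an infinite extension of the core---is what your proposal lacks; citing the complexity-core literature for it defeats the purpose of a proof of this statement.

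A secondary slip: your closing remark that ``a union of two cores need not be a core'' is false. If $C_1, C_2$ are $\cl{P}$-cores for $A$ and $S$ is any $\cl{P}$-slice for $A$, then $(C_1 \cup C_2) \cap S = (C_1 \cap S) \cup (C_2 \cap S)$ is finite, so $C_1 \cup C_2$ is a core; this union-closure is the clean reason a maximal core is automatically a greatest core. Your substitute argument (``a maximal core dominates every core by the argument in the forward direction'') is circular in the reverse direction, since applying the forward-direction reasoning to a given maximal core presupposes that its complement is already known to be a slice---which is precisely what is being proved.
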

\begin{proof}
%\begin{description}
%  \item[$\implies$]
  $\Longrightarrow$.
    The complement of a maximal \cl{P}-slice for $A$ is a \cl{P}-core for $A$ and cannot be infinitely extended.
    Hence it is a maximal \cl{P}-core for $A$.

%  \item[$\impliedby$]
  $\Longleftarrow$.
    We claim that the complement of a maximal \cl{P}-core for $A$ is a \cl{P}-slice for $A$.
    Let $S_1, S_2, \ldots$ be an enumeration of the \cl{P}-slices for $A$.
    Assuming that the complement of a maximal \cl{P}-core for $A$ is not a \cl{P}-slice for $A$, then for all $j$, there are infinitely many elements in this complement outside the \cl{P}-slice $\bigcup_{i \le j} S_i$.
    Consequently, we would be able to extend the maximal \cl{P}-core with infinitely many elements, one for each $j$, contradicting its maximality.
    Hence, the complement of a maximal \cl{P}-core for $A$ must be a \cl{P}-slice for $A$.
%\end{description}
\end{proof}

Thus the complement of a maximal \cl{P}-slice is a maximal \cl{P}-core and because \cl{P}-slices were necessarily in \cl{P} we get the following.
\begin{corollary}
  A maximal \cl{P}-core, if it exists, is in \cl{P}.
\end{corollary}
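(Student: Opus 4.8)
The plan is to obtain the corollary immediately from the $\Longleftarrow$ direction of Theorem~\ref{thm:maximal} together with the observation, recorded just after the definition of \cl{P}-approximations, that the domain of any \cl{P}-approximation lies in \cl{P}. So suppose $C$ is a maximal \cl{P}-core for $A$; the goal is to conclude $C \in \cl{P}$.

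First I would apply the proof of the $\Longleftarrow$ direction of Theorem~\ref{thm:maximal}, which already establishes that the complement $\binary^+ \setminus C$ of a maximal \cl{P}-core for $A$ is itself a \cl{P}-slice for $A$; that is, there is a \cl{P}-approximation $\Phi$ for $A$ with $\dom(\Phi) = \binary^+ \setminus C$. Next I would invoke the fact that the domain of a \cl{P}-approximation is necessarily in \cl{P}: a polynomial-time machine that on input $x$ simulates $\Phi(x)$ and accepts precisely when the output is $0$ or $1$ decides $\dom(\Phi)$ in polynomial time, since $\Phi$ halts on every input within a polynomial bound. Hence $\binary^+ \setminus C \in \cl{P}$. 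Finally, because \cl{P} is closed under complementation, $C \in \cl{P}$, as desired.

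I do not anticipate any genuine obstacle here, as the substantive work was done in Theorem~\ref{thm:maximal}. The only steps needing a line of justification are the two closure facts just used — that $\dom(\Phi) \in \cl{P}$ for a \cl{P}-approximation $\Phi$, and that \cl{P} is closed under complement — both immediate for deterministic polynomial time. Alternatively, one could bypass the complement entirely by using the remark preceding the corollary: a maximal \cl{P}-core for $A$ is, up to a finite variation, the complement of a maximal \cl{P}-slice for $A$, maximal \cl{P}-slices are in particular \cl{P}-slices and hence in \cl{P}, and membership in \cl{P} is preserved under finite variations and under complementation.
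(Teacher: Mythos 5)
Your proposal is correct and follows essentially the same route as the paper: the remark preceding the corollary derives it from Theorem~\ref{thm:maximal} plus the facts that domains of \cl{P}-approximations are in \cl{P} and that \cl{P} is closed under complement, exactly as you argue. The only difference is that you spell out these closure facts explicitly, which the paper leaves implicit.
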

Of course, within a maximal \cl{P}-core $C$ for a set $A$, membership of $A$ is hard to decide in the sense that no \cl{P}-approximation decides membership of $A$ for infinitely many elements from $C$.

Complexity cores give us an elegant definition of immunity that is nonstandard, but equivalent to the common definition in the literature based on finite subsets.
The equivalence of both definitions was already observed by \citet{balcazar1985bi} (see also~\citealp{book1988polynomial}).
\begin{definition}
  A set $A$ is \emph{\cl{P}-bi-immune} if $\binary^+$ is a \cl{P}-core for $A$.
\end{definition}

Thus a set is \cl{P}-bi-immune if it has the largest \cl{P}-core possible.
This definition is easily generalized to sets that have a maximal \cl{P}-core.
In this case the equivalence of our definition to the common definition as a disjoint union was observed by \citet{orponen1986classification}.
\begin{definition}
  A set is \emph{\immune{\cl{P}}} if it has a maximal \cl{P}-core.
\end{definition}
Note that by our definitions sets in \cl{P}, and finite sets in particular, have maximal \cl{P}-cores and are therefore \immune{\cl{P}}.
This may seem somewhat peculiar, but is in agreement with the definitions used by \citet{orponen1986classification} and of no objection in the sequel.

With \cl{P}-slices and \cl{P}-cores we make no distinction between the members and the nonmembers of a set.
For this reason and by Theorem~\ref{thm:maximal} a set is \immune{\cl{P}} if it has a slice that cannot be extended by infinitely many members or nonmembers into a larger slice.
In \citep{orponen1985polynomial,orponen1986optimal} sets of which every slice can be extended by infinitely many members into a larger slice are called \emph{\cl{P}-levelable}.
We want a more general definition covering sets of which every slice can be extended by infinitely many members or nonmembers.
Although it may be tempting to call such sets \cl{P}-bi-levelable, this would more naturally describe sets of which every slice can be extended by both infinitely many members \emph{and} infinitely many nonmembers.
The equivalence of the following definition and that based on infinite extensions of slices was observed by \citet{orponen1985polynomial}.
\begin{definition}
  A set is \emph{\levelable{\cl{P}}} if it has no maximal \cl{P}-slice.
\end{definition}
By Theorem~\ref{thm:maximal}, a set is \levelable{\cl{P}} precisely when it is not \immune{\cl{P}}.
Furthermore, every \cl{P}-levelable set is \levelable{\cl{P}}.
As a consequence of the observation of \citet{orponen1986optimal} that very many (natural) intractable sets are \cl{P}-levelable, we find that there are many \levelable{\cl{P}} sets.

In all of the above definitions and theorems, \cl{P} can be replaced, given some constant $c$, by \clO{n^c}.
Likewise, an \clO{n^c}-approximation for a set $A$ is a partial decision procedure for $A$ with a running time in $\bigO(n^c)$.

\section{Parameter Spaces and Parameterizations}
\label{sec:parameterspaces}

It is customary in computability theory to consider consider convergence of computation \citep{odifreddi1992classical}.
Computation that terminates is said to be \emph{convergent}, while computation that does not halt is said to be \emph{divergent}.
In computational complexity theory the central theme is convergence of computation within a bounded amount of resource (time, space, \ldots) usage.
Parameterized complexity theory originates from the desire to have a more fine-grained analysis of computational complexity, that is, a more fine-grained analysis of convergence of computation.
In the current section we shall adopt this mindset and define parameterizations from the ground up as mathematical structures.

When we are not interested in resource usage, convergence of a Turing machine on a given input is a simple matter of \emph{yes} or \emph{no}.
We can represent convergence of a Turing machine $\Phi$ by the set $\{x \st \text{$\Phi$ converges on $x$}\}$.
In a context where resource usage matters, we are mostly interested in Turing machines that converge on all inputs, but with varying resource needs.
It is these resource needs that we want to measure using parameters.
With the parameter, we want to express how much of some resource is needed for a given Turing machine to converge on a given input.
Thus, we want to represent convergence of a Turing machine $\Phi$ by something like the set
\begin{equation*}
  \{(x, k) \st \text{$\Phi$ converges on $x$ with at most $k$ (additional) resource usage}\}.
\end{equation*}
From the presence of \emph{at most} in the above criteria, it is clear that parameter values have to be taken from ordered sets.
These ordered sets we shall call \emph{parameter spaces} and the sets of tuples $(x, k)$ as above we shall call \emph{parameterizations}.
While we have used a parameter, $k$, to measure resource usage, we have neglected to specify how resources should be measured.
Indeed, this is not something that can be done unambiguously and therefore whenever we say a parameter measures a resource, we mean that the resource is bounded in a parameter dependent way.
The precise dependence (such as, for example, exponential or double exponential) we shall not pay attention to in this paper.

In order to come up with proper definitions for parameter spaces and parameterizations, we consider some of the properties we want them to have.
As ours is the study of computational complexity and not that of computability, we require that it is always possible to have sufficient parametric leeway for a computation to converge.
\begin{axiom}
  For every input $x$, computation converges somewhere (i.e., there is a parameter value with which $x$ occurs in the parameterization).
\end{axiom}

A second trait we put in place is that convergence should be stable.
\begin{axiom}
  When some $x$ is in a parameterization with a parameter value $k$, then, for all $k'$ following $k$ (i.e., where $k$ comes before $k'$ in a parameter space specific order), $x$ is also in the parameterization with parameter value $k'$.
\end{axiom}

When computations produce results, such as is the case for decision procedures, we want these results to not change for some input $x$ depending on the parameter values with which $x$ is in a parameterization.
This is relevant since we shall eventually provide Turing machines access to the value of the parameter.
Demanding that the outcome of a computation is the same for all parameter values following one with which some input occurs in a parameterization is not enough to enforce convergence to a unique outcome.
It could be that the order on the parameter space defines incomparable (with regard to the order) branches of parameter values.
Not only do we want convergence to occur on all these branches for all inputs, we also want some enforcement of the uniqueness of the outcome of a converged computation.
\begin{axiom}
  For every two parameter values $k_1, k_2$ there is a parameter value $k$ such that $k_1$ and $k_2$ come before $k$ in the applicable order.
\end{axiom}
This axiom internalizes both the idea that resources can always be sufficiently extended and the desire for convergence to only be possible to a single outcome.

Lastly, we want to be able to use parameters constructively.
\begin{axiom}
\label{ax:last}
  Parameter values can be effectively encoded so that they can be used in data structures.
\end{axiom}

These \ref{ax:last} axioms can be realized using structures known from order theory \citep{davey2002introduction}.

\begin{definition}
  A nonempty set $\Omega$, ordered by a reflexive and transitive order $\quasile$, is a \emph{directed set} if for all $a, b \in \Omega$ there is a $c \in \Omega$ such that we have $a \quasile c$ and $b \quasile c$.
\end{definition}

Reflexive and transitive orders that are not necessarily antisymmetric are often referred to as \emph{quasiorders}.
A quasiorder that is also antisymmetric is a \emph{partial order}.
In this paper, quasiorders are the dominant type of orders so we shall not drag the term \emph{quasi} around and instead speak simply of orders.

\begin{definition}
  A subset $U$ of a directed set $\Omega$ is an \emph{up-set} if, for all $a \in U$ and $b \in \Omega$ with $a \quasile b$, we have $b \in U$.
\end{definition}

Using these concepts, we can formulate the most general definitions that meet our axioms.

\begin{definition}
  A \emph{parameter space} is a directed set that can be encoded into $\binary^+$.
  The length of an element $k$ of a parameter space is the length of its encoding and denoted by $\length{k}$.
\end{definition}

We call a parameter space \emph{decidable} if the range of its encoding is.
Likely the most prevalent parameter space is the set of natural numbers ordered by the regular \emph{smaller than} order and encoded with ordinary binary representation.

\begin{definition}
  Given a parameter space $\Omega$, a \emph{parameterization} is a subset $\eta$ of $\binary^+ \times \Omega$, such that for every $x$ the set $\{k \st (x, k) \in \eta\}$ is a nonempty up-set of $\Omega$.
  For a fixed parameter value $k$, we denote by $\eta_k$ the set $\{x \st (x, k) \in \eta\}$.  % and $\eta(x) \deq \{k \st (x, k) \in \eta\}$
\end{definition}

Observe how, for a parameterization $\eta$ and a fixed parameter value $k$, the set $\eta_k$ is similar to our initial representation of convergence of a Turing machine $\Phi$ by the set $\{x \st \text{$\Phi$ converges on $x$}\}$.
We are not so much interested in finite variations on the sets of inputs on which computation converges.
Instead, we are mostly interested in parameterizations for which these sets increase in infinitely large steps.

\begin{definition}
  A parameterization $\eta$ has \emph{imix} (infinitely many infinite extensions) if for every parameter value $k$ there is a parameter value $k'$ such that the set $\eta_{k'} \backslash \eta_k$ is infinite.
\end{definition}

This paper is not the first to define parameterizations, and ideally our definition would be equivalent to earlier definitions.
Unfortunately, our axiomatic definitions differ from the definitions of both \citet{downey1999parameterized} and \citet{flum2006parameterized}, which were obtained in a more empirical fashion.

The concept of a parameterization as used by \citet{flum2006parameterized} is closest to ours as it can be considered a special case of our definition.
For \citet{flum2006parameterized}, a parameterization is a polynomial time computable function $\kappa: \binary^+ \to \bbN$ mapping inputs to a unique point of convergence in the parameter space $\bbN$.
This can be viewed as an instance of \emph{parameterizing by a function} $f: \binary^+ \to \Omega$, where $\Omega$ is a parameter space, in general.
The parameterization corresponding to such a function is the set $\{(x, k) \st f(x) \quasile k\}$.
Note that, contrary to \citet{flum2006parameterized}, we have not required that there is a \emph{unique} point of convergence in parameterizations and in particular we have not required that every parameterization arises from parameterizing by some function.
The benefit of our more general definition is that it allows for natural operations combining parameterizations into new ones.

With \citet{downey1999parameterized}, there is no distinction between the sets of which the computational complexity is studied and parameterizations.
Their study is restricted to multi-dimensional sets of which the computational complexity is expressed as a function of the values along each of the dimensions of the sets.
As a result, the notion of convergence of computation is lost since it is no longer possible to speak of the kind of limit behavior we demanded by our axioms.
Where a point of convergence was unique with \citet{flum2006parameterized}, it is nonexistent with \citet{downey1999parameterized}.
In practice, however, most of the sets considered by \citet{downey1999parameterized} are monotone in the sense that when a parameter value $k$ comes before another parameter value $k'$, the inputs associated with $k$ form a subset of those associated with $k'$.
While this restores a point of convergence for members of a set in the form of a first parameter value with which it occurs in the parameterized set, it does no good for nonmembers.
We feel that this asymmetry is undesirable and that parameterized complexity theory should be invariant under taking the complement of a set.

Both \citet{downey1999parameterized} and \citet{flum2006parameterized} work mostly with $\bbN$ as their parameter space, ordered by its canonical order.
Having a common parameter space makes it easy to compare parameterizations \citep{komusiewicz2012new}.
While our more general definitions open the door for parameter spaces with a richer structure than that of $\bbN$, they also necessitate a special means of comparing parameterizations.
It is possible to numerically capture the point of convergence for some input given a parameterization.

\begin{definition}
  Given a parameterization $\eta \subseteq \binary^+ \times \Omega$, the minimization function $\binary^+ \to \bbN$ of $\eta$ is defined as
  \begin{equation*}
    \mu_\eta(x) \deq \min\{\length{k} \st k \in \Omega \reland (x, k) \in \eta\}.
  \end{equation*}
\end{definition}

This minimization function allows us to compare the behavior of the point of convergence of multiple parameterizations with different underlying parameter spaces.
Informally, we want to express how long a parameter value needs to be for convergence in one parameterization to happen on all inputs that converge with some bounded point of convergence in another parameterization.

\begin{definition}
  Given parameterizations $\eta_1$ and $\eta_2$, the gap function $\bbN \to \bbN \cup \{\infty\}$, is defined as
  \begin{equation*}
    \gap_{\eta_1, \eta_2}(n) \deq \max\{\mu_{\eta_1}(x) \st x \in \binary^+ \reland \mu_{\eta_2}(x) \le n\},
  \end{equation*}
  where we take the maximum of the empty set to be $0$.
\end{definition}

Comparing parameterizations using this gap function enables us to define a nonuniform and a uniform order on parameterizations.
One parameterization is below another, when a bound on the point of convergence for the other parameterization can be turned into a bound on the point of convergence for the one parameterization.
Similar orders have been considered by \citet{komusiewicz2012new} and \citet{fellows2013towards}.

\begin{definition}
  A parameterization $\eta_1$ is below a parameterization $\eta_2$ in the \emph{nonuniform} order $\quasile_\rmnu$ if for all $n \in \bbN$ we have $\gap_{\eta_1, \eta_2}(n) < \infty$.
\end{definition}

From a computational standpoint, a bound on the gap between two parameterizations is only useful if it is computable.
The uniform variant of the order on parameterizations is therefore mainly of interest when the parameterizations involved are based on decidable parameter spaces and are themselves decidable.

\begin{definition}
  A parameterization $\eta_1$ is below a parameterization $\eta_2$ in the \emph{uniform} order $\quasile$ if there is a computable function upper bounding $\gap_{\eta_1, \eta_2}$.
\end{definition}

We observe the relationship $\quasile \subset \quasile_\rmnu$ between these orders.
These definitions provide structure to the class of parameterizations.
\begin{lemma}
\label{lem:preorder}
  Both $\quasile_\rmnu$ and $\quasile$ are reflexive and transitive orders on the class of parameterizations.
\end{lemma}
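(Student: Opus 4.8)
The plan is to verify reflexivity and transitivity for each order directly from the definitions, routing everything through properties of the gap function. Reflexivity is the easy half: for a single parameterization $\eta$, the gap function $\gap_{\eta,\eta}$ satisfies $\gap_{\eta,\eta}(n) = \max\{\mu_\eta(x) \st \mu_\eta(x) \le n\} \le n$ for every $n$. Hence $\gap_{\eta,\eta}(n) < \infty$ for all $n$ (giving $\eta \quasile_\rmnu \eta$), and the identity function — which is computable — upper bounds $\gap_{\eta,\eta}$ (giving $\eta \quasile \eta$). So both orders are reflexive.

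For transitivity, suppose $\eta_1 \quasile_\rmnu \eta_2$ and $\eta_2 \quasile_\rmnu \eta_3$; I want $\eta_1 \quasile_\rmnu \eta_3$. The key observation is a composition inequality for the gap function: for all $n$,
\begin{equation*}
  \gap_{\eta_1, \eta_3}(n) \le \gap_{\eta_1, \eta_2}\bigl(\gap_{\eta_2, \eta_3}(n)\bigr).
\end{equation*}
To see this, take any $x$ with $\mu_{\eta_3}(x) \le n$. By definition of $\gap_{\eta_2,\eta_3}$ we get $\mu_{\eta_2}(x) \le \gap_{\eta_2, \eta_3}(n) =: m$, and then by definition of $\gap_{\eta_1,\eta_2}$ we get $\mu_{\eta_1}(x) \le \gap_{\eta_1,\eta_2}(m)$. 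Taking the maximum over all such $x$ yields the displayed bound. (One should note the edge case where $\gap_{\eta_2,\eta_3}(n)$ or an intermediate value is empty and equals $0$ by convention; the inequality still holds since a max over a smaller or empty index set is no larger, and $\gap_{\eta_1,\eta_2}(0) \ge 0$.) Now if $\eta_1 \quasile_\rmnu \eta_2$ and $\eta_2 \quasile_\rmnu \eta_3$, then $m = \gap_{\eta_2,\eta_3}(n)$ is finite for every $n$, and $\gap_{\eta_1,\eta_2}(m)$ is finite for every finite $m$; so the right-hand side of the displayed inequality is finite for every $n$, hence so is $\gap_{\eta_1,\eta_3}(n)$. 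This gives $\eta_1 \quasile_\rmnu \eta_3$.

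For the uniform order, I reuse the same composition inequality. Given computable functions $g_{12}$ and $g_{23}$ with $\gap_{\eta_1,\eta_2} \le g_{12}$ and $\gap_{\eta_2,\eta_3} \le g_{23}$, I may assume without loss of generality that both are monotone nondecreasing (replace $g$ by $n \mapsto \max_{i \le n} g(i)$, which is still computable and still an upper bound). Then the composition $n \mapsto g_{12}(g_{23}(n))$ is computable and, by the displayed inequality together with monotonicity of $g_{12}$, upper bounds $\gap_{\eta_1,\eta_3}$. Hence $\eta_1 \quasile \eta_3$. The main obstacle — really the only place requiring care — is the composition inequality and the associated bookkeeping for the empty-maximum convention; once that lemma is in hand, both transitivity claims and both reflexivity claims are immediate. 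Antisymmetry is not required, consistent with the paper's convention that "order" means quasiorder.
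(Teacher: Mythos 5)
Your proof is correct and follows essentially the same route as the paper: reflexivity via the identity function bounding $\gap_{\eta,\eta}$, and transitivity by composing the bounding functions, noting that compositions of finite (respectively computable) bounds remain finite (respectively computable). Your version merely makes explicit the composition inequality and the monotonization step that the paper leaves implicit.
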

\begin{proof}
  Reflexivity follows from the observation that for every parameterization $\eta$ the gap function $\gap_{\eta, \eta}$ is bounded by the identity function.
  For transitivity we need only to remark that the composition of finite bounding functions is again a finite bounding function and that the composition of computable functions is computable.
\end{proof}

Neither order is antisymmetric and it is convenient to work with the associated partially ordered set of \emph{equivalence classes} instead of with parameterizations directly.
Note that every parameterization with imix is unequal, in the associated equivalence relation, to any parameterization without imix, both using the nonuniform as well as using the uniform order.

\begin{lemma}
\label{lem:imix}
  Given parameterizations $\eta$ and $\eta'$, when $\eta$ has imix but $\eta'$ does not, then one of $\eta \quasile_\rmnu \eta'$ and $\eta' \quasile_\rmnu \eta$ fails to hold (and similarly for $\quasile$).
\end{lemma}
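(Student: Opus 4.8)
The plan is to prove the contrapositive in the form: if $\eta' \quasile_\rmnu \eta$, $\eta \quasile_\rmnu \eta'$, and $\eta'$ has no imix, then $\eta$ has no imix either. This yields the lemma for $\quasile_\rmnu$ at once, and the statement for $\quasile$ comes for free: since $\quasile \subset \quasile_\rmnu$, the relations $\eta \quasile \eta'$ and $\eta' \quasile \eta$ would entail $\eta \quasile_\rmnu \eta'$ and $\eta' \quasile_\rmnu \eta$. The one structural point to keep in mind throughout is that the parameter spaces $\Omega$ and $\Omega'$ underlying $\eta$ and $\eta'$ need not be related, so the only bridge between the two parameterizations is the pair of gap functions acting on the minimization functions $\mu_\eta$ and $\mu_{\eta'}$. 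Negated imix for $\eta'$ means there is a value $l_0 \in \Omega'$ with $\eta'_{l'} \backslash \eta'_{l_0}$ finite for every $l' \in \Omega'$; enlarging $l_0$ along $\Omega'$ if necessary --- which only shrinks these differences --- we may assume $\eta'_{l_0} \neq \emptyset$.

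Next I would manufacture, out of $l_0$, a single value $k_0 \in \Omega$ that will witness that $\eta$ has no imix. Set $q = \gap_{\eta, \eta'}(\length{l_0})$, which is finite because $\eta \quasile_\rmnu \eta'$. Every $x \in \eta'_{l_0}$ has $\mu_{\eta'}(x) \le \length{l_0}$ and hence $\mu_\eta(x) \le q$, so $\eta'_{l_0} \subseteq \{x \st \mu_\eta(x) \le q\} = \bigcup_{\length{k} \le q} \eta_k$. There are only finitely many $k \in \Omega$ with $\length{k} \le q$ (the encoding of a parameter space being injective), so directedness of $\Omega$ supplies a $k_0$ lying above all of them, and then the up-set property of parameterizations gives $\bigcup_{\length{k} \le q} \eta_k \subseteq \eta_{k_0}$, whence $\eta'_{l_0} \subseteq \eta_{k_0}$.

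Finally I would verify that this $k_0$ does the job: for an arbitrary $k' \in \Omega$, $\eta_{k'} \backslash \eta_{k_0}$ is finite. Put $p = \gap_{\eta', \eta}(\length{k'})$, finite because $\eta' \quasile_\rmnu \eta$. Each $x \in \eta_{k'}$ has $\mu_\eta(x) \le \length{k'}$ and hence $\mu_{\eta'}(x) \le p$, so $\eta_{k'} \subseteq \bigcup_{\length{l} \le p} \eta'_l$, a finite union since only finitely many $l \in \Omega'$ have $\length{l} \le p$. Using $\eta'_{l_0} \subseteq \eta_{k_0}$ we get $\eta_{k'} \backslash \eta_{k_0} \subseteq \eta_{k'} \backslash \eta'_{l_0} \subseteq \bigcup_{\length{l} \le p}(\eta'_l \backslash \eta'_{l_0})$, a finite union of finite sets and therefore finite. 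Since $k'$ was arbitrary, $\eta$ has no imix, contradicting the hypothesis. The only parts requiring care are the two invocations of the fact that a parameter space has finitely many values of any bounded length (hence injectivity of the encodings) together with directedness; the rest is bookkeeping around the gap functions, and that indirection --- having to pass between $\Omega$ and $\Omega'$ only through $\mu_\eta$, $\mu_{\eta'}$, and the gap functions --- is where I expect any slip would occur.
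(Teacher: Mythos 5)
Your proof is correct and takes essentially the same route as the paper's: reduce the $\quasile$ case to the $\quasile_\rmnu$ case via $\quasile \subset \quasile_\rmnu$, then use finiteness of the gap functions in both directions together with the fact that only finitely many parameter values have bounded length (plus directedness and the up-set property) to transfer the failure of imix between $\eta'$ and $\eta$. Your contrapositive packaging, showing that mutual comparability would force $\eta$ to lack imix, is just a reorganization (and a somewhat more explicit write-up) of the paper's two-case contradiction argument.
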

\begin{proof}
  The statement for the uniform order follows from that for the nonuniform order by the inclusion $\quasile \subset \quasile_\rmnu$.

  Let $m$ be so that no parameter value for $\eta'$ of length at least $m$ has an infinite extension.
  Because parameter spaces are directed sets, for every parameterization that does not have imix such an $m$ can be found.

  Suppose we have $\eta \quasile_\rmnu \eta'$.
  There is then a parameter value $k$ such that $\eta_k$ is a superset of all sets $\eta'_{k'}$, when we have $\length{k'} \le m$.
  Since $\eta$ has imix, there exists a parameter value $l$ such that $\eta_l \backslash \eta_k$ is infinite.
  We find that $\eta' \quasile_\rmnu \eta$ fails because $\gap_{\eta', \eta}(\length{l})$ must be infinite.

  In case we have $\eta' \quasile_\rmnu \eta$, then if $\eta \quasile_\rmnu \eta'$ were to hold as well, $\gap_{\eta, \eta'}(m)$ would be finite.
  However, there would then be a parameter value $l$ such that $\eta_l$ is infinitely larger than any $\eta'_{k'}$, when we have $\length{k'} \le m$.
  As $\eta'$ does not have imix, this would violate the assumed $\eta' \quasile_\rmnu \eta$.
\end{proof}

For a further investigation of the structure of parameterizations, we need two more definitions from order theory.

\begin{definition}
  A partially ordered set is a \emph{lattice} if every two elements have a least upper bound and a greatest lower bound.
  A lattice is \emph{bounded} if it has a greatest element and a least element.
\end{definition}

\begin{definition}
  A nonempty up-set $F$ of a lattice is a \emph{filter} if every greatest lower bound of elements of $F$ is also an element of $F$.
  A filter is \emph{principal} if it contains a least element.
\end{definition}

Already without any reference to complexity classes we can characterize the partially ordered sets corresponding to our orders on parameterizations.

\begin{lemma}
\label{lem:lattices}
  The partially ordered sets on equivalence classes of parameterizations obtained from $\quasile_\rmnu$ and $\quasile$ are lattices.
\end{lemma}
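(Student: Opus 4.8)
The plan is to realize both the join and the meet of two parameterizations $\eta_1\subseteq\binary^+\times\Omega_1$ and $\eta_2\subseteq\binary^+\times\Omega_2$ by explicit parameterizations over the product parameter space $\Omega_1\times\Omega_2$ with the coordinatewise order. This product is again directed, and once we fix a computable encoding $\langle\cdot,\cdot\rangle$ of pairs of strings whose length is monotone in and bounded on both sides by fixed computable functions of the component lengths --- say $\max(\length u,\length v)\le\length{\langle u,v\rangle}\le g(\length u,\length v)$ for a fixed monotone computable $g$ --- it is again a parameter space. Since Lemma~\ref{lem:preorder} turns the quotient by the induced equivalence into a genuine partial order, it then suffices to check, for the candidate join (resp.\ meet), that it is an upper (resp.\ lower) bound of $\eta_1$ and $\eta_2$ and that it lies below (resp.\ above) every common upper (resp.\ lower) bound.

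For the join I take $\eta_1\vee\eta_2\deq\{(x,\langle k_1,k_2\rangle)\st(x,k_1)\in\eta_1\reland(x,k_2)\in\eta_2\}$, which is readily seen to be a parameterization (each fibre is a nonempty up-set). The decisive estimate is the sandwich $\max(\mu_{\eta_1},\mu_{\eta_2})\le\mu_{\eta_1\vee\eta_2}\le g(\mu_{\eta_1},\mu_{\eta_2})$: the lower bound because any witnessing pair carries witnesses for both $\eta_1$ and $\eta_2$, the upper bound by pairing the two length-minimal witnesses. From the lower bound, $\gap_{\eta_i,\eta_1\vee\eta_2}(n)\le n$, so $\eta_i\quasile\eta_1\vee\eta_2$; and if $\eta_1,\eta_2\quasile\eta'$ then, by the upper bound and monotonicity of $g$, $\gap_{\eta_1\vee\eta_2,\eta'}(n)\le g(\gap_{\eta_1,\eta'}(n),\gap_{\eta_2,\eta'}(n))$, which is finite, resp.\ computably bounded, whenever both $\gap_{\eta_i,\eta'}$ are, so $\eta_1\vee\eta_2\quasile\eta'$. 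Dually, for the meet I take $\eta_1\wedge\eta_2\deq\{(x,\langle k_1,k_2\rangle)\st(x,k_1)\in\eta_1\relor(x,k_2)\in\eta_2\}$; here the sandwich becomes $\min(\mu_{\eta_1},\mu_{\eta_2})\le\mu_{\eta_1\wedge\eta_2}\le f(\min(\mu_{\eta_1},\mu_{\eta_2}))$, the lower bound because every witnessing pair still has a coordinate witnessing some $\eta_i$ (so this ``disjunctive'' parameterization is genuinely non-trivial, not the bottom), the upper bound by pairing a length-minimal witness on one side with a length-minimal element of the other parameter space, whence $f$ is the fixed computable monotone function $f(m)=\max(g(m,c_2),g(c_1,m))$ with $c_i=\min\{\length k\st k\in\Omega_i\}$. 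The mirror-image bookkeeping then gives $\gap_{\eta_1\wedge\eta_2,\eta_i}(n)\le f(n)$, so $\eta_1\wedge\eta_2\quasile\eta_i$, and $\gap_{\eta',\eta_1\wedge\eta_2}(n)\le\max(\gap_{\eta',\eta_1}(n),\gap_{\eta',\eta_2}(n))$, so $\eta'\quasile\eta_1\wedge\eta_2$ whenever $\eta'\quasile\eta_1$ and $\eta'\quasile\eta_2$.

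These arguments are uniform across the two orders: every displayed inequality on a gap function is simultaneously a finiteness statement (for $\quasile_\rmnu$) and, since the auxiliary functions $f$, $g$ and the constants $c_i$ are fixed and computable, a statement that a computable bound exists (for $\quasile$), and $\quasile\subset\quasile_\rmnu$ then aligns the two lattice structures. I do not anticipate a genuinely hard step; the one thing that must be handled with care is the choice of encoding for $\Omega_1\times\Omega_2$, since it is precisely the two-sided monotone control of $\length{\langle k_1,k_2\rangle}$ by $\length{k_1}$ and $\length{k_2}$ that propagates through $\mu$ and then through $\gap$ to yield all the bounds above. The remaining verifications --- that $\eta_1\vee\eta_2$ and $\eta_1\wedge\eta_2$ are parameterizations and that the sandwich inequalities hold --- are routine.
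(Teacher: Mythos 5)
Your proposal is correct and follows essentially the same route as the paper: the same product parameter space $\Omega_1\times\Omega_2$ with the coordinatewise order, the same disjunction-based construction for the greatest lower bound, and the dual conjunction-based construction for the least upper bound (which the paper omits as ``entirely similar''). Your explicit two-sided length control on the pairing encoding is exactly what the paper's appeal to a ``reasonable encoding'' amounts to, so the two arguments coincide in substance.
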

\begin{proof}
  We shall show only that the partially ordered sets contain greatest lower bounds.
  The presence of least upper bounds can be shown in an entirely similar fashion and is therefore omitted.
  The same proof works for both $\quasile_\rmnu$ and $\quasile$.
  For simplicity we state the proof for $\quasile$.

  Let $\eta_1 \subseteq \binary^+ \times \Omega_1$ and $\eta_2 \subseteq \binary^+ \times \Omega_2$ be two arbitrary parameterizations.
  It suffices to show that there is a parameterization $\eta$ such that we have $\eta \quasile \eta_1$ and $\eta \quasile \eta_2$, and for every other parameterization $\eta'$ that realizes this we have $\eta' \quasile \eta$.
  From this, the claimed existence of a greatest lower bound of any two equivalence classes of parameterizations follows immediately.

  The parameter space underlying our parameterization $\eta$ will be $\Omega_1 \times \Omega_2$, ordered such that $(k_1, k_2)$ comes before $(k_1', k_2')$ if and only if $k_1$ comes before $k_1'$ in $\Omega_1$ and $k_2$ comes before $k_2'$ in $\Omega_2$.
  Observe that this indeed makes $\Omega$ a directed set and that an effective encoding can be derived from the effective encodings of $\Omega_1$ and $\Omega_2$.
  When $\Omega_1$ and $\Omega_2$ are decidable, $\Omega$ is as well.
  We define our parameterization as
  \begin{equation*}
    \eta \deq \{(x, (k_1, k_2)) \st (x, k_1) \in \eta_1 \relor (x, k_2) \in \eta_2\}.
  \end{equation*}
  Surely, for every $x$ the set $\{(k_1, k_2) \st (x, (k_1, k_2)) \in \eta\}$ is a nonempty up-set of $\Omega_1 \times \Omega_2$ using the order we defined.

  We shall now prove that we have $\eta \quasile \eta_1$.
  For this, fix some $\omega_2 \in \Omega_2$ and suppose we have some $x$ and $n$ such that $\mu_{\eta_1}(x) \le n$ holds.
  Then, there exists a $k_1$ with $\length{k_1} \le n$ and $(x, k_1) \in \eta_1$.
  Moreover, we have $(x, \langle k_1, \omega_2\rangle) \in \eta$ and $\length{\langle k_1, \omega_2\rangle}$ is bounded by a computable function of $n$, depending on the fixed $\omega_2$ and the specifics of the way the encodings of $\Omega_1$ and $\Omega_2$ were combined.
  By the same token, we obtain that $\eta \quasile \eta_2$ holds as well.

  Now suppose we have some other $\eta'$ with which $\eta' \quasile \eta_1$ and $\eta' \quasile \eta_2$ hold.
  We need to prove that we have $\eta' \quasile \eta$, or equivalently that $\gap_{\eta', \eta}$ is upper bounded by a computable function.
  For all $n$, taking $X_n \deq \{x \st \mu_\eta(x) \le n\}$, we have $\gap_{\eta', \eta}(n) = \max\{\mu_{\eta'}(x) \st x \in X_n\}$.
  Since for any reasonable encoding of $\Omega_1 \times \Omega_2$ we get, $X_n \subseteq \{x \st \mu_{\eta_1}(x) \le n\} \cup \{x \st \mu_{\eta_2}(x) \le n\}$ and a maximum cannot increase when taken on a subset, we find $\gap_{\eta', \eta}(n) \le \max\{\gap_{\eta', \eta_1}(n), \gap_{\eta', \eta_2}(n)\}$.
  Because both members of the set on the right hand side are computable as a function of $n$ by assumption, the maximum is computable and we obtain a computable upper bound for $\gap_{\eta', \eta}$.
\end{proof}

The above proof demonstrates the possibility, under our definitions, of combining parameterizations in such a way that the composite parameterization relates naturally to its constituents.
In this case, it was shown that we can construct greatest lower bounds and least upper bounds.
Similar constructions get convoluted when we work solely with the natural numbers as our parameter space, or when we require that points of convergence are unique.
Note that in the above proof, there are no `first' parameter values with which an input is a member of the constructed parameterization.
In order to make the order on the combined parameter space transitive, it was based on a conjunction of orders.
On the other hand, the parameterization itself was based on a disjunction of points of convergence.

In the context of orders on parameterizations, we refer to the equivalence classes of parameterizations when speaking simply of parameterizations.
As we have seen in Lemma~\ref{lem:imix}, parameterizations with imix will remain distinct from those without when employing this convention.
Showcasing this convention, we shall augment Lemma~\ref{lem:lattices} for the nonuniform order.

\begin{lemma}
\label{lem:bounded}
  The lattice of parameterizations obtained from $\quasile_\rmnu$ is a bounded lattice.
\end{lemma}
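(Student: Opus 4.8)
To show the lattice obtained from $\quasile_\rmnu$ is bounded, I must exhibit a greatest element and a least element (up to equivalence). The plan is to construct two concrete parameterizations and verify that every parameterization sits between them in the nonuniform order. For the least element I would take a parameterization over the trivial one-point parameter space $\Omega = \{*\}$, namely $\bot \deq \binary^+ \times \{*\}$, so that $\bot_* = \binary^+$ and hence $\mu_\bot(x)$ is a constant (the length of the encoding of $*$) for every $x$. For the greatest element the natural candidate is a parameterization built on $\bbN$ in which the point of convergence of $x$ grows as fast as possible with $\length{x}$ — for instance $\top \deq \{(x, k) \st \length{x} \leq k\}$, i.e.\ parameterizing by the length function, so that $\mu_\top(x)$ is (essentially) $\log\length{x}$, or alternatively $\{(x,k)\st x \leq k\}$ giving $\mu_\top(x) \approx \length{x}$; either works, and I would pick whichever makes the bookkeeping cleanest.

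The first key step is $\bot \quasile_\rmnu \eta$ for every parameterization $\eta$. Here $\gap_{\bot, \eta}(n) = \max\{\mu_\bot(x) \st \mu_\eta(x) \le n\}$, and since $\mu_\bot$ is a constant function, this maximum is that same constant (or $0$ if the index set is empty) — in particular finite — for every $n$. So $\bot \quasile_\rmnu \eta$ holds unconditionally. The second key step is $\eta \quasile_\rmnu \top$ for every parameterization $\eta$. I need $\gap_{\eta, \top}(n) = \max\{\mu_\eta(x) \st \mu_\top(x) \le n\} < \infty$ for each fixed $n$. The point is that $\{x \st \mu_\top(x) \le n\}$ is finite: since $\mu_\top(x)$ grows without bound as $\length{x} \to \infty$, only finitely many strings $x$ satisfy $\mu_\top(x) \le n$. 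A maximum of $\mu_\eta$ over a finite set of strings is finite (each $\mu_\eta(x)$ is finite because parameterizations are nonempty up-sets, so every $x$ occurs with some parameter value). Hence $\gap_{\eta,\top}(n) < \infty$ for all $n$, giving $\eta \quasile_\rmnu \top$.

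Finally I would note that $\bot$ and $\top$ are genuinely elements of the lattice — they are legitimate parameterizations over parameter spaces (encodable directed sets), and $\bot$ lacks imix while $\top$ has imix, consistent with Lemma~\ref{lem:imix} and with the convention that we work with equivalence classes. The main obstacle, such as it is, is choosing $\top$ so that the finiteness of $\{x \st \mu_\top(x) \le n\}$ is transparent: one must be slightly careful about the encoding of $\bbN$ into $\binary^+$, since $\mu_\top$ is defined via the \emph{length} of the encoding, not the numerical value. With the ordinary binary encoding, $\mu_\top(x) = \lceil \log(\length{x}+1)\rceil$ up to additive constants for the length-based choice, which still tends to infinity, so the argument goes through; I would spell out just enough of this to make the finiteness claim airtight. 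Everything else is routine application of the definitions of $\gap$, $\mu_\eta$, and $\quasile_\rmnu$.
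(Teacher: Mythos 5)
Your proposal is correct and matches the paper's proof in essence: the paper likewise takes the full parameterization $\binary^+ \times \Omega$ (your one-point space is just a particular choice of $\Omega$) as the least element and the length parameterization $\{(x,n) \st \length{x} \le n\}$ as the greatest, with the same finiteness arguments for the gap functions. Your extra care about $\mu_\top$ being measured via encoding length rather than numerical value is a fair point the paper glosses over, but it does not change the argument.
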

\begin{proof}
  First, we construct a least element.
  Consider $\eta \deq \binary^+ \times \Omega$, for an arbitrary $\Omega$.
  Let $c$ be the minimum length of a parameter value in $\Omega$.
  For every $\eta'$, we have, for all $n$,
  \begin{equation*}
    \gap_{\eta, \eta'}(n) \le c,
  \end{equation*}
  thus we have $\eta \quasile_\rmnu \eta'$.

  Next, we construct a greatest element.
  Consider $\eta \subset \binary^+ \times \bbN$ defined by $\eta \deq \{(x, n) \st \length{x} \le n\}$.
  For every $\eta'$, the gap with $\eta$ is the maximum of a finite set,
  \begin{equation*}
    \gap_{\eta', \eta}(n) = \max\{\mu_{\eta'}(x) \st \length{x} \le n\}.
  \end{equation*}
  Hence, the gap with $\eta$ is finite and we have $\eta' \quasile_\rmnu \eta$.
\end{proof}

So far, we have looked at parameterizations in isolation.
Turning to complexity theory, we are interested in pairs of a subset of $\binary^+$ and a parameterization.
In its broadest sense, parameterized complexity classes are collections of such pairs.
When a set $A$ is in a parameterized complexity class \cl{C} with parameterization $\eta$, we write $(A, \eta) \in \cl{C}$.
Parameterized complexity classes can be used to define ordered sets of parameterizations.
We shall give names to two of such sets.

\begin{definition}
  Given a partial order $\le$ on parameterizations and a parameterized complexity class \cl{C}, we denote the partially ordered set of parameterizations that may put a set in \cl{C} by
  \begin{align*}
    \calL^\le_\cl{C} &\deq (\{\eta \st \exists A: (A, \eta) \in \cl{C}\}, \le).
  \intertext{For a fixed set $A$, the partially ordered set of parameterizations with which $A$ is in \cl{C} is denoted by}
    \calF^\le_{(A, \cl{C})} &\deq (\{\eta \st (A, \eta) \in \cl{C}\}, \le).
  \end{align*}
\end{definition}

\section{Nonuniform Parameterized Complexity}
\label{sec:nonuniform}

Parameterized complexity classes can be derived from nonparameterized complexity classes \citep{flum2003describing}.
In this section we shall look at a nonuniform way of doing so and study the structure of the sets of parameterizations associated with the resulting parameterized complexity classes.
The complexity class for which we shall define parameterized variants will be \cl{P}, although other classes can be used equally well.

Lifting \cl{P} to different realms of analysis is not new and several ways of doing so have been studied previously.
Using $\exists$ as an operator, the nondeterministic counterpart, \cl{NP}, of \cl{P} has been obtained as \cl{$\exists$P}.
In probabilistic complexity theory, the \cl{\textit{BP}}~operator was derived from the complexity class \cl{\textit{BP}P} by \citet{schoning1989probabilistic} to define various probabilistic complexity classes from their deterministic counterparts.
For defining parameterized complexity classes we shall consider the nonuniform \clXnu{}~operator, of which uniform versions will be studied in the next section.
Note that we indicate nonuniformity with the \clnu{}-subscript.
In general, we shall use unmodified names of parameterized complexity classes for their uniform variants.
In other places (e.g.,~\url{https://complexityzoo.uwaterloo.ca/Complexity_Zoo:X}), special notation is used instead to denote the uniform classes.

\begin{definition}
  A set $A$ is in \emph{\clXnu{P}} with parameterization $\eta$ if for every parameter value $k$ the set $\eta_k$ is a \cl{P}-slice for $A$.
\end{definition}

In the current century, the class \clX{P} has been called \emph{slicewise~\cl{P}}, giving a name to the \clX{}~operator \citep{flum2003describing}.
However, in earlier work, \citet{downey1999parameterized} used ``slicewise \cl{P}'' to denote a different complexity class (\cl{FPT}).

By definition of \cl{P}, a set is a \cl{P}-slice if there is a constant $c$ such that it is an \clO{n^c}-slice.
Because the exponent $c$ is in general unbounded for members of \clXnu{P}, the class \clXnu{P} is very permissive.
A bound on the exponent is obtained by changing the order of the quantifiers in the definition above.

\begin{definition}
  A set $A$ is in \emph{\clnu{FPT}} with parameterization $\eta$ if there is a constant $c$ such that for every parameter value $k$ the set $\eta_k$ is an \clO{n^c}-slice for $A$.
\end{definition}

Sets in \cl{FPT} are said to be \emph{fixed-parameter tractable} as they can be decided in polynomial time with a fixed exponent.
The algorithm and precise running time for the slices of a fixed-parameter tractable set vary depending on the parameter.
In our nonuniform variant \clnu{FPT}, this dependence is unrestricted.

For every value of $c$, we can replace \cl{P} by \clO{n^c} in the definition of \clXnu{P} to get a definition of \clXnu{\clO{n^c}}.
With this definition in place, we get the equalities
\label{eq:xpfpt}
\begin{align*}
  \clXnu{P} &= \clXnu{$\Big(\bigcup_c\clO{n^c}\Big)$}
\shortintertext{and}
  \clnu{FPT} &= \bigcup_c \Big(\clXnu{\clO{n^c}}\Big),
\end{align*}
further illustrating the reversal of the order of dependence on the parameter and the exponent in the definitions of \clXnu{P} and \clnu{FPT}.

The sets of parameterizations relevant to \clXnu{P} and \clnu{FPT} showcase an algebraic structure  similar to that of the set of all parameterizations ordered by the nonuniform order.
\begin{lemma}
\label{lem:nulattices}
  $\calL^{\quasile_\rmnu}_\clXnu{P}$ and $\calL^{\quasile_\rmnu}_\clnu{FPT}$ are bounded lattices.
\end{lemma}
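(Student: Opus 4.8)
The plan is to show that both $\calL^{\quasile_\rmnu}_\clXnu{P}$ and $\calL^{\quasile_\rmnu}_\clnu{FPT}$ are closed, as subsets of the bounded lattice of all parameterizations from Lemmas~\ref{lem:lattices} and~\ref{lem:bounded}, under the greatest-lower-bound and least-upper-bound constructions used in the proof of Lemma~\ref{lem:lattices}, and that they contain the global least and greatest elements exhibited in the proof of Lemma~\ref{lem:bounded}. Since a subset of a bounded lattice that contains the top and bottom elements and is closed under the ambient meet and join is itself a bounded lattice (with the same operations), this suffices. So the real work is membership bookkeeping: given $(A_1, \eta_1)$ and $(A_2, \eta_2)$ in the class, produce a single set $B$ such that $(B, \eta_1 \vee \eta_2)$ and $(B, \eta_1 \wedge \eta_2)$ are in the class, where $\vee$ and $\wedge$ are the join and meet built in Lemma~\ref{lem:lattices}.

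First I would handle $\calL^{\quasile_\rmnu}_\clXnu{P}$. For the meet, recall $\eta \deq \eta_1 \wedge \eta_2$ has parameter space $\Omega_1 \times \Omega_2$ and $\eta_{(k_1,k_2)} = (\eta_1)_{k_1} \cup (\eta_2)_{k_2}$. If $(\eta_1)_{k_1}$ is a \cl{P}-slice for $A_1$ via a \cl{P}-approximation $\Phi_1$ and $(\eta_2)_{k_2}$ is a \cl{P}-slice for $A_2$ via $\Phi_2$, then I want a set $B$ such that $(\eta_1)_{k_1} \cup (\eta_2)_{k_2}$ is a \cl{P}-slice for $B$. The natural candidate is to keep $A_1$ and $A_2$ on disjoint portions of $\binary^+$: e.g. let $B \deq 0A_1 \cup 1A_2$ (strings with a leading bit flagging which set they belong to), and replace $\eta_1, \eta_2$ by their images under the corresponding injections. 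Then a \cl{P}-approximation for $B$ on $0\binary^+ \cup 1\binary^+$ just strips the leading bit and runs $\Phi_1$ or $\Phi_2$ accordingly; its domain is exactly the union of the (shifted) domains, which is in \cl{P}, so it is a \cl{P}-slice for $B$. Crucially, prepending a fixed bit is a length-increasing bijection, so it changes the minimization and gap functions only by additive/multiplicative constants, hence preserves $\quasile_\rmnu$-equivalence classes; so the shifted $\eta_1, \eta_2$ represent the same lattice elements, and $\eta_1 \wedge \eta_2$ (resp. $\vee$) is unchanged up to equivalence. The join case is analogous: $\eta \deq \eta_1 \vee \eta_2$ has slices that are \emph{intersections} $(\eta_1)_{k_1} \cap (\eta_2)_{k_2}$ after the dual construction, and the intersection of a \cl{P}-slice for $B$ with a \cl{P}-set is again a \cl{P}-slice for $B$ (run the approximation, then reject off the \cl{P}-set) — so closure under $\vee$ needs only that $\calL^{\quasile_\rmnu}_\clXnu{P}$ is an up-set, which I would note separately: if $\eta \quasile_\rmnu \eta'$ and $(A,\eta) \in \clXnu{P}$, then each $\eta'_{k'}$ is, up to finite variation, contained in some $\eta_k$, hence is a \cl{P}-slice for $A$ (finite variations and \cl{P}-subsets of \cl{P}-slices are \cl{P}-slices), so $(A,\eta') \in \clXnu{P}$.

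For $\calL^{\quasile_\rmnu}_\clnu{FPT}$ the same constructions work, but I now have to track the exponent $c$. In the meet construction, if $A_1$ has exponent $c_1$ and $A_2$ has exponent $c_2$, the combined approximation for $B \deq 0A_1 \cup 1A_2$ runs in time $\bigO(n^{\max(c_1,c_2)})$, so $B$ witnesses membership in \clnu{FPT} with exponent $\max(c_1,c_2)$; the join and up-set arguments likewise only improve (or preserve) the exponent. The bottom and top elements from Lemma~\ref{lem:bounded} also lie in both classes: the least element $\binary^+ \times \Omega$ has every slice equal to $\binary^+ \in \cl{P}$ (exponent $0$), and the greatest element $\{(x,n) \st \length{x} \le n\}$ has every slice $\{x \st \length{x} \le n\}$ finite, hence in \cl{P} with a bounded exponent. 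The main obstacle I anticipate is purely the care needed to verify that the disjoint-flag encoding commutes with the specific meet/join parameter-space constructions of Lemma~\ref{lem:lattices} up to $\quasile_\rmnu$-equivalence — i.e. that moving from $(A_i,\eta_i)$ to the flagged copies does not change which lattice element $\eta_i$ names, and that the flagged copies' meet/join is the flagged copy of the meet/join; everything else is routine closure-of-slices reasoning. Once that coherence is in place, "closed under ambient meet and join, contains top and bottom" delivers both bounded lattices at once.
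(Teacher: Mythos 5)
Your overall architecture (contain the global bottom and top of Lemma~\ref{lem:bounded}, and be closed under the meet and join constructions of Lemma~\ref{lem:lattices}) is the paper's plan, and your treatment of the two bounds and of the exponent bookkeeping for \clnu{FPT} is fine. The genuine gap is in the closure step. You decide that you must manufacture a single witness set $B$ from $A_1$ and $A_2$, and to that end you pass to $B \deq 0A_1 \cup 1A_2$ and replace $\eta_1, \eta_2$ by their images under the flagging injections. First, these images are not parameterizations: every string of $\binary^+$ must receive a nonempty up-set of parameter values, so you have to decide what the relabeled parameterizations do on strings outside $0\binary^+$ (respectively $1\binary^+$), and that choice matters. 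Second, and decisively, the claim that prepending a flag bit ``preserves $\quasile_\rmnu$-equivalence classes'' is false: $\gap_{\eta_1, \tilde\eta_1}(n)$ compares $\mu_{\eta_1}(x)$ with $\mu_{\tilde\eta_1}(x)$ at the \emph{same} string $x$, not $\mu_{\eta_1}(x)$ with $\mu_{\tilde\eta_1}(0x)$; the order on parameterizations is not invariant under re-indexing strings. For instance, if the strings outside $0\binary^+$ are placed in every slice of $\tilde\eta_1$, then $\mu_{\tilde\eta_1}$ is bounded on those strings while $\mu_{\eta_1}$ need not be, so $\gap_{\eta_1, \tilde\eta_1}$ is infinite and $\tilde\eta_1$ is not equivalent to $\eta_1$. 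Hence the ``coherence'' you defer as routine bookkeeping is exactly where the argument breaks, and the flagged meet need not be a greatest lower bound of $\eta_1$ and $\eta_2$ in the first place.

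The missing idea is that no gluing of $A_1$ and $A_2$ is needed: membership of the constructed meet (or join) in $\calL^{\quasile_\rmnu}_\clXnu{P}$ or $\calL^{\quasile_\rmnu}_\clnu{FPT}$ only requires \emph{some} witnessing set, and a trivial one such as $\emptyset$ works. Every slice of a member of $\calL^{\quasile_\rmnu}_\clXnu{P}$ is the domain of a \cl{P}-approximation and hence lies in \cl{P} (in \clO{n^{c_i}}, with a single $c_i$ per parameterization, in the \clnu{FPT} case); the slices of the constructed meet are the unions $(\eta_1)_{k_1} \cup (\eta_2)_{k_2}$ (intersections for the join), which are again in \cl{P}, respectively \clO{n^{\max\{c_1, c_2\}}}, and each such slice is therefore a \cl{P}-slice for $\emptyset$. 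This is precisely the paper's one-line argument, and it makes the flag construction unnecessary. Relatedly, your auxiliary up-set claim is false as stated when $\eta'$ ranges over all parameterizations above a member of $\calL^{\quasile_\rmnu}_\clXnu{P}$ (such an $\eta'$ may have a slice outside \cl{P}, e.g.\ an added slice equal to an undecidable set), and ``contained in a \cl{P}-slice, hence a \cl{P}-slice'' requires the subset itself to be in \cl{P}; neither claim is needed once the empty-set witness is used.
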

\begin{proof}
  We shall first show that the partially ordered sets are lattices.
  In order to do so, we shall only prove that the partially ordered sets contain greatest lower bounds.
  The presence of least upper bounds can be shown in an entirely similar fashion and is therefore omitted.

  Let \cl{C} be either \clXnu{P} or \clnu{FPT} and let $\eta_1$ and $\eta_2$ be parameterizations in $\calL^{\quasile_\rmnu}_\cl{C}$.
  The parameterization $\eta$ as constructed in the proof of Lemma~\ref{lem:lattices} is a greatest lower bound in the lattice of all parameterizations.
  Since the union of a set in \clO{n^{c_1}} and a set in \clO{n^{c_2}} is in \clO{n^{\max\{c_1, c_2\}}}, the parameterization $\eta$ puts any trivial set, such as the empty set, in \cl{C} and thus $\eta$ is contained in $\calL^{\quasile_\rmnu}_\cl{C}$.
  Hence $\calL^{\quasile_\rmnu}_\cl{C}$ indeed contains all greatest lower bounds.

  Next, we shall show that the lattices contain a least and a greatest element.
  Sets in \cl{P} are put in both \clXnu{P} and \clnu{FPT} by the parameterization $\binary^+ \times \Omega$, where $\Omega$ is an arbitrary parameter space.
  In the proof of Lemma~\ref{lem:bounded} we have seen that this parameterization is below any other.
  Hence it is a least element in both $\calL^{\quasile_\rmnu}_\clXnu{P}$ and $\calL^{\quasile_\rmnu}_\clnu{FPT}$.
  The greatest element of the lattice of parameterizations, as constructed in the proof of Lemma~\ref{lem:bounded}, is present in both $\calL^{\quasile_\rmnu}_\clXnu{P}$ and $\calL^{\quasile_\rmnu}_\clnu{FPT}$ and thus a greatest element for both these lattices.
\end{proof}

Focussing on \clXnu{P} for a moment, we find that the set of parameterizations with which any particular set is put in \clXnu{P} sits nicely inside $\calL^{\quasile_\rmnu}_\clXnu{P}$.
\begin{lemma}
\label{lem:nuxpfilter}
  For every set $A$, $\calF^{\quasile_\rmnu}_{(A, \clXnu{P})}$ is a filter in $\calL^{\quasile_\rmnu}_\clXnu{P}$.
\end{lemma}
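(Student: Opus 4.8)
The plan is to verify the two defining properties of a filter — that $\calF^{\quasile_\rmnu}_{(A, \clXnu{P})}$ is an up-set in $\calL^{\quasile_\rmnu}_\clXnu{P}$, and that it is closed under the greatest lower bounds constructed in Lemma~\ref{lem:lattices}. Nonemptiness is immediate from the greatest element of Lemma~\ref{lem:bounded}, which lies in $\calF^{\quasile_\rmnu}_{(A, \clXnu{P})}$ for every $A$ (its slices are all of $\binary^+$, trivially \cl{P}-slices for $A$), and incidentally this also shows the up-set need not be avoided as empty. Throughout, recall that membership in \clXnu{P} only asks that each $\eta_k$ be a \cl{P}-slice for $A$, i.e.\ the domain of \emph{some} \clO{n^c}-approximation for $A$, with $c$ allowed to depend on $k$.

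First I would handle the up-set property. Suppose $(A, \eta) \in \clXnu{P}$ and $\eta \quasile_\rmnu \eta'$ with $\eta'$ ranging over $\calL^{\quasile_\rmnu}_\clXnu{P}$. I must show $(A, \eta') \in \clXnu{P}$, i.e.\ every $\eta'_{k'}$ is a \cl{P}-slice for $A$. Fix $k'$ and set $n \deq \length{k'}$; since $\eta \quasile_\rmnu \eta'$, the value $\gap_{\eta, \eta'}(n)$ is finite, so there is a parameter value $k$ of $\eta$ with $\length{k} \le \gap_{\eta, \eta'}(n)$ such that every $x$ with $\mu_{\eta'}(x) \le n$ — in particular every $x \in \eta'_{k'}$ — satisfies $(x, k) \in \eta$, i.e.\ $\eta'_{k'} \subseteq \eta_k$. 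Actually, to get a single containing $\eta_k$ I should take $k$ to be an upper bound (in the directed set $\Omega$) of the finitely many parameter values of length at most $\gap_{\eta, \eta'}(n)$ that are needed; directedness of $\Omega$ and the up-set property of parameterizations make $\eta_k$ contain all of them and hence contain $\eta'_{k'}$. Now $\eta_k$ is a \cl{P}-slice for $A$, so it is the domain of some \clO{n^c}-approximation $\Phi$ for $A$. The key sub-step: a subset of a \cl{P}-slice for $A$ that is itself in \cl{P} is again a \cl{P}-slice for $A$ — run $\Phi$ but first reject inputs outside $\eta'_{k'}$. Since $\eta' \in \calL^{\quasile_\rmnu}_\clXnu{P}$, each $\eta'_{k'}$ is a \cl{P}-slice for \emph{some} set, hence in \cl{P}; therefore $\eta'_{k'}$ is a \cl{P}-slice for $A$, as required.

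Next I would handle closure under greatest lower bounds. Given $(A, \eta_1), (A, \eta_2) \in \clXnu{P}$, let $\eta$ be the meet from the proof of Lemma~\ref{lem:lattices}, with parameter space $\Omega_1 \times \Omega_2$ and $\eta_{(k_1,k_2)} = (\eta_1)_{k_1} \cup (\eta_2)_{k_2}$. Each $(\eta_1)_{k_1}$ is an \clO{n^{c_1}}-slice and each $(\eta_2)_{k_2}$ an \clO{n^{c_2}}-slice for $A$, so I build a partial decision procedure for $A$ on their union by running both approximations and combining verdicts — if either outputs $1$ or $0$, echo it; these verdicts are consistent since both are partial decision procedures for $A$, and the running time is in \clO{n^{\max\{c_1,c_2\}}}. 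Hence $\eta_{(k_1,k_2)}$ is a \cl{P}-slice for $A$, so $(A, \eta) \in \clXnu{P}$. Combined with the fact that $\eta$ is the meet in the ambient lattice, this gives closure of $\calF^{\quasile_\rmnu}_{(A, \clXnu{P})}$ under meets, completing the verification that it is a filter in $\calL^{\quasile_\rmnu}_\clXnu{P}$.

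The main obstacle I anticipate is the up-set step, specifically being careful about the interplay between $\gap_{\eta,\eta'}$ being \emph{finite} (not producing a single dominating $k$ outright) and the need for a single parameter value $k$ with $\eta'_{k'} \subseteq \eta_k$: this is where directedness of the parameter space and the up-set condition on parameterizations must be invoked. The other point worth stating explicitly, rather than glossing, is why $\eta'_{k'} \in \cl{P}$ — it is precisely because $\eta'$ lies in $\calL^{\quasile_\rmnu}_\clXnu{P}$, so each of its slices is a \cl{P}-slice for something and hence (as noted after the definition of \cl{P}-approximation) in \cl{P}; without this the "subset in \cl{P} of a \cl{P}-slice is a \cl{P}-slice" argument would not apply.
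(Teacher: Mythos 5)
Your proposal follows essentially the same route as the paper: nonemptiness via a trivial parameterization, the up-set property via the two observations that every slice of a parameterization in $\calL^{\quasile_\rmnu}_\clXnu{P}$ is in \cl{P} and that a subset in \cl{P} of a \cl{P}-slice for $A$ is again a \cl{P}-slice for $A$ (you additionally spell out the directedness step that produces a single containing slice, which the paper glosses over), and closure under greatest lower bounds by combining two \cl{P}-approximations into one whose domain is the union of their domains. The one slip is in your nonemptiness parenthetical: the greatest element of Lemma~\ref{lem:bounded} is $\{(x, n) \st \length{x} \le n\}$, whose slices are the \emph{finite} sets $\{x \st \length{x} \le n\}$, not all of $\binary^+$; the parameterization with slices equal to $\binary^+$ is the \emph{least} element $\binary^+ \times \Omega$, and that lies in $\calF^{\quasile_\rmnu}_{(A, \clXnu{P})}$ only when $A$ is in \cl{P}. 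The conclusion you want still holds, for the reason the paper gives: any parameterization all of whose slices are finite (in particular the greatest element) puts every $A$ in \clXnu{P}, since finite sets are \cl{P}-slices for any set.
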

\begin{proof}
  Let $A$ be an arbitrary set.
  We need to prove that $\calF^{\quasile_\rmnu}_{(A, \clXnu{P})}$ is a nonempty up-set of $\calL^{\quasile_\rmnu}_\clXnu{P}$ that includes all greatest lower bounds of its members.

  As $\calF^{\quasile_\rmnu}_{(A, \clXnu{P})}$ includes all parameterizations $\eta$ for which, for every parameter value $k$, the set $\eta_k$ is finite, it is nonempty.
  To see that it is an up-set, first observe that for every parameterization $\eta$ in $\calL^{\quasile_\rmnu}_\clXnu{P}$ and every parameter value $k$ in the corresponding parameter space, the set $\eta_k$ is a \cl{P}-slice of some set and hence $\eta_k$ is in \cl{P}.
  Next, note that whenever there is a parameterization $\eta'$ in $\calF^{\quasile_\rmnu}_{(A, \clXnu{P})}$ that is below a parameterization $\eta$, then for every parameter value $k$ in the parameter space of $\eta$ there is a parameter value $k'$ in the parameter space of $\eta'$ such that we have $\eta_k \subseteq \eta'_{k'}$.
  Because every subset of a \cl{P}-slice that is in \cl{P} is itself a \cl{P}-slice, $\eta_k$ is a \cl{P}-slice for $A$.
  Hence, as the parameter values were chosen arbitrarily, we may conclude that $\eta$ is in $\calF^{\quasile_\rmnu}_{(A, \clXnu{P})}$ and thus that $\calF^{\quasile_\rmnu}_{(A, \clXnu{P})}$ is an up-set of $\calL^{\quasile_\rmnu}_\clXnu{P}$.

  Let $\eta_1$ and $\eta_2$ be two parameterizations in $\calF^{\quasile_\rmnu}_{(A, \clXnu{P})}$.
  To prove that $\calF^{\quasile_\rmnu}_{(A, \clXnu{P})}$ includes all greatest lower bounds of its members, it suffices to show that the greatest lower bound $\eta$ as constructed in the proof of Lemma~\ref{lem:lattices} is in $\calF^{\quasile_\rmnu}_{(A, \clXnu{P})}$.
  That this is the case readily follows from the ability to combine any two \cl{P}-approximations for $A$ into a \cl{P}-approximation for $A$ of which the domain is the union of the domains of its constituents.
\end{proof}

The orders on parameterizations can be thought of as an inverse ranking of how powerful parameterizations are.
When a parameterization is below another, it signifies that the convergence behavior of this parameterization is an improvement over that of the other.
This improvement is of a much stronger kind than the improvements made in parameterized algorithms races, where improvements are sought within a parameterization \citep{komusiewicz2012new,fellows2013towards}.
In this regard, the best parameterizations in a filter of parameterizations such as that of the previous lemma are those that are below most others.
Hence a set $A$ admits an optimal parameterization with respect to \clXnu{P} if the filter $\calF^{\quasile_\rmnu}_{(A, \clXnu{P})}$ is principal.
\begin{theorem}
\label{thm:nuxpprincipal}
  For any set $A$, the filter $\calF^{\quasile_\rmnu}_{(A, \clXnu{P})}$ is principal.
\end{theorem}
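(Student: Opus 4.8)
The plan is to exhibit, for an arbitrary set $A$, a single parameterization that is a least element of the filter $\calF^{\quasile_\rmnu}_{(A, \clXnu{P})}$. The natural candidate is the parameterization that, at parameter value $n$, decides exactly those elements that can be decided by the $n$-th $\clO{n^c}$-approximation for $A$ (for an appropriate $c$ read off from $n$ as well), restricted to strings of length at most $n$. More precisely, fixing an effective enumeration $\Phi_1, \Phi_2, \ldots$ of polynomial-time Turing machines equipped with explicit clocks $n^{c_i}$, I would set $\Omega \deq \bbN$ with its usual order and define
\begin{equation*}
  \eta^* \deq \{(x, n) \st \length{x} \le n \reland \Phi_n \text{ is a partial decision procedure for } A \text{ and } \Phi_n(x) \in \{0,1\}\}.
\end{equation*}
Strictly speaking this need not be an up-set in the second coordinate, so I would instead take the up-closure, i.e. $(x,n)\in\eta^*$ iff there is $m \le n$ with $\length{x}\le m$ and $\Phi_m$ a partial decision procedure for $A$ deciding $x$; nonemptiness of each slice-fibre is then automatic once we observe that the trivial machine rejecting everything not in some chosen finite set is among the $\Phi_m$. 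The point is that $\eta^*_n$ is always a subset of some honest $\cl{P}$-slice for $A$ and lies in $\cl{P}$ (bounded-length membership plus a clocked simulation), hence is itself a $\cl{P}$-slice for $A$; thus $(A,\eta^*)\in\clXnu{P}$ and $\eta^*\in\calF^{\quasile_\rmnu}_{(A,\clXnu{P})}$.

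Next I would show $\eta^* \quasile_\rmnu \zeta$ for every $\zeta \in \calF^{\quasile_\rmnu}_{(A,\clXnu{P})}$. Fix such a $\zeta \subseteq \binary^+\times\Omega'$ and an $n\in\bbN$; I must bound $\gap_{\eta^*,\zeta}(n)$, i.e. find a uniform bound on $\mu_{\eta^*}(x)$ over all $x$ with $\mu_\zeta(x)\le n$. Such $x$ all lie in $\bigcup\{\zeta_k \st \length{k}\le n\}$, and by directedness of $\Omega'$ together with Lemma~\ref{lem:imix}-style reasoning (or simply closing the finitely many parameter values of length $\le n$ under an upper bound) this union is contained in a single $\cl{P}$-slice $S$ for $A$, witnessed by some polynomial-time partial decision procedure $\Psi$ for $A$ with $\dom(\Psi)=S$. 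Now $\Psi$ appears in the enumeration as some $\Phi_m$ with clock exponent $c_m$. The nonuniformity is exactly what lets me proceed: the index $m$ depends on $n$, but for the nonuniform order $\quasile_\rmnu$ I only need $\gap_{\eta^*,\zeta}(n)<\infty$ for each fixed $n$, not a computable bound in $n$. For that fixed $n$, every $x$ with $\mu_\zeta(x)\le n$ satisfies $x\in\dom(\Phi_m)$, so $(x,\max\{m,\length{x}\})\in\eta^*$, giving $\mu_{\eta^*}(x)\le \length{\max\{m,\length{x}\}}$. The only unbounded quantity left is $\length{x}$, but $x$ ranges over a single $\cl{P}$-slice of $A$ intersected with — no: here I need $\length{x}$ itself bounded, which it is not. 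I would repair this by enlarging $\Omega$ so that $\eta^*$ is indexed by pairs $(m, \ell)$ (machine index and length bound) ordered coordinatewise, exactly as in the greatest-lower-bound construction of Lemma~\ref{lem:lattices}; then $\mu_{\eta^*}(x)$ depends on $x$ only through $\length{x}$, and since the slice $S\supseteq\{x\st\mu_\zeta(x)\le n\}$ is in $\cl{P}$ while... — the cleanest fix is to note that the set $\{x\st\mu_\zeta(x)\le n\}$ need not have bounded length, so instead of capping lengths in $\eta^*$ I let the length component be part of the parameter and observe that $\mu_{\eta^*}(x)\le \bigO(\log\max\{m,\length{x}\})$; then $\gap_{\eta^*,\zeta}(n)=\sup\{\mu_{\eta^*}(x)\st \mu_\zeta(x)\le n\}$ is finite iff the lengths of such $x$ are bounded, which they need not be — so I would drop the length cap from the definition of $\eta^*$ altogether and instead define $\eta^*_n$ as the domain of the $n$-th machine outright (which is already in $\cl{P}$), with the up-closure as above; then $\mu_{\eta^*}(x)\le\length{m}$ where $\Phi_m=\Psi$, a constant for fixed $n$, and $\gap_{\eta^*,\zeta}(n)\le\length{m}<\infty$.

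The main obstacle, then, is getting the bookkeeping of the candidate parameterization right so that simultaneously (i) each fibre $\eta^*_n$ is a genuine $\cl{P}$-slice for $A$ (which forces us to clock the simulation and to only trust machines we can recognize as partial decision procedures for $A$ — but note we never have to decide this property, we merely include the machine and rely on the fact that if $\Phi_n$ is not a partial decision procedure for $A$ we may replace its fibre by $\emptyset$, which keeps it a $\cl{P}$-slice), (ii) the second-coordinate fibres $\{k \st (x,k)\in\eta^*\}$ are nonempty up-sets, handled by up-closure and by the presence of a trivial machine, and (iii) for every competitor $\zeta$ the union of its finitely many length-$\le n$ slices is contained in one $\cl{P}$-slice for $A$, which is where combining finitely many $\cl{P}$-approximations into one — the same device used in Lemma~\ref{lem:nuxpfilter} — does the work. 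Once these three points are in place the inequality $\gap_{\eta^*,\zeta}(n)<\infty$ for every $n$ is immediate, so $\eta^*\quasile_\rmnu\zeta$, and since $\eta^*$ itself lies in the filter we conclude the filter is principal. I would remark that this argument is genuinely nonuniform: the index $m=m(n)$ of the combined machine is not computable from $n$ in general, which is precisely why the analogous statement will fail — or require extra hypotheses — once we pass to the uniform order $\quasile$ in the next section.
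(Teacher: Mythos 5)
Your final construction coincides with the paper's own proof: the paper enumerates the \cl{P}-slices $S_1, S_2, \ldots$ for $A$ and sets $\eta_k \deq \bigcup_{i \le k} S_i$ over the parameter space $\bbN$, which is exactly your up-closed enumeration of domains of clocked machines, and your least-element argument (the finitely many competitor slices with parameter encodings of length at most $n$ combine into a single \cl{P}-slice occurring in the enumeration, so the gap is finite) is the intended one. The only slip is in securing nonempty fibres $\{k \st (x,k) \in \eta^*\}$: for each $x$ use the table-lookup machine with domain $\{x\}$ (outputting $A(x)$ there and nothing elsewhere), since a machine rejecting everything outside a chosen finite set is in general not a partial decision procedure for $A$.
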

\begin{proof}
  Let $S_1, S_2, \ldots$ be an enumeration of the \cl{P}-slices for $A$.
  Consider the parameterization $\eta \subseteq \binary^+ \times \bbN$ given by
  \begin{equation*}
    \eta_k \deq \bigcup_{i \le k} S_i.
  \end{equation*}
  By definition $A$ is in \clXnu{P} with $\eta$ and by construction $\eta$ is a least element in $\calF^{\quasile_\rmnu}_{(A, \clXnu{P})}$.
\end{proof}

We shall call a least element in the filter corresponding to some set a \emph{principal parameterization} for that set.
While Theorem~\ref{thm:nuxpprincipal} shows that all sets have principal parameterizations with respect to \clXnu{P}, this is not a given for arbitrary parameterized complexity classes.
When they exist, principal parameterizations provide insight into some of the computational complexity of a set.
Indeed, as a consequence of Theorem~\ref{thm:nuxpprincipal} there is a one-to-one correspondence between the imix property of a principal parameterization with respect to \clXnu{P} and the levelability of a set.

\begin{corollary}
\label{cor:nuxpimix}
  A set is \levelable{\cl{P}} (\immune{\cl{P}}) if and only if a principal parameterization for the induced nonuniform filter with respect to \clXnu{P} has (does not have) imix.
\end{corollary}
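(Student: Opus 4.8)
The plan is to unwind the definitions on both sides of the biconditional and connect them through the principal parameterization $\eta$ constructed in the proof of Theorem~\ref{thm:nuxpprincipal}, whose slices are $\eta_k = \bigcup_{i \le k} S_i$ for an enumeration $S_1, S_2, \ldots$ of the \cl{P}-slices for $A$. Since a set is \levelable{\cl{P}} exactly when it is not \immune{\cl{P}} (by Theorem~\ref{thm:maximal} and the discussion following it), and imix of a parameterization is the negation of having a maximal slice in the appropriate sense, it suffices to prove one direction carefully; the other follows by contraposition. Concretely, I would prove: $A$ has no maximal \cl{P}-slice if and only if $\eta$ has imix.

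First I would unpack imix for this particular $\eta$. By definition, $\eta$ has imix when for every $k$ there is a $k'$ with $\eta_{k'} \setminus \eta_k$ infinite; since the parameter space here is $\bbN$ and the $\eta_k$ are nested and increasing, this is equivalent to: for every $k$ there is $k' > k$ with $\bigcup_{i \le k'} S_i \setminus \bigcup_{i \le k} S_i$ infinite. Now suppose $A$ is \levelable{\cl{P}}, i.e.\ has no maximal \cl{P}-slice. Given any $k$, the finite union $\bigcup_{i \le k} S_i$ is itself a \cl{P}-slice for $A$ (one combines finitely many \cl{P}-approximations into one whose domain is the union, exactly as invoked in the proof of Lemma~\ref{lem:nuxpfilter}). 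Since it is not maximal, there is some \cl{P}-slice $S_j$ with $S_j \setminus \bigcup_{i \le k} S_i$ infinite --- otherwise every slice would add only finitely many elements, and by the argument in the proof of Theorem~\ref{thm:maximal} (diagonalizing over the enumeration) the finite union would be maximal up to finite variation, contradicting \levelable{\cl{P}}. Taking $k' = \max\{k, j\}$ gives $\eta_{k'} \setminus \eta_k \supseteq S_j \setminus \bigcup_{i \le k} S_i$ infinite, so $\eta$ has imix.

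Conversely, suppose $\eta$ has imix. I want to show $A$ has no maximal \cl{P}-slice. Let $S$ be an arbitrary \cl{P}-slice for $A$; it appears in the enumeration, say $S = S_m$, so $S \subseteq \eta_m$. By imix there is $k'$ with $\eta_{k'} \setminus \eta_m$ infinite, hence $\eta_{k'} \setminus S$ is infinite. But $\eta_{k'} = \bigcup_{i \le k'} S_i$ is a \cl{P}-slice for $A$ (again by combining finitely many \cl{P}-approximations), and it strictly extends $S$ by infinitely many elements, so $S$ is not maximal even up to finite variation. Since $S$ was arbitrary, $A$ has no maximal \cl{P}-slice, i.e.\ $A$ is \levelable{\cl{P}}. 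The parenthetical statement about \immune{\cl{P}} and not having imix is then immediate by taking contrapositives on both sides, using that \immune{\cl{P}} is the exact negation of \levelable{\cl{P}}.

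The main obstacle, and the one place requiring care rather than routine bookkeeping, is the step asserting that ``not having imix'' for $\eta$ forces the existence of a maximal \cl{P}-slice --- equivalently, that if every \cl{P}-slice extends $\bigcup_{i\le k} S_i$ by only finitely many elements for some $k$, then $A$ is \immune{\cl{P}}. This is essentially the content of the diagonalization in the $\Longleftarrow$ direction of Theorem~\ref{thm:maximal}, and the cleanest route is to reduce to it directly: show that $\eta$ failing imix yields a finite union $\bigcup_{i \le k_0} S_i$ that no \cl{P}-slice can infinitely extend, hence (by that theorem's proof) this union is a maximal \cl{P}-slice. Everything else --- the closure of \cl{P}-slices under finite union, the nestedness of the $\eta_k$, the translation between ``maximal slice'' and ``not \levelable{\cl{P}}'' --- is already established in the excerpt and can be cited rather than reproved.
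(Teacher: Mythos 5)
Your proposal is correct and follows essentially the route the paper intends: the corollary is left as an immediate consequence of Theorem~\ref{thm:nuxpprincipal}, and your argument is exactly the natural unwinding of that construction (finite unions of \cl{P}-slices are \cl{P}-slices, so the principal parameterization $\eta_k = \bigcup_{i\le k} S_i$ has imix precisely when no \cl{P}-slice is maximal, with the \immune{\cl{P}} case by negating both sides via Theorem~\ref{thm:maximal}). The only point worth making explicit is that imix is invariant under $\quasile_\rmnu$-equivalence (Lemma~\ref{lem:imix}), so verifying it for the particular principal parameterization of Theorem~\ref{thm:nuxpprincipal} settles the claim for \emph{every} principal parameterization of the filter, as the statement requires.
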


Note that the filter induced by a \cl{P}-bi-immune set consists of a single equivalence class of parameterizations, namely that of parameterizations $\eta$ where for every parameter value $k$ the set $\eta_k$ is finite.

In case the existence of a principal parameterization is not a given, a statement like Corollary~\ref{cor:nuxpimix} is still valid when there is a parameterization such that it and all parameterizations below it have imix.
This motivates the following definition.

\begin{definition}
  A set $A$ is \emph{\levelable{\clXnu{P}}} (\emph{\immune{\clXnu{P}}}) if there is a parameterization $\eta$ with which $A$ is in \clXnu{P} such that every parameterization $\eta' \quasile_\rmnu \eta$ with which $A$ is in \clXnu{P} has (does not have) imix.
\end{definition}

In these definitions, \cl{P} can be replaced by other complexity classes, in particular, for any $c$, by \clO{n^c}.
However, by the nature of the definition of \clnu{FPT} we should not blindly replace \clXnu{P} by \clnu{FPT} in these definitions, but instead include a dependence on the exponent $c$.

\begin{definition}
  A set is \emph{\levelable{\clnu{FPT}}} (\emph{\immune{\clnu{FPT}}}) if there is a constant $c_0$ such that for all $c \ge c_0$ the set is \levelable{\clXnu{\clO{n^c}}} (\immune{\clXnu{\clO{n^c}}}).
\end{definition}

It is an open problem whether this definition is different from the alternative obtained by replacing \clXnu{P} by \clnu{FPT} in our initial definition of parameterized levelability.
If the two notions are the same, then for every \levelable{\clnu{FPT}} set $A$ there is a constant $c$ such that every \cl{P}-slice for $A$ can be infinitely extended to an \clO{n^c}-slice for $A$.
We do not expect every set we wish to call \levelable{\clnu{FPT}} to show this behavior and shall go with the tailored definition of levelability for \clnu{FPT}.
This definition works the way we want it to and we feel that keeping the exponent fixed in the analysis reflects the spirit of fixed-parameter tractability.

Our definitions are so that from Theorem~\ref{thm:nuxpprincipal} and Corollary~\ref{cor:nuxpimix} we get an identification of levelability with respect to \cl{P} and levelability with respect to \clXnu{P}.
\begin{corollary}
\label{cor:xplevelable}
  A set is \levelable{\cl{P}} (\immune{\cl{P}}) if and only if it is \levelable{\clXnu{P}} (\immune{\clXnu{P}}).
\end{corollary}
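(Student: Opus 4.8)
The plan is to reduce everything to the principal parameterization supplied by Theorem~\ref{thm:nuxpprincipal} and then quote Corollary~\ref{cor:nuxpimix}. First I would fix an arbitrary set $A$ and, invoking Theorem~\ref{thm:nuxpprincipal}, obtain a principal parameterization $\eta_0$ for $A$ with respect to \clXnu{P}; by definition $\eta_0$ is a least element of the filter $\calF^{\quasile_\rmnu}_{(A, \clXnu{P})}$. The key observation to record up front is that any parameterization $\eta'$ with which $A$ is in \clXnu{P} and which additionally satisfies $\eta' \quasile_\rmnu \eta_0$ must lie in the same $\quasile_\rmnu$-equivalence class as $\eta_0$: it is a member of the filter, which is an up-set by Lemma~\ref{lem:nuxpfilter}, so $\eta_0 \quasile_\rmnu \eta'$ holds as well. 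Combined with Lemma~\ref{lem:imix} this yields that such an $\eta'$ has imix if and only if $\eta_0$ does.

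I would then prove the \levelable{\cl{P}} $\Leftrightarrow$ \levelable{\clXnu{P}} equivalence as a two-way implication. For the forward direction, assume $A$ is \levelable{\cl{P}}; Corollary~\ref{cor:nuxpimix} tells us $\eta_0$ has imix, and the equivalence-class observation above shows that every $\eta' \quasile_\rmnu \eta_0$ with which $A$ is in \clXnu{P} also has imix, so $\eta_0$ itself witnesses \levelable{\clXnu{P}}. For the converse, if some $\eta$ witnesses \levelable{\clXnu{P}}, then since $\eta_0$ is the least element of the filter we have $\eta_0 \quasile_\rmnu \eta$ and $\eta_0$ is in the filter, so the witnessing condition forces $\eta_0$ to have imix; Corollary~\ref{cor:nuxpimix} then gives that $A$ is \levelable{\cl{P}}.

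The \immune{\cl{P}} $\Leftrightarrow$ \immune{\clXnu{P}} equivalence is proved by the identical argument with ``has imix'' replaced throughout by ``does not have imix''; the equivalence-class observation (again via Lemma~\ref{lem:imix}) is exactly what guarantees that any parameterization equivalent to $\eta_0$ fails to have imix precisely when $\eta_0$ does, so no separate reasoning is needed. Strictly speaking only one of the two equivalences needs a full proof, since by Theorem~\ref{thm:maximal} being \immune{\cl{P}} is the negation of being \levelable{\cl{P}}, and likewise the two \clXnu{P}-notions are negations of each other; but writing out both keeps the corollary self-contained.

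I do not expect any genuine obstacle here: once Theorem~\ref{thm:nuxpprincipal} is available the corollary is essentially a reformulation of Corollary~\ref{cor:nuxpimix}. The only point demanding care is the bookkeeping with $\quasile_\rmnu$-equivalence classes — making sure that ``$\eta' \quasile_\rmnu \eta_0$ and $A \in \clXnu{P}$ via $\eta'$'' really pins $\eta'$ into $\eta_0$'s class, which is where the up-set clause of Lemma~\ref{lem:nuxpfilter} and the least-element property of the principal parameterization have to be used together.
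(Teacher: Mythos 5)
Your proposal is correct and follows essentially the same route the paper intends: the paper derives the corollary directly from Theorem~\ref{thm:nuxpprincipal} and Corollary~\ref{cor:nuxpimix}, and your write-up simply makes explicit the bookkeeping that equivalent parameterizations share imix status (Lemma~\ref{lem:imix}) and that any member of the filter below the principal parameterization is forced into its equivalence class. One cosmetic remark: the appeal to the up-set clause of Lemma~\ref{lem:nuxpfilter} is superfluous there, since $\eta'$ is in the filter by assumption and the leastness of $\eta_0$ alone already gives $\eta_0 \quasile_\rmnu \eta'$.
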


The case for levelability with respect to \clnu{FPT} is more complicated.
As with \clXnu{P}, we find that the set of parameterizations with which any particular set is put in \clnu{FPT} is a filter.
\begin{lemma}
\label{lem:nufptfilter}
  For every set $A$, $\calF^{\quasile_\rmnu}_{(A, \clnu{FPT})}$ is a filter in $\calL^{\quasile_\rmnu}_\clnu{FPT}$.
\end{lemma}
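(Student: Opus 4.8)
The plan is to mimic the proof of Lemma~\ref{lem:nuxpfilter}, showing that $\calF^{\quasile_\rmnu}_{(A, \clnu{FPT})}$ is a nonempty up-set of $\calL^{\quasile_\rmnu}_\clnu{FPT}$ that contains the greatest lower bounds of its members produced in the proof of Lemma~\ref{lem:lattices}. It is automatically a subset of $\calL^{\quasile_\rmnu}_\clnu{FPT}$, since $(A, \eta) \in \clnu{FPT}$ witnesses $\eta \in \calL^{\quasile_\rmnu}_\clnu{FPT}$. Nonemptiness is immediate: any parameterization all of whose slices $\eta_k$ are finite puts $A$ in \clnu{FPT} with the constant $c = 1$, since a finite set is an \clO{n}-slice for every set. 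The one feature that separates this argument from the \clXnu{P} case is that membership in \clnu{FPT} requires a \emph{single} exponent valid across all parameter values, so at each step we must track that exponent and verify it stays bounded.

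For the up-set property, suppose $\eta' \in \calF^{\quasile_\rmnu}_{(A, \clnu{FPT})}$ with witnessing exponent $c'$, and let $\eta \in \calL^{\quasile_\rmnu}_\clnu{FPT}$ satisfy $\eta' \quasile_\rmnu \eta$. Fix a set $B$ and an exponent $c$ witnessing $\eta \in \calL^{\quasile_\rmnu}_\clnu{FPT}$, so every $\eta_k$ is an \clO{n^c}-slice for $B$ and hence lies in \clO{n^c}. Exactly as in the proof of Lemma~\ref{lem:nuxpfilter}, for each parameter value $k$ of $\eta$ there is (using finiteness of the gap $\gap_{\eta', \eta}(\length{k})$ together with directedness of $\eta'$'s parameter space) a parameter value $k'$ of $\eta'$ with $\eta_k \subseteq \eta'_{k'}$. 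Now $\eta'_{k'}$ is an \clO{n^{c'}}-slice for $A$ and $\eta_k \subseteq \eta'_{k'}$ is itself in \clO{n^c}; guarding the \clO{n^{c'}}-approximation for $A$ with domain $\eta'_{k'}$ by a decision procedure for $\eta_k$ (and using that the domain of an \clO{n^c}-approximation lies in \clO{n^c}) yields an \clO{n^{\max\{c, c'\}}}-approximation for $A$ with domain exactly $\eta_k$. Since the exponent $\max\{c, c'\}$ does not depend on $k$, this shows $(A, \eta) \in \clnu{FPT}$, so $\eta \in \calF^{\quasile_\rmnu}_{(A, \clnu{FPT})}$.

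For closure under greatest lower bounds it suffices, since greatest lower bounds in a lattice are unique up to equivalence, to check that the particular $\eta$ constructed in the proof of Lemma~\ref{lem:lattices} from two members $\eta_1, \eta_2 \in \calF^{\quasile_\rmnu}_{(A, \clnu{FPT})}$ (with witnessing exponents $c_1, c_2$) lies in $\calF^{\quasile_\rmnu}_{(A, \clnu{FPT})}$. For that $\eta$ one has $\eta_{(k_1, k_2)} = (\eta_1)_{k_1} \cup (\eta_2)_{k_2}$, and running the \clO{n^{c_1}}-approximation for $A$ with domain $(\eta_1)_{k_1}$ and, whenever it fails to decide, falling through to the \clO{n^{c_2}}-approximation for $A$ with domain $(\eta_2)_{k_2}$, gives an \clO{n^{\max\{c_1, c_2\}}}-approximation for $A$ whose domain is that union. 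Again the exponent is independent of $(k_1, k_2)$, so $(A, \eta) \in \clnu{FPT}$ and we are done.

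I expect no serious obstacle here; the real work is the exponent bookkeeping just described, and the order-theoretic ingredients (only finitely many parameter values of bounded length, closure under directed joins) as well as the observation that the domain of an \clO{n^c}-approximation lies in \clO{n^c} are all reused from the earlier lemmas.
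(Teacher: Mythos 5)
Your proposal is correct and follows essentially the same route as the paper: the paper's proof simply says to repeat the argument of Lemma~\ref{lem:nuxpfilter} while keeping track of the exponents, noting that finitely many polynomial-time approximations can be combined within exponent equal to the maximum of the individual exponents, which is exactly the bookkeeping you carry out explicitly in the up-set and greatest-lower-bound steps. Your version just spells out the details (finite slices for nonemptiness, the guarded approximation with exponent $\max\{c,c'\}$, and the union construction with exponent $\max\{c_1,c_2\}$) that the paper leaves implicit.
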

\begin{proof}
  This lemma can be proven similarly to Lemma~\ref{lem:nuxpfilter}.
  For the current lemma, however, we need to keep track of the exponent in the running time of the \cl{P}-approximations involved in the proof.
  In general, when multiple \cl{P}-approximations are at play, it is possible to compute all of them within a polynomial running time of which the exponent is the maximum of the exponents of the individual polynomial running times.
\end{proof}

The existence of optimal parameterizations is of interest with respect to \clnu{FPT} too.
Contrary to the case for \clXnu{P}, not every filter of parameterizations with which a set is put in \clnu{FPT} is principal.
For some sets, however, principality of the corresponding filter is easily shown.
\begin{theorem}
\label{thm:nufptprincipal}
  For any set $A$ that is \immune{\cl{P}}, the filter $\calF^{\quasile_\rmnu}_{(A, \clnu{FPT})}$ is principal.
\end{theorem}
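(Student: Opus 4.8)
The plan is to exploit the fact that an \immune{\cl{P}} set $A$ has a maximal \cl{P}-core $C$, whose complement $S = \binary^+ \setminus C$ is a \cl{P}-slice for $A$ by Theorem~\ref{thm:maximal} (and is in \cl{P} by the following corollary). Intuitively, once we have decided $A$ correctly on the easy part $S$, everything hard is confined to $C$, and since $C$ is a \cl{P}-core, every \cl{P}-slice meets $C$ only finitely. So a parameterization that, at parameter value $k$, decides $A$ on $S$ together with the first $k$ elements of $C$ (in some fixed enumeration) should be a least element of $\calF^{\quasile_\rmnu}_{(A, \clnu{FPT})}$. First I would fix an enumeration $c_1, c_2, \ldots$ of the elements of $C$ and define $\eta \subseteq \binary^+ \times \bbN$ by $\eta_k \deq S \cup \{c_1, \ldots, c_k\}$. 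Since $S \in \cl{P}$ and the finite set $\{c_1, \ldots, c_k\}$ can be decided by table lookup, each $\eta_k$ is decidable within a running time whose exponent does not depend on $k$ (the table lookup for a finite set adds only a constant-degree overhead); hence $\eta$ witnesses $(A, \eta) \in \clnu{FPT}$ with some fixed exponent.

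Next I would show $\eta$ is $\quasile_\rmnu$-below every parameterization $\eta'$ with which $A$ is in \clnu{FPT}. Fix such an $\eta'$ and a parameter value $k$; I must bound $\mu_\eta(x)$ over all $x$ with $\mu_{\eta'}(x) \le n$. There is a parameter value $k'$ of $\eta'$ with $\eta'_{k'} \supseteq \{x \st \mu_{\eta'}(x) \le n\}$, and $\eta'_{k'}$ is a \cl{P}-slice for $A$ (it is decided by an \clO{n^c}-approximation, hence in particular by a \cl{P}-approximation). Since $C$ is a \cl{P}-core, $\eta'_{k'} \cap C$ is finite, say contained in $\{c_1, \ldots, c_m\}$ for some $m$; so $\eta'_{k'} \subseteq S \cup \{c_1, \ldots, c_m\} = \eta_m$. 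Therefore every $x$ with $\mu_{\eta'}(x) \le n$ lies in $\eta_m$, giving $\mu_\eta(x) \le \length{m} < \infty$, and so $\gap_{\eta, \eta'}(n) < \infty$ for all $n$, i.e. $\eta \quasile_\rmnu \eta'$. Combined with Lemma~\ref{lem:nufptfilter}, which tells us $\calF^{\quasile_\rmnu}_{(A, \clnu{FPT})}$ is a filter, the presence of this least element $\eta$ makes the filter principal.

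I expect the main obstacle to be a clean handling of the uniformity/exponent bookkeeping in the first step: I must be sure that decorating a fixed \clO{n^c}-approximation for $S$ with finite table lookup for $\{c_1, \ldots, c_k\}$ genuinely keeps the exponent fixed across all $k$ — the table grows with $k$, so I need the (standard but worth stating) observation that a lookup into a table of $k$ strings, each of length at most the input length $n$, costs only $\bigO(n^2)$ or so, independent of $k$, so the overall exponent is $\max\{c, 2\}$ and does not depend on $k$. A secondary subtlety is that in the second step I am quietly using that $S$ is itself a \cl{P}-slice for $A$, which is exactly the content of Theorem~\ref{thm:maximal}; as long as that is invoked explicitly the argument goes through. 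There is no issue with $C$ being empty or finite (then $A \in \cl{P}$ and $\eta = \binary^+ \times \bbN$ works and is already known to be a least element).
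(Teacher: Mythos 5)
Your proposal is correct and follows essentially the same route as the paper: both hinge on the maximal \cl{P}-slice $S$ (equivalently, the maximal \cl{P}-core $C$) guaranteed by $A$ being \immune{\cl{P}}, and on the fact that any \clO{n^c}-slice of a parameterization in the filter can exceed $S$ only by finitely many elements. The paper simply takes any parameterization in $\calF^{\quasile_\rmnu}_{(A, \clnu{FPT})}$ having a slice equal to $S$ and notes it is equivalent to everything below it, whereas you build the explicit witness $\eta_k = S \cup \{c_1,\ldots,c_k\}$ and verify $\eta \quasile_\rmnu \eta'$ directly; this is a presentational difference, not a different argument.
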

\begin{proof}
  By definition of being \immune{\cl{P}}, $A$ has a maximal \cl{P}-slice $S$.
  For some constant $c$, this \cl{P}-slice $S$ is also an \clO{n^c}-slice.
  Any parameterization $\eta$ with which $A$ is in \clnu{FPT} and for which there is a parameter value $k$ such that $\eta_k$ equals $S$ is equivalent to any parameterization below it, hence such an $\eta$ is principal.
\end{proof}

Of course, a principal parameterization with respect to \clnu{FPT} for any \immune{\cl{P}} set does not have imix and it follows that such sets are also \immune{\clnu{FPT}}.
\begin{corollary}
\label{cor:nufptimmune}
  If a set is \immune{\cl{P}}, it is \immune{\clnu{FPT}}.
\end{corollary}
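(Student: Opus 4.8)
The plan is to unravel the tailored definition of \immune{\clnu{FPT}}: one has to produce a constant $c_0$ for which $A$ is \immune{\clXnu{\clO{n^c}}} for every $c \ge c_0$. Since the entire development of Sections~\ref{sec:preliminaries} and~\ref{sec:nonuniform} carries over verbatim when \clO{n^c} is put in place of \cl{P}, I may appeal to the \clO{n^c}-version of Corollary~\ref{cor:xplevelable}, which says that a set is \immune{\clXnu{\clO{n^c}}} exactly when it is \immune{\clO{n^c}}. It therefore suffices to show that $A$ is \immune{\clO{n^c}} for all sufficiently large $c$.

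First I would invoke Theorem~\ref{thm:maximal}: as $A$ is \immune{\cl{P}} it has a maximal \cl{P}-core, hence also a maximal (up to finite variations) \cl{P}-slice $S$. By the definition of a \cl{P}-slice there is a \cl{P}-approximation for $A$ with domain $S$, and any such procedure runs in time $\bigO(n^{c_0})$ for some constant $c_0$; thus $S$ is an \clO{n^{c_0}}-slice for $A$, and I fix this $c_0$. For every $c \ge c_0$ the set $S$ is still an \clO{n^c}-slice, and in fact a maximal one: an \clO{n^c}-slice with infinitely many elements outside $S$ would be a \cl{P}-slice whose union with $S$ is again a \cl{P}-slice (combine the two approximations) and extends $S$ by infinitely many elements, contradicting the maximality of $S$. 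By the \clO{n^c}-version of Theorem~\ref{thm:maximal}, a maximal \clO{n^c}-slice yields a maximal \clO{n^c}-core, so $A$ is \immune{\clO{n^c}}, and hence \immune{\clXnu{\clO{n^c}}}. As this holds for every $c \ge c_0$, the definition of \immune{\clnu{FPT}} is met with the constant $c_0$, completing the proof.

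The individual steps are routine once the targets are in view; the point I would flag as the main obstacle is the bookkeeping of exponents — verifying that maximality of the single slice $S$ survives restriction to the smaller family of \clO{n^c}-slices, and that one fixed $c_0$ (the running-time exponent of an approximation with domain $S$) works simultaneously for every larger $c$. An equivalent route follows the informal remark preceding the statement: by the \clnu{FPT}-version of Theorem~\ref{thm:nufptprincipal} the filter $\calF^{\quasile_\rmnu}_{(A, \clnu{FPT})}$ is principal, with a principal parameterization $\eta$ such that $\eta_k = S$ for some parameter value $k$; such an $\eta$ cannot have imix, since an $\eta_{k'}$ with $\eta_{k'} \backslash S$ infinite can be merged with an approximation for $S$ to yield a \cl{P}-slice extending $S$ by infinitely many elements, against maximality — and the same merging argument shows that every parameterization $\quasile_\rmnu$-below $\eta$ that still keeps $A$ in \clXnu{\clO{n^c}} (for $c \ge c_0$) likewise lacks imix, which is exactly \immune{\clXnu{\clO{n^c}}}.
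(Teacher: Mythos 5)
Your proposal is correct, but your primary route is organized differently from the paper's. The paper obtains this corollary directly from Theorem~\ref{thm:nufptprincipal}: for an \immune{\cl{P}} set the filter $\calF^{\quasile_\rmnu}_{(A, \clnu{FPT})}$ has a principal parameterization with a slice equal to the maximal \cl{P}-slice $S$, this parameterization has no imix, and immunity with respect to \clnu{FPT} follows --- which is exactly the ``equivalent route'' you sketch at the end, including the point the paper leaves implicit, namely that every parameterization $\quasile_\rmnu$-below it that keeps $A$ in \clXnu{\clO{n^c}} also lacks imix. Your main argument instead works exponent by exponent: you fix $c_0$ as the running-time exponent of an approximation with domain $S$, show $S$ remains a maximal \clO{n^c}-slice for every $c \ge c_0$ (via the union-of-approximations argument, which is also the load-bearing fact in the paper's route), conclude \immune{\clO{n^c}} via the \clO{n^c}-version of Theorem~\ref{thm:maximal}, and transfer to \immune{\clXnu{\clO{n^c}}} by the \clO{n^c}-variant of Corollary~\ref{cor:xplevelable}; in effect you are proving the relevant direction of Lemma~\ref{lem:nufptnc} (which the paper only states afterwards) together with the classical observation that \immune{\cl{P}} implies \immune{\clO{n^c}} for all sufficiently large $c$. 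What your decomposition buys is explicitness: it makes the origin of the constant $c_0$ visible and sidesteps having to reason about all parameterizations below a principal one; what the paper's route buys is brevity, since Theorem~\ref{thm:nufptprincipal} and the imix correspondence of Corollary~\ref{cor:nuxpimix} are already in place. Both arguments are sound and stay within results established before the corollary, so there is no circularity.
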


The proof of Theorem~\ref{thm:nufptprincipal} inspires alternative characterizations of the \immune{\clnu{FPT}} and \levelable{\clnu{FPT}} sets.
\begin{lemma}
\label{lem:nufptnc}
  A set is \levelable{\clnu{FPT}} (\immune{\clnu{FPT}}) if and only if there is a constant $c_0$ such that for all $c \ge c_0$ the set is \levelable{\clO{n^c}} (\immune{\clO{n^c}}).
\end{lemma}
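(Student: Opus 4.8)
The plan is to unfold the definition of \levelable{\clnu{FPT}} and reduce it, slice by slice, to a statement about \clO{n^c}-omni-levelability. By definition, a set $A$ is \levelable{\clnu{FPT}} exactly when there is a constant $c_0$ such that for every $c \ge c_0$ the set $A$ is \levelable{\clXnu{\clO{n^c}}}. So it suffices to prove, for a fixed exponent $c$, the equivalence
\begin{equation*}
  A \text{ is \levelable{\clXnu{\clO{n^c}}}} \iff A \text{ is \levelable{\clO{n^c}}},
\end{equation*}
together with the analogous statement for immunity; the lemma then follows by quantifying over $c \ge c_0$. This fixed-exponent equivalence is the $\clO{n^c}$ analogue of Corollary~\ref{cor:xplevelable}, which handled the case of \cl{P} (equivalently, $\bigcup_c \clO{n^c}$). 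First I would observe that all the machinery leading to Corollary~\ref{cor:xplevelable} --- namely Theorem~\ref{thm:nuxpprincipal} and Corollary~\ref{cor:nuxpimix} --- was explicitly noted (in the paragraph following the statement of \immune{\clXnu{P}}) to go through verbatim with \cl{P} replaced by \clO{n^c}. Concretely: enumerate the \clO{n^c}-slices $S_1, S_2, \ldots$ for $A$ and set $\eta_k \deq \bigcup_{i \le k} S_i$; since the union of finitely many sets in \clO{n^c} is again in \clO{n^c}, this $\eta$ witnesses $A \in \clXnu{\clO{n^c}}$ and is a least element of $\calF^{\quasile_\rmnu}_{(A, \clXnu{\clO{n^c}})}$, so that filter is principal. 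Then a principal parameterization has imix iff $A$ has no maximal \clO{n^c}-slice, i.e.\ iff $A$ is \levelable{\clO{n^c}}, using Theorem~\ref{thm:maximal} (which, as remarked at the end of Section~\ref{sec:preliminaries}, also holds for \clO{n^c}).

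Having established the fixed-$c$ equivalence, I would chain the quantifiers: $A$ is \levelable{\clnu{FPT}} iff there is $c_0$ with $A$ \levelable{\clXnu{\clO{n^c}}} for all $c \ge c_0$ iff (by the fixed-$c$ equivalence) there is $c_0$ with $A$ \levelable{\clO{n^c}} for all $c \ge c_0$, which is precisely the right-hand side of the lemma. The immunity case is entirely parallel, swapping ``has imix''/``no maximal slice'' for ``does not have imix''/``has a maximal slice''; note also that by Theorem~\ref{thm:maximal} being \levelable{\clO{n^c}} and \immune{\clO{n^c}} are complementary, so the two halves of the lemma are in fact logically interchangeable once one is proved.

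The only real subtlety --- and the step I would be most careful about --- is making sure the ``principal parameterization'' argument behind Corollary~\ref{cor:nuxpimix} genuinely transfers to a \emph{single} fixed exponent rather than silently reintroducing a union over exponents. The point to verify is that when we enumerate \emph{only} the \clO{n^c}-slices (not all \cl{P}-slices), the construction $\eta_k = \bigcup_{i\le k} S_i$ still lands in $\calF^{\quasile_\rmnu}_{(A, \clXnu{\clO{n^c}})}$, which it does because $\clO{n^c}$ is closed under finite unions with no blow-up in the exponent. Given that, Lemma~\ref{lem:imix}'s separation of imix from non-imix parameterizations guarantees that the imix status of the principal element is well-defined on equivalence classes, and the equivalence with \levelable{\clO{n^c}} goes through exactly as in the \cl{P} case. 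Everything else is routine bookkeeping of quantifiers.
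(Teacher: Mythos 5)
Your proposal is correct and takes essentially the same route as the paper, which proves the lemma by replacing \cl{P} with \clO{n^c} in Theorem~\ref{thm:nuxpprincipal} and then invoking the resulting \clO{n^c}-variants of Corollary~\ref{cor:xplevelable} before quantifying over $c \ge c_0$. Your additional check that the slicewise union construction stays within a fixed exponent (closure of \clO{n^c} under finite unions) is exactly the detail the paper leaves implicit, so nothing is missing or divergent.
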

\begin{proof}
  For any $c$, we can replace \cl{P} by \clO{n^c} in Theorem~\ref{thm:nuxpprincipal}.
  The current lemma then follows from the associated variants of Corollary~\ref{cor:xplevelable}.
\end{proof}

While we have seen that levelability with respect to \clXnu{P} is identical to levelability with respect to \cl{P}, the situation with respect to \clnu{FPT} is different.
We shall provide a constructive proof of the fact that the converse of Corollary~\ref{cor:nufptimmune} does not hold.
The set we are about to construct has the remarkable property that it is \immune{$\bigO(n^c)$} for infinitely many values of $c$, yet each of the bi-immune parts is still decidable in polynomial time.
Conceptually, we show that it is possible to diagonalize against polynomial time machines in polynomial time.
\begin{theorem}
\label{thm:nonnufptlevelable}
  There are \levelable{\cl{P}} sets that are not \levelable{\clnu{FPT}}.
\end{theorem}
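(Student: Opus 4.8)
The plan is to build $A$ by one effective diagonalisation carried out at a very rapidly increasing sequence of lengths, and then to read off both properties from the resulting block structure. By Lemma~\ref{lem:nufptnc} — together with the fact that a set is \levelable{\clO{n^c}} exactly when it fails to be \immune{\clO{n^c}} — it suffices to build an $A$ that is \immune{\clO{n^c}} for infinitely many $c$ while having no maximal \cl{P}-slice: the former makes $A$ fail to be \levelable{\clnu{FPT}}, and the latter is exactly the statement that $A$ is \levelable{\cl{P}}. In fact I would aim to make $A$ \immune{\clO{n^c}} for \emph{every} $c$.

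First I would set up the scaffolding. Enumerate all clocked polynomial time machines as $\Phi_1,\Phi_2,\dots$ — pairs of a Turing machine with a clock $d\cdot n^e$ — arranged so that every \clO{n^c}-approximation of every set agrees on all inputs with some $\Phi_m$ of clock degree $c$; write $\mathrm{lev}(m)$ for this degree. Fix a surjection $\pi\colon\bbN\to\bbN$ each of whose fibres is infinite, and a length sequence $n_1<n_2<\cdots$ growing so quickly that the bookkeeping at stage $i$ costs only $\mathrm{poly}(n_i)$ steps (so $n_{i+1}$ dominates the running time of the construction through stage $i$, and $n_i\ge 2^i$). Let $A$ be empty on every string whose length is not some $n_i$; this ``trivial part'' of $A$ is decidable in linear time, and at stage $i$ the construction touches only a recognisable, super-sparse set $B_i$ of strings of length $\ge n_i$, the $i$-th \emph{block}.

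On $B_i$ I would run a priority diagonalisation against \emph{all} machines $\Phi_m$ with $\mathrm{lev}(m)\le\mathrm{lev}(\pi(i))$: scanning $B_i$ in increasing order with the usual finite-injury pointer, at a string $x$ make $A(x)$ disagree with the lowest as-yet-unsatisfied requirement whose machine outputs a bit at $x$, and set $A(x)\deq 0$ otherwise. Because $B_i$ is so sparse that the $k$-th of its strings has length exceeding the time needed to simulate the first $k$ relevant machines on its predecessors, this runs in $\mathrm{poly}(n')$ time on an input of length $n'$ in $B_i$, of degree $\mathrm{lev}(\pi(i))+\bigO(1)$; absorbing this overhead into the enumeration one gets that the restriction of $A$ to $S_c\deq(\text{the trivial part})\cup\bigcup\{B_i:\mathrm{lev}(\pi(i))\le c\}$ is an \clO{n^c}-approximation of $A$, so $S_c$ is a \clO{n^c}-slice for $A$ (in particular $S_c\in\cl{P}$); its complement is $H_c\deq\bigcup\{B_i:\mathrm{lev}(\pi(i))>c\}$. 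Now for each $c$ infinitely many blocks $B_i$ have $\mathrm{lev}(\pi(i))=c{+}1$ — a whole fibre of some level-$(c{+}1)$ machine — and these lie in $S_{c+1}\setminus S_c$, so $S_{c+1}\setminus S_c$ is infinite; since every \cl{P}-slice for $A$ is an \clO{n^c}-slice for some $c$ and hence sits inside $S_c$ up to a finite set, it is properly and infinitely extended by the \cl{P}-slice $S_{c+1}$, whence $A$ has no maximal \cl{P}-slice, i.e.\ $A$ is \levelable{\cl{P}}. For \immune{\clO{n^c}} it remains, by the \clO{n^c}-version of Theorem~\ref{thm:maximal}, to show that $H_c$ is an \clO{n^c}-core: no \clO{n^c}-approximation $\Phi_m$ (necessarily $\mathrm{lev}(m)=c$) decides infinitely many strings of $H_c$. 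Here $\Phi_m$ is diagonalised against inside every block of level $>c$, so if it ever outputs a bit on such a block it errs there and is not an approximation of $A$ at all; the intended conclusion is that a $\Phi_m$ which really is an approximation of $A$ decides only finitely many strings of $\bigcup\{B_i:\mathrm{lev}(\pi(i))>c\}$.

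The step I expect to be the main obstacle is precisely this last claim: an error-free \clO{n^c}-approximation of $A$ must be prevented from ``nibbling'' — from correctly deciding at least one string out of each of infinitely many blocks of level $>c$ (for instance a constant value on the many default positions of a block). Overcoming this is what forces the two less obvious features of the construction: that the blocks are themselves \emph{infinite} super-sparse sets rather than finite ones, so that ``all but finitely many strings of a block are genuinely hard'' is meaningful; and that on each block one diagonalises against an ever-growing initial segment of the machine list, affordable only because of the extreme sparsity, so that across each fixed-level family of blocks one defeats \emph{every} approximation of that level. The technical crux is to verify that these features close the gap — that within a fixed level a valid approximation which decides any string of some block must decide infinitely many, hence be injured, hence fail to be an approximation after all — so that $H_c$ really is a maximal \clO{n^c}-core for every $c$.
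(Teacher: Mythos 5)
Your overall strategy is the same as the paper's: diagonalize in polynomial time against polynomial-time approximations, stratified by exponent, so that the set has a maximal \clO{n^c}-slice for infinitely many $c$ (hence, via Theorem~\ref{thm:maximal} and Lemma~\ref{lem:nufptnc}, is not \levelable{\clnu{FPT}}) while the slices keep growing by infinite amounts as $c$ increases (hence the set is \levelable{\cl{P}}). But the proof has a genuine gap, and it is exactly the one you flag yourself: you never establish that $H_c$ is an \clO{n^c}-core, i.e.\ that no \emph{error-free} \clO{n^c}-approximation of $A$ decides infinitely many strings of the blocks of level $>c$. As you set things up, this claim can actually fail. Satisfying a requirement only when the targeted machine outputs a bit, and defaulting to $A(x)=0$ otherwise, leaves the ``nibbling'' loophole open: within each of the infinitely many high-level blocks there are early strings where your machine list has been restarted and a given index $m$ is not yet under consideration, and at such strings (or at default-$0$ positions) a fixed \clO{n^c} machine can output a correct bit while staying silent everywhere else. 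One correct bit per block over infinitely many blocks already gives an \clO{n^c}-slice infinitely extending $S_c$, destroying its maximality, and your injury count does not rule this out: with per-block bookkeeping the lower-priority injuries recur in every block, and even with global bookkeeping the per-block restart of the ``ever-growing initial segment'' leaves infinitely many unprotected strings.

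The missing idea (which is how the paper closes exactly this gap) is twofold. First, the deciding procedure must, before diagonalizing at an input, recompute the already-constructed initial segment of $A$ and retain only those machines still \emph{consistent} with it; then every successful diagonalization permanently removes that machine from all future consideration, so for an error-free machine of index $i$ the event ``some lower-indexed consistent machine gets flipped instead of $i$'' can occur at most $i$ times \emph{in total}, not once per block. Second, the threshold of machines considered must grow with the length of the input globally (in the paper, indices up to $\length{\langle c,x\rangle}$), so that for each fixed $i$ only finitely many inputs in the whole high region escape consideration of $i$ --- rather than finitely many per block over infinitely many blocks. With these two changes an error-free \clO{n^c}-approximation can output a bit on at most finitely many high inputs, which is what maximality of $S_c$ requires; without them the central claim of your construction is unproved (and, for unlucky choices of the enumeration, false). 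The remaining exponent bookkeeping (your degree $\mathrm{lev}(\pi(i))+\bigO(1)$ versus the target $c$) is a minor slippage of the same kind the paper handles by simulating at time $n^{2c}$ and noting $2d\ge 2c+2$ for $d>c$; maximal \clO{n^{c+\bigO(1)}}-slices for infinitely many $c$ are enough for Lemma~\ref{lem:nufptnc}.
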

\begin{proof}
  We shall prove the theorem by constructing a \levelable{\cl{P}} set $A$ for which there are infinitely many $c \in \bbN$ such that $A$ is \immune{\clO{n^c}}.
  For this, let $\phi_1, \phi_2, \ldots$ be an effective enumeration of all partial decision procedures and denote by $\phi^{\downarrow c}_i$ the $i$th partial decision procedure restricted to running time $n^c$ so that for all $i$ and all $c \le d$ we have $\dom \phi^{\downarrow c}_i \subseteq \phi^{\downarrow d}_i$.
  Additionally, let $\langle c, x\rangle$ be the result of a pairing function such as Cantor's.

  Consider the following recursive procedure for deciding a set $A$.
  On input $\langle c, x\rangle$, the procedure takes the following steps.
  \begin{enumerate}
  \item
    \hspace{\stretch{1}}\textit{Determine a set $I$ of procedures consistent with an initial segment of $A$:}
    \begin{enumerate}
    \item
      Set $I \deq \{1, 2, \ldots, \length{\langle c, x\rangle}\}$.
    \item
      For $d$ in $\{1, 2, \ldots c\}$ and $y$ in $\{1, 2, \ldots, \length{\langle c, x\rangle}\}$ do:
      \\\-\quad Set $I \deq \{i \st i \in I \reland \phi^{\downarrow 2c}_i(\langle d, y\rangle) = A(\langle d, y\rangle)\}$.
    \end{enumerate}
  \item
    \hspace{\stretch{1}}\textit{Try to make a procedure from $I$ inconsistent with $A$:}
    \begin{enumerate}
    \item
      For $i$ in $I$ do:
      \\\-\quad If $\phi^{\downarrow 2c}_i(\langle c, x\rangle) \in \{0, 1\}$ then return $1 - \phi^{\downarrow 2c}_i(\langle c, x\rangle)$.
    \item
      return $0$. \hspace{\stretch{1}}\textit{(arbitrary)}
    \end{enumerate}
  \end{enumerate}
  The first stage of this procedure performs at most $c\length{\langle c,+x\rangle}^2$ simulations of computations, each with a running time of $n^{2c}$ where the input length $n$ is at most the length of $\langle c, \length{\langle c, x\rangle}\rangle$.
  Besides these simulations, this procedure needs access to a segment of $A$ to test against.
  This segment can be computed recursively, with a recursion depth bounded by the iterated logarithm of $x$.
  By using dynamic programming the time required to compute the segment is insignificant with respect to the total running time of the entire procedure.

  The second stage of the procedure requires the simulation of at most $\length{\langle c, x\rangle}$ computations, each with a running time of $n^{2c}$, where we have $n = \length{\langle c, x\rangle}$.
  Using efficient simulation \citep{arora2009computational}, this puts the running time of the second stage in $\bigO(n^{2c + 2})$.
  Note that the running time of the procedure is thus not polynomial in $\length{\langle c, x\rangle}$, as $c$ appears in the exponent and is not independent of the input.

  For any fixed $c$, the set $\{\langle d, x\rangle \st d \le c \reland x \in \binary^+\}$ is a \clO{n^{2c + 2}}-slice of $A$.
  It is not a maximal \cl{P}-slice, as for larger values of $c$ infinitely many elements are introduced in the corresponding sets.
  However, we claim that it is a maximal \clO{n^{2c + 2}}-slice and thus that $A$ is \levelable{\cl{P}} yet not \levelable{\clnu{FPT}}.
  Suppose towards a contradiction that there is an infinite \clO{n^{2c + 2}} slice $S \deq \{\langle d, x\rangle \st d > c \reland x \in \binary^+\}$ for $A$ that is the domain of a partial decision procedure that occurs in our enumeration with index $i$.
  When $\length{\langle d, x\rangle}$ outgrows $i$ in the first stage of our procedure for deciding $A$ will include $i$ in $I$.
  Because there are only finitely many values for $d$ and $x$ such that $\length{\langle d, x\rangle}$ is smaller than $i$, we may assume that $I$ contains $i$ at the start of the second stage of our procedure.
  Now for every element of $S$, either the second stage invalidates $i$ as the index of a \clO{n^{2c + 2}}-slice for $A$, or an index smaller than $i$ is removed from $I$ for all subsequent values of $d$ and $x$.
  The latter of these possibilities can happen at most $i$ times, so, since we assumed that $S$ is infinite, eventually $i$ must be invalidated.
  This contradicts our assumption that $i$ was the index of a partial decision procedure for $S$ with a running time in $\bigO(n^{2c + 2})$.
  Note that our time bounds were chosen as they are so that already for $d = c + 1$ we find $2d \ge 2c + 2$.
\end{proof}

Note that although the set constructed in the above proof is not \levelable{\clnu{FPT}} this does not mean that it is necessarily \immune{\clnu{FPT}}.
It could be that the set is neither \levelable{\clnu{FPT}} nor \immune{\clnu{FPT}}.
We shall now turn to the existence of \levelable{\clnu{FPT}} sets.
Our notion of a \emph{reduction} is that of a membership preserving polynomial time computable funcion, in other words, that of a Karp~reduction.
\begin{theorem}
\label{thm:nufptlevelable}
  Every set outside \cl{P} from which there is a \lli{} reduction to itself is \levelable{\clnu{FPT}}.
\end{theorem}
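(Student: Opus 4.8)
The plan is to exploit the \lli{} self-reduction to show that for every constant $c$, the set $A$ is \levelable{\clO{n^c}}, so that by Lemma~\ref{lem:nufptnc} it is \levelable{\clnu{FPT}}. By Theorem~\ref{thm:maximal} (in its \clO{n^c} form) it suffices to show that $A$ has no maximal \clO{n^c}-slice for any fixed $c$. So I fix $c$ and an arbitrary \clO{n^c}-slice $S$ for $A$, realized by some \clO{n^c}-approximation $\Phi$, and argue that $S$ can be extended by infinitely many strings into a larger \clO{n^c}-slice for $A$.

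The key idea is that a \lli{} reduction $f$ from $A$ to $A$ is membership preserving, runs in polynomial time, and satisfies $\length{f(x)} \ge \alpha\length{x}$ for some constant $\alpha > 0$ and all sufficiently long $x$; being polynomial time it also satisfies $\length{f(x)} \le \beta\length{x}^d$ for some $\beta, d$. First I would iterate $f$: since $\length{f(x)}$ grows at least linearly, for any $x$ there is a smallest number of iterations $m(x)$ after which $f^{(m(x))}(x)$ lands in $S$ — \emph{unless} the orbit of $x$ never enters $S$, but I claim this cannot happen for all $x$, for otherwise the complement of $S$ would contain a \cl{P}-core-like obstruction; more carefully, I would argue that if $S$ were maximal then the "\clO{n^c}-hard" part $\overline{S}$ is a \cl{P}-core, and one can decide $A$ on a large, \clO{n^c}-time-decidable set by following the self-reduction until the trajectory escapes into $S$, where $\Phi$ answers. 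The linear length-increasing property is what bounds the number of iterations: starting from a string of length $n$, after $O(\log n)$ iterations — no, after at most $n/\log$-many; rather, since lengths grow by a constant \emph{factor} is too strong, they grow by at least a constant \emph{amount} only if the reduction is length increasing additively, but "linearly length increasing" means $\length{f(x)} \ge \alpha\length{x}$ with $\alpha$ possibly less than $1$; the relevant direction is that $f$ does not shrink strings much, so iterating a bounded number of times keeps us polynomially bounded, and the new \clO{n^c}-approximation on input $x$ simulates $\Phi$ on $f^{(j)}(x)$ for $j = 0, 1, \ldots, t$ for a suitable constant number $t$ of steps, using that each $f^{(j)}(x)$ has length polynomial in $\length{x}$ and that composing $\Phi$ with a fixed number of polynomial-time maps stays within \clO{n^{c'}} for some $c'$ — here I must be careful, since the exponent may blow up, so I would instead fix $c$ large and show extendability at exponent $c$ by using only \emph{one} application of $f$, whose output length is at most $\beta\length{x}^d$, and require the target slice to be an \clO{n^{cd}}-slice; but Lemma~\ref{lem:nufptnc} only needs the property for all $c \ge c_0$, and replacing $c$ by $cd$ is harmless, so this is fine.

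Concretely, for the chosen fixed exponent I would proceed as follows. Take any \cl{P}-slice $S$ for $A$, say an \clO{n^{c}}-slice with approximation $\Phi$. Since $A \notin \cl{P}$, the complement $\overline{S}$ is infinite (indeed $\overline{S}$ must be a \cl{P}-core, as $A$ restricted to any \cl{P}-slice is in \cl{P}, so if $\overline{S}$ met some \cl{P}-slice infinitely often we would contradict neither hypothesis — I would instead just use that $\overline{S}$ is infinite and that $A \notin \cl{P}$ guarantees no \clO{n^c}-approximation has cofinite domain, which follows once $c$ is fixed: a cofinite-domain \clO{n^c}-approximation together with a finite table decides $A$ in \clO{n^c} time). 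Now consider the set $T \deq S \cup \{x \in \overline{S} \st f(x) \in S\}$. Since $f$ is membership preserving, $A(x) = A(f(x))$, so on $\{x \in \overline{S} \st f(x) \in S\}$ we can decide membership in $A$ by computing $f(x)$ — time $\length{x}^{O(1)}$ — and running $\Phi$ on $f(x)$ — time $\length{f(x)}^{c} \le \length{x}^{O(c)}$; checking $f(x) \in S$ is possible since $\dom(\Phi) = S \in \cl{P}$. Hence $T$ is an \clO{n^{c'}}-slice for $A$ for a suitable $c' = O(c)$ (and $T \supseteq S$). It remains to show $T \setminus S$ is infinite. If it were finite, then $f$ would map all but finitely many strings of $\overline{S}$ into $\overline{S}$; iterating and using that $f$ is \lli{} so that $\length{f^{(j)}(x)}$ is bounded polynomially in $\length{x}$ for each fixed $j$ but, more importantly, that $f$ cannot map an infinite set into finitely many strings (length increasing up to a constant prevents collapse of length), one shows the orbit of a long enough $x \in \overline{S}$ stays in $\overline{S}$ forever; combined with $A \notin \cl{P}$ this yields a contradiction — the self-reduction then gives that $A$ agrees with its behavior on the infinite \cl{P}-core $\overline{S}$ through an infinite regress, and one extracts from this an \clO{n^{c}}-approximation with an infinite domain inside $\overline{S}$, contradicting maximality of $S$, or one argues directly that $\overline{S}$ being closed under $f$ up to finite error plus $A\notin\cl P$ is impossible because then the whole of $A$ is determined by a \cl{P}-slice via finitely many reduction steps on each input. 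I would then conclude that $S$ is not maximal, hence $A$ is \levelable{\clO{n^{c'}}} for all large $c$, hence \levelable{\clnu{FPT}}.

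\textbf{Main obstacle.} The delicate point is showing $T \setminus S$ is infinite, i.e. that the self-reduction genuinely pushes infinitely many hard instances into the easy part; this is where both $A \notin \cl{P}$ and the \emph{linearly} length-increasing hypothesis (rather than merely length non-decreasing or polynomially bounded) must be combined — the linear lower bound on $\length{f(x)}$ is what rules out $f$ collapsing $\overline{S}$ onto a finite set and, via iteration, lets us trap the argument within a fixed polynomial exponent so that Lemma~\ref{lem:nufptnc} applies uniformly for all $c \ge c_0$. Controlling the blow-up of the exponent under (possibly several) compositions of $f$ with $\Phi$, while keeping it independent of the input, is the bookkeeping heart of the proof.
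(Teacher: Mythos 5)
Your opening move coincides with the paper's: given a maximal \clO{n^c}-slice $S$ with approximation $\Phi$, look at $S' \deq \{x \st x \notin S \reland f(x) \in S\}$ and try to show it is an infinite slice disjoint from $S$. The genuine gap is your argument that $S'$ (your $T \setminus S$) is infinite. You try to refute ``$f$ maps all but finitely many strings of $\overline{S}$ into $\overline{S}$'' by appealing to $A \notin \cl{P}$, but no contradiction arises from that: an orbit that never enters $S$ just means the self-reduction never reaches the easy region, which is perfectly consistent with $A \notin \cl{P}$, and nothing lets you ``extract an \clO{n^c}-approximation with infinite domain inside $\overline{S}$'' from it. What the paper uses instead, and what is missing from your sketch, is that each orbit $S_x \deq \{x, f(x), f(f(x)), \ldots\}$ is itself an \clO{n^c}-slice for $A$: it is infinite with strictly increasing lengths because $f$ is length increasing, membership in $S_x$ is decidable in time $\bigO(n^c)$ by iterating $f$ from $x$, and the answer on all of $S_x$ is the constant $A(x)$, which may be hardwired. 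Maximality of $S$ applied to these orbit slices forces $S_x \setminus S$ to be finite for every $x$, so the orbit of every $x \notin S$ must cross into $S$, yielding an element of $S'$ of length at least $\length{x}$; since $A \notin \cl{P}$ guarantees arbitrarily long $x \notin S$, the set $S'$ is infinite. So $A \notin \cl{P}$ only supplies the infinitude of $\overline{S}$; the engine of the proof is maximality turned against the orbit slices, an idea absent from your proposal.

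The exponent bookkeeping is also mishandled. To contradict maximality of an \clO{n^c}-slice you must extend it to an \clO{n^c}-slice with the \emph{same} exponent; producing only an \clO{n^{cd}}-slice and calling this ``harmless'' does not establish that $A$ is \levelable{\clO{n^c}} for any single $c$, so Lemma~\ref{lem:nufptnc} no longer applies (indeed, Theorem~\ref{thm:nonnufptlevelable} exhibits sets where slices extend only at the cost of a larger exponent). This is exactly where the \lli{} hypothesis earns its keep, and you have its two halves reversed: the \emph{linear upper bound} on $\length{f(x)}$ is what keeps deciding $x \in S'$ (run $\Phi$ on $x$, compute $f(x)$, run $\Phi$ on $f(x)$) within $\bigO(n^c)$ once $c$ is chosen large enough that $f$ is computable in time $\bigO(n^{c-1})$, while the \emph{length-increasing} property is what makes the orbits $S_x$ infinite and cheaply decidable. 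With $\length{f(x)}$ allowed to grow like $\length{x}^d$, as in your reading of the hypothesis, neither the fixed-exponent extension nor the orbit argument goes through as needed.
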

\begin{proof}
  Let $A$ be a set outside \cl{P} and $f$ a \lli{} reduction from $A$ to itself.
  Suppose that $A$ is not \levelable{\clnu{FPT}} and has a maximal \clO{n^c}-slice $S$, with $c$ high enough for $f$ to be computable in time $\bigO(n^{c - 1})$.
  The sets
  \begin{align*}
    S' &\deq \{x \st x \notin S \reland f(x) \in S\}, \\
    S_x &\deq \{x, f(x), f(f(x)), \ldots\}
  \end{align*}
  are, by nature of $f$, also \clO{n^c}-slices for $A$.
  For $S'$ this requires the linear bound on the length of the output of $f$, where for $S_x$ this requires that $f$ is length increasing.
  Furthermore, $S'$ satisfies $S \cap S' = \emptyset$.

  By the assumed maximality of $S$, for every $x$ there are only finitely many elements in the set $S \backslash S_x$.
  However, since $A$ is not in \cl{P}, there are infinitely many $x$ outside $S$ and for each of these the set $S_x$ intersects $S'$.
  Hence $S'$ is infinite, contradicting the maximality of $S$.
\end{proof}

The existence of \levelable{\clnu{FPT}} sets now follows from the existence of sets outside \cl{P} that have a \lli{} reduction to itself.
\begin{lemma}
\label{lem:llireduction}
  There are sets outside \cl{P} that have a \lli{} reduction to itself.
\end{lemma}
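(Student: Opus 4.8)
The plan is to exhibit an explicit witness, taking the self-reduction to be the map $f(w) \deq w0$ that appends a single $0$. This $f$ is computable in linear time, it is length increasing since $\length{f(w)} = \length{w} + 1$, and its output length is linearly bounded since $\length{f(w)} \le 2\length{w}$ for $w \in \binary^+$; so it is a \lli{} Karp reduction from any $f$-invariant set to itself. Everything then reduces to exhibiting a set $A$ not in \cl{P} with $w \in A \iff w0 \in A$ for every $w \in \binary^+$.

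To build such an $A$, I would take any set $L \subseteq \binary^+$ that is not in \cl{P} --- such sets exist, for instance decidable ones, by the time hierarchy theorem --- and pad it out along the forward orbits of $f$. Every string that contains a $1$ is uniquely of the form $x10^k$ with $x \in \binary^*$ and $k \ge 0$, so I set
\[
  A \deq \{\, x10^k \st x \in L,\ k \ge 0 \,\};
\]
note that this automatically puts every all-zero string $0^j$ outside $A$. Checking $f$-invariance is then a two-case affair: if $w = x10^k$ then $w0 = x10^{k+1}$, so $w \in A \iff x \in L \iff w0 \in A$; and if $w = 0^j$ then neither $w$ nor $w0$ contains a $1$, so both lie outside $A$.

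It remains to see that $A$ is not in \cl{P}, and this is immediate: for $x \in \binary^+$ the string $x1$ ends in $1$, hence $x1 \in A$ precisely when $x \in L$, so $x \mapsto x1$ is a linear-time reduction from $L$ to $A$, and $A \in \cl{P}$ would force $L \in \cl{P}$. This yields a set outside \cl{P} from which there is a \lli{} reduction, namely $f$, to itself, which proves the lemma; moreover, if $L$ is chosen to be decidable then $A$ is decidable (in exponential time) as well.

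I do not expect a genuine obstacle here: the argument is essentially an unpadding (quotient) construction, and every verification is routine. The single point that deserves a moment's care is the treatment of the all-zero strings --- they form one $f$-orbit that has no $\binary^+$-preimage under $f$, and they must be handled uniformly (all excluded from $A$) so that the equivalence $w \in A \iff w0 \in A$ does not fail at $w = 0$. Combined with Theorem~\ref{thm:nufptlevelable}, this lemma also delivers the \levelable{\clnu{FPT}} sets whose existence was being sought.
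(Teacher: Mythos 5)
Your proof is correct and takes essentially the same route as the paper: pad an arbitrary set outside \cl{P} along infinite orbits so that a trivial padding map becomes a \lli{} self-reduction (the paper uses the cylindrification $\{\langle x, y\rangle \st x \in X,\ y \in \bbN\}$ with $f(\langle x, y\rangle) \deq \langle x, 2 \cdot y\rangle$). Your concrete choice $f(w) \deq w0$ acting on $A \deq \{x10^k \st x \in L,\ k \ge 0\}$ simply replaces the pairing function by trailing-zero padding, and your explicit exclusion of the all-zero strings correctly handles the one orbit the paper's pairing-based construction sidesteps.
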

\begin{proof}
  Let $X$ be a set outside \cl{P} and consider its cylindrification
  \begin{equation*}
    A \deq \{\langle x, y\rangle \st x \in X, y \in \bbN\}.
  \end{equation*}
  Note that $A$ too is not in \cl{P}.
  For most reasonable choices of pairing functions, the function $f$ defined by
  \begin{equation*}
    f(\langle x, y\rangle) \deq \langle x, 2 \cdot y\rangle
  \end{equation*}
  is a \lli{} reduction from $A$ to itself.
\end{proof}

It should be noted that certain padding functions give rise to \lli{} reductions.
There are padding functions, $\pad$, for which there exists a constant $c$ such that for all $x, y$ we have
\begin{equation*}
  \frac{1}{c} \cdot (\length{x} + \length{y}) \le \length{\pad(x, y)} \le c \cdot (\length{x} + \length{y}).
\end{equation*}
Padding functions that meet this enhanced honesty criterion can be turned into \lli{} reductions by mapping $x$ to $\pad(x, 0^{c \cdot \length{x}})$.

It was shown~\citep{young1983some} that every honest, one-one, polynomially computable function is the productive function for the complement of some $k$-creative set.
In particular this means that there are $k$-creative sets where the influence of the associated productive function on the length of its input is linear.
Such functions can again be turned into \lli{} reductions and hence, assuming $\cl{P} \ne \cl{NP}$, there are also $k$-creative sets that are \levelable{\clnu{FPT}}.
%TODO: In a later paper, relate \levelable{\cl{FPT}} to (parameterized) self-reducibility

Starting from Lemma~\ref{lem:nufptnc} we can augment Theorem~\ref{thm:nufptprincipal} and thus answer the principality question for all sets that are not \levelable{\clnu{FPT}}.
\begin{theorem}
\label{thm:nufptnonprincipal}
  For any set $A$ that is \levelable{\cl{P}} and not \levelable{\clnu{FPT}}, the filter $\calF^{\quasile_\rmnu}_{(A, \clnu{FPT})}$ is nonprincipal.
\end{theorem}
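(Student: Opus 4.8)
The plan is to argue by contradiction. Suppose $\calF^{\quasile_\rmnu}_{(A, \clnu{FPT})}$ is principal, say with least element the equivalence class of a parameterization $\eta^* \subseteq \binary^+ \times \Omega^*$, and fix a constant $c^* \ge 1$ witnessing $(A, \eta^*) \in \clnu{FPT}$, so that every $\eta^*_k$ is an \clO{n^{c^*}}-slice for $A$. The core of the argument will be a \emph{containment lemma}: for every constant $d \ge 1$ and every \clO{n^d}-slice $S$ for $A$ there is a \emph{single} parameter value $k_S \in \Omega^*$ with $S \subseteq \eta^*_{k_S}$. Intuitively, a principal parameterization with a fixed exponent $c^*$ cannot help but absorb every slice for $A$ into one of its constant-exponent levels, and this is precisely what will collide with $A$ being \levelable{\cl{P}}.

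To establish the containment lemma I would, given a \clO{n^d}-slice $S$ for $A$, form the auxiliary parameterization $\eta^S \subseteq \binary^+ \times \bbN$ with $\eta^S_k \deq S \cup \{x \st \length{x} \le k\}$. Each level $\eta^S_k$ is a finite variation of $S$, hence an \clO{n^d}-slice for $A$, so $(A, \eta^S) \in \clnu{FPT}$ and $\eta^S$ lies in the filter; principality then forces $\eta^* \quasile_\rmnu \eta^S$. Every $x \in S$ occurs in $\eta^S$ already at the shortest parameter value of $\bbN$, so $\mu_{\eta^S}$ takes a constant value $c_0$ on $S$; consequently $\gap_{\eta^*, \eta^S}(c_0) \ge \sup_{x \in S}\mu_{\eta^*}(x)$, and since this gap value is finite there is a constant $m$ with $\mu_{\eta^*}(x) \le m$ for all $x \in S$. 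There are only finitely many parameter values of $\Omega^*$ of length at most $m$, so directedness of $\Omega^*$ lets me pick $k_S$ above all of them, and the up-set property of $\eta^*$ then yields $S \subseteq \eta^*_{k_S}$.

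With the containment lemma available, I would finish as follows. By Lemma~\ref{lem:nufptnc}, $A$ not being \levelable{\clnu{FPT}} means that for every bound there is a larger constant $c$ for which $A$ is not \levelable{\clO{n^c}}, i.e.\ for which $A$ has a maximal \clO{n^c}-slice; choose such a $c$ with $c > c^*$ and let $S$ be a maximal \clO{n^c}-slice for $A$. Because $A$ is \levelable{\cl{P}}, the \cl{P}-slice $S$ is not a maximal \cl{P}-slice, so there is a \cl{P}-slice $T$, which we may take to contain $S$, with $T \backslash S$ infinite; being a \cl{P}-slice, $T$ is an \clO{n^{c'}}-slice for some $c'$, and $c' > c$, since otherwise $T$ would be an \clO{n^c}-slice contradicting the maximality of $S$. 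Applying the containment lemma to $T$ produces $k_T \in \Omega^*$ with $T \subseteq \eta^*_{k_T}$; but $\eta^*_{k_T}$ is an \clO{n^{c^*}}-slice for $A$ by the choice of $c^*$, hence an \clO{n^c}-slice for $A$ since $c^* < c$, while $\eta^*_{k_T} \supseteq T \supseteq S$ with $\eta^*_{k_T} \backslash S$ infinite. This contradicts $S$ being a maximal \clO{n^c}-slice for $A$, so $\calF^{\quasile_\rmnu}_{(A, \clnu{FPT})}$ is nonprincipal.

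The step I expect to be the main obstacle is the containment lemma, and within it the passage from ``$\mu_{\eta^*}$ is bounded on $S$'' to ``$S$ sits inside a single level $\eta^*_{k_S}$'': this is where the axioms on parameter spaces do real work, since it uses that only finitely many parameter values are short and that finitely many parameter values admit a common upper bound. The remainder is bookkeeping with exponents, relying repeatedly on the facts that a finite variation of an \clO{n^c}-slice for $A$ is again one and that $\clO{n^{c^*}} \subseteq \clO{n^c}$ whenever $c^* \le c$.
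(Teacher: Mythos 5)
Your proof is correct and takes essentially the same route as the paper's: use the failure of \levelable{\clnu{FPT}} to get maximal \clO{n^c}-slices for arbitrarily large $c$, use \levelable{\cl{P}} to extend such a slice infinitely, and contradict the existence of a least parameterization with its fixed exponent $c^*$. Your ``containment lemma'' simply makes explicit the absorption step that the paper's terse three-sentence proof leaves implicit, and you argue it correctly (finite variations of a slice stay slices of the same exponent, finiteness of the gap bounds $\mu_{\eta^*}$ on $S$, only finitely many parameter values are short, and directedness plus the up-set property place $S$ inside a single level).
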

\begin{proof}
  In case $A$ is not \levelable{\clnu{FPT}}, there is an unbounded set of values $c$ such that $A$ has a maximal \clO{n^c}-slice.
  When such a set $A$ is \levelable{\cl{P}}, then for every \clO{n^c}-slice $S$ for $A$ that is maximal, there is a constant $d$ such that $A$ has a maximal \clO{n^d}-slice that is infinitely larger than $S$.
  Hence $\calF^{\quasile_\rmnu}_{(A, \clnu{FPT})}$ cannot be principal.
\end{proof}

For \immune{\clnu{FPT}} sets, the principality of their filters with respect to \clnu{FPT} is still an open problem.
Next to identifying sets that admit optimal parameterizations, we can study to what degree a filter of parameterizations is unique to a set.
The filter of parameterizations with respect to a parameterized complexity class is then taken as a representation of the distribution of complexity inside a set.
This approach is a continuation of an idea by \citet{orponen1986classification} who represented the complexity characteristics of a set by the filter of its complexity cores.
Where this idea was shown fruitless when using proper cores, our parameterized setting is promising.
For sets $A, B$, let $A \symdiff B$ denote the symmetric difference $(A \backslash B) \cup (B \backslash A)$.
\begin{theorem}
\label{thm:nufptsymdiffeq}
  For any set $X$ in \cl{P} and any set $A$ we have
  \begin{equation*}
    \calF^{\quasile_\rmnu}_{(A, \clnu{FPT})} = \calF^{\quasile_\rmnu}_{(A \symdiff X, \clnu{FPT})}.
  \end{equation*}
\end{theorem}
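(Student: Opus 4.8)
The plan is to prove something slightly stronger than stated: the \emph{sets} of parameterizations witnessing membership of $A$ and of $A \symdiff X$ in \clnu{FPT} are literally equal, which then yields equality of the induced filters of equivalence classes at once. It suffices to show that, for an arbitrary parameterization $\eta$, we have $(A, \eta) \in \clnu{FPT}$ if and only if $(A \symdiff X, \eta) \in \clnu{FPT}$. Because $(A \symdiff X) \symdiff X = A$ and $X$ remains in \cl{P}, the two implications are symmetric, so I would establish only the left-to-right direction.

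The crux is a single slicewise lemma: if $X$ is decidable in time $\bigO(n^d)$ and $S$ is an \clO{n^c}-slice for $A$, then $S$ is an \clO{n^{\max\{c,d\}}}-slice for $A \symdiff X$, with the same domain $S$. To prove it, take an \clO{n^c}-approximation $\Phi$ for $A$ with $\dom(\Phi) = S$ and define a machine $\Psi$ that on input $x$ decides $x \in X$ (in time $\bigO(n^d)$), runs $\Phi(x)$ (halting within $\bigO(n^c)$), and, if $\Phi(x) \in \{0, 1\}$, returns the exclusive-or of $\Phi(x)$ with the bit recording membership of $x$ in $X$, and otherwise diverges. A four-case check on the value of $\Phi(x)$ and on whether $x \in X$ shows $\Psi$ is a partial decision procedure for $A \symdiff X$: whenever $\Phi(x) \in \{0,1\}$ we have $\Phi(x) = A(x)$, so the exclusive-or with the $X$-bit equals the indicator of $A \symdiff X$ at $x$. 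Moreover $\dom(\Psi) = \dom(\Phi) = S$, and $\Psi$ halts on every input within $\bigO(n^{\max\{c,d\}})$.

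Granting the lemma, the theorem follows immediately. Suppose $(A, \eta) \in \clnu{FPT}$, witnessed by a constant $c$, so that $\eta_k$ is an \clO{n^c}-slice for $A$ for every parameter value $k$. Fix once and for all a $d$ with $X$ decidable in time $\bigO(n^d)$; this $d$ depends on $X$ alone, not on $k$. The lemma then makes every $\eta_k$ an \clO{n^{\max\{c,d\}}}-slice for $A \symdiff X$, and since the exponent $\max\{c,d\}$ is the same for all $k$, this is exactly the assertion $(A \symdiff X, \eta) \in \clnu{FPT}$. Hence $\calF^{\quasile_\rmnu}_{(A, \clnu{FPT})} \subseteq \calF^{\quasile_\rmnu}_{(A \symdiff X, \clnu{FPT})}$, and applying this inclusion with $A$ replaced by $A \symdiff X$ gives the reverse containment, so the two filters coincide.

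The argument is almost entirely bookkeeping; the one place that genuinely matters is the uniformity of the exponent. It is essential that $d$, which absorbs the cost of deciding $X$, does not depend on the parameter value, because otherwise the existential-over-$c$ quantifier in the definition of \clnu{FPT} could not be satisfied --- this is also why the same proof, a fortiori, establishes the analogous statement for \clXnu{P}. A lesser point to handle with care is the semantics of partial decision procedures, specifically that $\Phi(x) \in \{0,1\}$ forces $\Phi(x) = A(x)$, so that the exclusive-or construction yields a genuine partial decision procedure for $A \symdiff X$ rather than a machine that merely happens to agree with one on a subset of its domain.
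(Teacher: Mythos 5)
Your proof is correct and follows essentially the same route as the paper, which simply observes that any slice for $A$ can be turned into a slice for $A \symdiff X$ and vice versa; your exclusive-or construction with the $X$-membership bit, together with the remark that the exponent absorbing the cost of deciding $X$ is independent of the parameter value, is exactly the intended elaboration of that observation.
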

\begin{proof}
  This follows from the fact that given $X$, any slice for $A$ can be turned into a slice for $A \symdiff X$ and vice versa.
\end{proof}

Intuitively, the above theorem states that taking the symmetric difference with an easy set does not alter the distribution of complexity.
Similarly, we find that the symmetric difference of two sets with the same distribution of complexity is easier than either of the initial sets.
\begin{theorem}
\label{thm:nufptsymdiffsubeq}
  For any two sets $A, B$ satisfying $\calF^{\quasile_\rmnu}_{(A, \clnu{FPT})} = \calF^{\quasile_\rmnu}_{(B, \clnu{FPT})}$ we have
  \begin{equation*}
    \calF^{\quasile_\rmnu}_{(A, \clnu{FPT})} \subseteq \calF^{\quasile_\rmnu}_{(A \symdiff B, \clnu{FPT})}.
  \end{equation*}
\end{theorem}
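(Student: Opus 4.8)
The plan is to unfold the definition of $\clnu{FPT}$ and reduce the statement to a routine ``combine two approximations on the same slice'' argument. Fix an arbitrary parameterization $\eta \in \calF^{\quasile_\rmnu}_{(A, \clnu{FPT})}$; the goal is to show $\eta \in \calF^{\quasile_\rmnu}_{(A \symdiff B, \clnu{FPT})}$, that is, that $A \symdiff B$ is in $\clnu{FPT}$ with $\eta$. The equality hypothesis enters exactly once: it guarantees that this same $\eta$ also witnesses $B \in \clnu{FPT}$.

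First I would extract a single exponent serving both sets. From $\eta \in \calF^{\quasile_\rmnu}_{(A, \clnu{FPT})}$ there is a constant $c_A$ with every $\eta_k$ an $\clO{n^{c_A}}$-slice for $A$, and from $\eta \in \calF^{\quasile_\rmnu}_{(B, \clnu{FPT})}$ there is a constant $c_B$ with every $\eta_k$ an $\clO{n^{c_B}}$-slice for $B$. Putting $c \deq \max\{c_A, c_B\}$, every parameter value $k$ has the property that $\eta_k$ is simultaneously an $\clO{n^c}$-slice for $A$ and an $\clO{n^c}$-slice for $B$. It is important that $c$ is chosen here, before any $k$ is fixed, so that it is uniform over all parameter values as $\clnu{FPT}$ demands.

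The core step is then, for a fixed $k$, to build an $\clO{n^c}$-approximation for $A \symdiff B$ with domain exactly $\eta_k$. Choose an $\clO{n^c}$-approximation $\Phi_A$ for $A$ with $\dom(\Phi_A) = \eta_k$ and an $\clO{n^c}$-approximation $\Phi_B$ for $B$ with $\dom(\Phi_B) = \eta_k$. Let $\Psi$ on input $x$ run $\Phi_A(x)$ and $\Phi_B(x)$; if both outputs lie in $\{0,1\}$, let $\Psi$ return $1$ when $\Phi_A(x) \ne \Phi_B(x)$ and $0$ when $\Phi_A(x) = \Phi_B(x)$; otherwise let $\Psi$ return a fixed symbol outside $\{0,1\}$. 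On $x \in \eta_k$ the values $\Phi_A(x)$ and $\Phi_B(x)$ are the correct membership bits of $A$ and $B$, so $\Psi(x)$ is the correct membership bit of $A \symdiff B$; on $x \notin \eta_k$ neither $\Phi_A(x)$ nor $\Phi_B(x)$ lies in $\{0,1\}$, since $\dom(\Phi_A) = \dom(\Phi_B) = \eta_k$, so $\Psi(x)$ lies outside $\{0,1\}$. Hence $\Psi$ is a partial decision procedure for $A \symdiff B$ with $\dom(\Psi) = \eta_k$, and running $\Phi_A$ and $\Phi_B$ in sequence and comparing two bits costs only a constant factor over $\bigO(n^c)$ (the constant may depend on $k$, but the exponent $c$ does not), so $\eta_k$ is an $\clO{n^c}$-slice for $A \symdiff B$. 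As $c$ is independent of $k$, this gives $A \symdiff B \in \clnu{FPT}$ with $\eta$, i.e., $\eta \in \calF^{\quasile_\rmnu}_{(A \symdiff B, \clnu{FPT})}$, which is the claimed inclusion.

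I do not expect a genuine obstacle; the only points needing care are the two just flagged — choosing the exponent $c$ before fixing $k$, and choosing $\Phi_A$ and $\Phi_B$ with the same domain $\eta_k$, which is what forces $\dom(\Psi) = \eta_k$ rather than something larger. It is also worth remarking, to preempt the natural question, that only an inclusion and not an equality can hold in general: taking $B = A$ gives $A \symdiff B = \emptyset \in \cl{P}$, whose filter with respect to $\clnu{FPT}$ contains every parameterization, so $\calF^{\quasile_\rmnu}_{(A \symdiff B, \clnu{FPT})}$ is typically strictly larger than $\calF^{\quasile_\rmnu}_{(A, \clnu{FPT})}$.
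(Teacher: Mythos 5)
Your proof is correct and follows essentially the same route as the paper, whose proof is a one-sentence sketch (any slice for $A$ is also a slice for $B$ and can be turned into a slice for $A \symdiff B$); you simply fill in the details of combining the two $\clO{n^c}$-approximations on a common domain $\eta_k$ and of fixing the exponent $c$ uniformly before quantifying over $k$. Nothing further is needed.
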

\begin{proof}
  This follows from the fact that any slice for $A$ is also a slice for $B$ and can thus be turned into a slice for $A \symdiff B$.
\end{proof}

Even though Theorem~\ref{thm:nufptsymdiffsubeq} asserts that the symmetric difference of two sets that share all their parameterizations is easier than either of the sets, it does not guarantee that this symmetric difference is in \cl{P}.
If this would be the case, a filter with respect to \clnu{FPT} would uniquely define a set up to variations in \cl{P}.
Of comparable flavor is the Berman--Hartmanis conjecture \citep{young1983some,li1997introduction}, which asserts that completeness for \cl{NP} uniquely defines a set up to isomorphisms computable in polynomial time.
We conjecture that indeed the filter of the symmetric difference collapses to that of a set in \cl{P}.
In other words, we conjecture that for all $A, B$, whenever we have $\calF^{\quasile_\rmnu}_{(A, \clnu{FPT})} = \calF^{\quasile_\rmnu}_{(B, \clnu{FPT})}$ there is some $X$ in \cl{P} such that we have $B = A \symdiff X$.
Here, we should keep in mind that taking the symmetric difference with some set is an involution and we have $A \symdiff (A \symdiff X) = X$.
Because all sets in \cl{P} have the same associated filter, our conjecture can be expressed elegantly as follows.
\begin{conjecture}
\label{con:nusymdiff}
  For any two sets $A, B$ we have
  \begin{equation*}
    \calF^{\quasile_\rmnu}_{(A, \clnu{FPT})} = \calF^{\quasile_\rmnu}_{(B, \clnu{FPT})} \iff \calF^{\quasile_\rmnu}_{(A \symdiff B, \clnu{FPT})} = \calF^{\quasile_\rmnu}_{(\emptyset, \clnu{FPT})}.
  \end{equation*}
\end{conjecture}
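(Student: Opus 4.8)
First I would split the equivalence into its two directions; only one of them seems within reach at present. The ``$\Longleftarrow$'' direction our tools already dispatch. The key observation is that $\calF^{\quasile_\rmnu}_{(\emptyset, \clnu{FPT})}$ is \emph{all} of $\calL^{\quasile_\rmnu}_\clnu{FPT}$: whenever $\eta$ lies in the latter, there is a uniform constant $c$ for which every layer $\eta_k$ is an \clO{n^c}-slice for some set, hence is itself decidable within a time bound depending only on $c$ (simulate the witnessing approximation and see whether it decides the input); outputting $0$ precisely on the decided elements then makes $\eta_k$ an \clO{n^c}-slice for $\emptyset$, so $(\emptyset, \eta) \in \clnu{FPT}$. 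Consequently, for any set $C$ we have $\calF^{\quasile_\rmnu}_{(C, \clnu{FPT})} = \calF^{\quasile_\rmnu}_{(\emptyset, \clnu{FPT})}$ exactly when the constant parameterization $\binary^+ \times \bbN$ puts $C$ in \clnu{FPT}, i.e.\ exactly when $C \in \cl{P}$. Thus the right-hand side of the equivalence is nothing but the assertion $A \symdiff B \in \cl{P}$, and in that case Theorem~\ref{thm:nufptsymdiffeq}, applied with the \cl{P}-set $X \deq A \symdiff B$ and using $A \symdiff (A \symdiff B) = B$, delivers $\calF^{\quasile_\rmnu}_{(A, \clnu{FPT})} = \calF^{\quasile_\rmnu}_{(B, \clnu{FPT})}$ immediately.

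The content of the statement is the ``$\Longrightarrow$'' direction: equal filters should force $A \symdiff B \in \cl{P}$. The natural plan is a contrapositive argument. Supposing $A \symdiff B \notin \cl{P}$, fix a region $K$ on which $A \symdiff B$ admits no \cl{P}-approximation; from the proof of Theorem~\ref{thm:nufptsymdiffsubeq} one already sees that equal filters would turn every \cl{P}-slice for $A$ into a \cl{P}-slice for $A \symdiff B$, so $K$ would have to be a \cl{P}-core for both $A$ and $B$. One would then try to exploit the hardness of $A \symdiff B$ on $K$ to build an imix parameterization $\eta$ --- a family of \cl{P}-slices eventually covering $\binary^+$ --- witnessing $(A, \eta) \in \clnu{FPT}$ but for which the corresponding slices cannot serve $B$, since that would polynomially decide $A \symdiff B$ on $K$; this would place $\eta$ in $\calF^{\quasile_\rmnu}_{(A, \clnu{FPT})} \setminus \calF^{\quasile_\rmnu}_{(B, \clnu{FPT})}$, the desired contradiction.

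The hard part --- and, I expect, the genuine obstruction --- is that the \clnu{FPT}-filter is a very coarse invariant: up to $\quasile_\rmnu$-equivalence it records only which exponent-bounded families of \cl{P}-slices cover $\binary^+$, and not the membership information certified by a slice's witnessing approximation. In particular, every \cl{P}-bi-immune set induces the \emph{same} filter, namely the filter of parameterizations all of whose layers are finite --- for such a set a subset is a \cl{P}-slice exactly when it is finite, and every finite set is an \clO{n}-slice for every set. Since there are continuum-many \cl{P}-bi-immune sets but only countably many sets in \cl{P}, the equivalence cannot hold verbatim for all $A, B$: two sets drawn independently at random are each \cl{P}-bi-immune almost surely, as is their symmetric difference, so such a pair has equal filters on the left but fails the right-hand side. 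I would therefore expect the viable target to be a \emph{restricted} statement --- say, for $A$ and $B$ ranging over \levelable{\cl{P}} sets, whose \cl{P}-slices can all be infinitely extended and whose filters consequently carry strictly more information --- and I would expect that settling even that requires new structural ideas, since reconstructing a set up to \cl{P}-variation from its filter is exactly the kind of rigidity that makes the Berman--Hartmanis conjecture hard.
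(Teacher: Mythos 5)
This statement is one of the paper's \emph{conjectures}: the paper supplies no proof of it (only the neighbouring Theorems~\ref{thm:nufptsymdiffeq} and~\ref{thm:nufptsymdiffsubeq} as partial evidence), so there is no argument of the authors to compare yours against. Your ``$\Longleftarrow$'' direction is correct and is exactly the reasoning by which the paper passes from its verbal conjecture to the displayed one: every $\eta$ in $\calL^{\quasile_\rmnu}_{\clnu{FPT}}$ has, for a single exponent $c$, all slices $\eta_k$ decidable in time $\bigO(n^c)$, hence puts $\emptyset$ (indeed every set in \cl{P}) in \clnu{FPT}; consequently $\calF^{\quasile_\rmnu}_{(C, \clnu{FPT})} = \calF^{\quasile_\rmnu}_{(\emptyset, \clnu{FPT})}$ holds precisely when the full parameterization $\binary^+ \times \Omega$ lies in the filter, that is, precisely when $C \in \cl{P}$, and Theorem~\ref{thm:nufptsymdiffeq} with $X \deq A \symdiff B$ then gives $\calF^{\quasile_\rmnu}_{(A, \clnu{FPT})} = \calF^{\quasile_\rmnu}_{(B, \clnu{FPT})}$. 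Your sketched contrapositive plan for ``$\Longrightarrow$'' is never carried out, but that is moot in view of your final paragraph.

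That final observation is, as far as I can verify, sound, and it is the substantive content of your proposal: for a \cl{P}-bi-immune set $A$ one has $(A, \eta) \in \clnu{FPT}$ exactly when every slice $\eta_k$ is finite (finite slices are \clO{n}-slices for \emph{every} set via table lookup, so the exponent $c = 1$ works uniformly, while bi-immunity forbids infinite \cl{P}-slices), so all \cl{P}-bi-immune sets induce literally the same filter --- the single equivalence class of all-finite-slice parameterizations, which by Lemma~\ref{lem:bounded} is the greatest element of the lattice; the paper itself records this collapse, for \clXnu{P}, directly after Corollary~\ref{cor:nuxpimix}. Fixing a \cl{P}-bi-immune $A$, only countably many $B$ satisfy $A \symdiff B \in \cl{P}$, whereas uncountably many \cl{P}-bi-immune $B$ exist (your random-pair argument even makes $A \symdiff B$ bi-immune, and a routine simultaneous diagonalization yields such a pair of decidable sets). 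For such a pair the left-hand side of the conjectured equivalence holds, yet $\binary^+ \times \Omega$ belongs to $\calF^{\quasile_\rmnu}_{(\emptyset, \clnu{FPT})}$ and not to $\calF^{\quasile_\rmnu}_{(A \symdiff B, \clnu{FPT})}$, so the right-hand side fails. So the shortfall here is not in your argument but in the conjecture's universal quantification: as literally stated it is refuted by bi-immune pairs, and your conclusion that it can only be maintained under a restriction excluding this collapse --- for instance to \levelable{\cl{P}} sets, whose filters carry more than the trivial information --- is the correct one and worth communicating to the authors.
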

If true, a separation result would follow.
Namely, for any two sets $A, B$ of which the symmetric difference is outside \cl{P}, there would exist a set in \cl{P} separating $A$ and $B$ in the sense that the separating set contains a \cl{P}-core for precisely one of the two.

\section{Uniform Parameterized Complexity}
\label{sec:uniform}

The classes \clXnu{P} and \clnu{FPT} were nonuniform in two ways.
Firstly, we did not require the parameter dependence of the running time of the \cl{P}-approximations of slices to have a uniformly computable upper bound as a function of the parameter value to which the slices belong.
Put differently, in the equivalent characterizations of the complexity classes on page~\pageref{eq:xpfpt}, we observe that the \clXnu{} operator is used on (a union of) \clO{n^c} complexity classes without restrictions to the constant hidden in the $\bigO$-notation.
Secondly, we did not require the existence of a procedure to instantiate the \cl{P}-approximations of the slices from the parameter values to which the slices belong.

For the majority of this section, we shall consider two additional complexity classes for fixed-parameter tractability, both exhibiting more uniformity than \clnu{FPT}.
The first of them is a semi-uniform variant, which is known as \emph{uniform \cl{FPT}} by \citet{downey1999parameterized}.
In this variant the \cl{P}-approximations are required to form a uniform collection.
\begin{definition}
  A set $A$ is in \clsu{FPT} with parameterization $\eta$ if there is a constant $c$ and a Turing machine $\Psi$ taking two inputs, such that for every parameter value $k$ the partial application of $\Psi$ to $k$ yields an \clO{n^c}-approximation, $\Psi(k, \cdot)$, for $A$ with domain $\eta_k$.
\end{definition}
The second of our classes is a fully uniform variant, for which we use no subscript.
In \citep{downey1999parameterized}, this class is known as \emph{strongly uniform \cl{FPT}}.
\begin{definition}
  A set $A$ is in \cl{FPT} with parameterization $\eta$ if it is in \clsu{FPT} with constant $c$ and Turing machine $\Psi$ and there is a computable function $f$ such that the partial application of $\Psi$ to $k$ has a running time bounded by $f(k) \cdot n^c$.
\end{definition}

The parameterizations that can put sets in \clsu{FPT} on \cl{FPT} can be identified.
This will help us characterize the structure of $\calL^{\quasile_\rmnu}_\clsu{FPT}$ and $\calL^\quasile_\cl{FPT}$.
\begin{lemma}
\label{lem:fptdecidable}
  A parameterization $\eta$ is in $\calL^{\quasile_\rmnu}_\clsu{FPT}$ if and only if there is a constant $c$ such that, uniformly in $k$, membership of any $x$ in $\eta_k$ can be decided by a decision procedure with a running time in $\bigO(n^c)$.

  Moreover, $\eta$ is in $\calL^\quasile_\cl{FPT}$ if and only if there is additionally a computable function $f$ such that membership of any $(x, k)$ in $\eta$ can be decided by a decision procedure with a running time bounded by $f(k) \cdot \length{x}^c$.
\end{lemma}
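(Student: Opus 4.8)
The plan is to prove both directions of both equivalences, treating the \clsu{FPT} case first and then adding the extra ingredient for \cl{FPT}. For the \clsu{FPT} claim, the ($\Rightarrow$) direction is almost immediate: if $\eta \in \calL^{\quasile_\rmnu}_\clsu{FPT}$, then by definition there is a set $A$, a constant $c$, and a two-input Turing machine $\Psi$ such that for every $k$ the partial application $\Psi(k, \cdot)$ is an \clO{n^c}-approximation for $A$ with domain $\eta_k$. But then the machine that, on input $(x,k)$, simulates $\Psi(k, x)$ and outputs $1$ if $\Psi(k,x) \in \{0,1\}$ and $0$ otherwise, decides membership of $x$ in $\eta_k$ uniformly in $k$ within $\bigO(n^c)$ steps. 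For the ($\Leftarrow$) direction, suppose such a decision procedure $D$ exists with running time $\bigO(n^c)$. I would then build a witnessing pair: let $A$ be any trivial set, say $\emptyset$, and let $\Psi(k, x)$ run $D(x,k)$ and, if $D$ accepts, output $0$ (which is consistent with $\emptyset$, hence $\Psi(k, \cdot)$ is a valid partial decision procedure for $\emptyset$), and otherwise diverge or output something outside $\{0,1\}$. The domain of $\Psi(k, \cdot)$ is exactly $\eta_k$, and the running time bound is $\bigO(n^c)$ uniformly in $k$, so $(\emptyset, \eta) \in \clsu{FPT}$ and therefore $\eta \in \calL^{\quasile_\rmnu}_\clsu{FPT}$.

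For the \cl{FPT} claim, in the ($\Rightarrow$) direction the definition of \cl{FPT} hands us exactly the extra computable $f$ bounding the running time of $\Psi(k, \cdot)$ by $f(k) \cdot n^c$, so membership of $(x,k)$ in $\eta$ is decided within $f(k) \cdot \length{x}^c$ steps by the same simulation as above. For the ($\Leftarrow$) direction, given $D$ deciding membership of $(x,k)$ in $\eta$ within $f(k)\cdot\length{x}^c$ steps, I would again take $A = \emptyset$ and define $\Psi(k,x)$ to run $D(x,k)$ and output $0$ on acceptance, diverge otherwise; the running time of $\Psi(k,\cdot)$ is then bounded by $f(k)\cdot n^c$ up to the simulation overhead, which can be absorbed into a slightly larger computable function $f'$ and the same exponent (or a marginally larger one). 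This gives $(\emptyset, \eta) \in \cl{FPT}$.

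There is one subtlety I would be careful about, and it is the main obstacle: the set $\eta_k$ must be a \cl{P}-slice of the \emph{witnessing} set $A$, and $\eta$ must genuinely be a parameterization (each $\{k : (x,k)\in\eta\}$ a nonempty up-set), so I cannot pick $A$ completely freely — but choosing $A = \emptyset$ works because every subset of $\binary^+$ is trivially a \cl{P}-slice for $\emptyset$ (the \cl{P}-approximation simply never outputs $1$), and the parameterization axioms are hypotheses on $\eta$, not something to be established. I should also double-check that in the ($\Leftarrow$) direction the domain of the constructed $\Psi(k,\cdot)$ is precisely $\eta_k$: since $D(x,k)$ halts on every input (it is a decision procedure) and we convert its accept/reject into output $0$/divergence, the domain is exactly $\{x : D(x,k) \text{ accepts}\} = \eta_k$, as desired. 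Finally, I would remark that the same argument works verbatim with \clO{n^c} in place of \cl{P} throughout, and note that the two stated equivalences together explain precisely which parameter spaces and parameterizations are admissible as domains of uniform slice families, completing the bridge between the abstract order-theoretic picture and concrete polynomial-time computation.
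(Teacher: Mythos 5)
Your proposal is correct and takes essentially the same route as the paper: the forward direction modifies the witnessing machine $\Psi$ into a uniform decider for $\eta$, and the converse observes that a trivial set such as $\emptyset$ is put in \clsu{FPT} (respectively \cl{FPT}) by every parameterization meeting the stated running-time bound. One small fix: on rejection your constructed $\Psi(k,\cdot)$ should halt with an output outside $\{0,1\}$ rather than diverge, since an \clO{n^c}-approximation must halt within its time bound; the alternative you already mention in your construction suffices.
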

\begin{proof}
  When a set is in \clsu{FPT} or \cl{FPT} with some parameterization, the Turing machine witnessing such a membership can easily be modified to decide membership of the parameterization.
  Conversely, a trivial set such as the empty set is put in \clsu{FPT} or \cl{FPT} by every parameterization that meets the respective running time bound.
\end{proof}

Thus the parameterizations relevant for semi-uniform and strongly uniform fixed-parameter tractability are decidable in a time bound that is polynomial in the length of their first component.
This makes that convergence of computation with a given parameter value becomes less of a \emph{promise} for the input, as it was for our nonuniform classes, and more of a \emph{property} of the input.
Moreover, it follows from the previous lemma that the minimization function $\mu_\eta$ of a parameterization $\eta$ taken from $\calL^{\quasile_\rmnu}_\clsu{FPT}$ or $\calL^\quasile_\cl{FPT}$ is computable.
Having identified the parameterizations relevant for semi-uniform and strongly uniform fixed-parameter tractability, we explore the algebraic structure they form.
\begin{lemma}
  $\calL^{\quasile_\rmnu}_\clsu{FPT}$ and $\calL^\quasile_\cl{FPT}$ are bounded lattices.
\end{lemma}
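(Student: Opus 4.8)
The plan is to follow the template of the proof of Lemma~\ref{lem:nulattices}, using the decidability characterization of Lemma~\ref{lem:fptdecidable} in place of the \cl{P}-slice characterization used there. As in the proofs of Lemmas~\ref{lem:lattices} and~\ref{lem:nulattices}, I would argue only for greatest lower bounds explicitly; least upper bounds are obtained symmetrically, with the disjunction in the combined parameterization replaced by a conjunction.

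For the lattice property, let \cl{C} be \clsu{FPT} or \cl{FPT}, let $\le$ be the order attached to it in the statement, and let $\eta_1 \subseteq \binary^+ \times \Omega_1$ and $\eta_2 \subseteq \binary^+ \times \Omega_2$ lie in $\calL^\le_\cl{C}$. I would form the parameterization $\eta$ over $\Omega_1 \times \Omega_2$ exactly as in the proof of Lemma~\ref{lem:lattices}; by that proof it is a greatest lower bound of $\eta_1$ and $\eta_2$ in the lattice of \emph{all} parameterizations for the order $\le$, so the only thing to verify is $\eta \in \calL^\le_\cl{C}$. This is where Lemma~\ref{lem:fptdecidable} enters: it supplies constants $c_1, c_2$ (and, in the \cl{FPT} case, computable functions $f_1, f_2$) witnessing the decidability of $\eta_1$ and $\eta_2$; to decide $(x, (k_1, k_2)) \in \eta$ one decodes the pair, runs both deciders, and accepts if either does, at a cost of $\bigO(\length{x}^{\max\{c_1,c_2\}})$ together with an additive overhead depending only on $\length{(k_1, k_2)}$, and, in the \cl{FPT} case, bounded by $g(k_1,k_2) \cdot \length{x}^{\max\{c_1,c_2\}}$ for a suitable computable $g$. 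By Lemma~\ref{lem:fptdecidable} again, $\eta \in \calL^\le_\cl{C}$, so $\calL^\le_\cl{C}$ contains all greatest lower bounds (and, dually, all least upper bounds).

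For boundedness I would reuse the two parameterizations from the proof of Lemma~\ref{lem:bounded}. As a least element I take $\eta^- \deq \binary^+ \times \bbN$: every set in \cl{P} is in \cl{FPT}, hence also in \clsu{FPT}, with $\eta^-$ (via a two-input machine that ignores its parameter and runs the polynomial-time decider), so $\eta^- \in \calL^\le_\cl{C}$, and since $\gap_{\eta^-, \eta'}$ is bounded by a constant, $\eta^- \quasile_\rmnu \eta'$ and $\eta^- \quasile \eta'$ for every $\eta'$. As a greatest element I take $\eta^+ \deq \{(x, n) \st \length{x} \le n\} \subseteq \binary^+ \times \bbN$: membership in $\eta^+$ is decidable in linear time uniformly in $n$ and with a constant parameter-factor, so $\eta^+ \in \calL^\le_\cl{C}$ by Lemma~\ref{lem:fptdecidable}, and $\gap_{\eta', \eta^+}(n) = \max\{\mu_{\eta'}(x) \st \length{x} \le n\}$ is a maximum over a finite set, hence finite, which gives $\eta' \quasile_\rmnu \eta^+$. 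In the \cl{FPT} case one additionally observes that $\mu_{\eta'}$ is computable whenever $\eta' \in \calL^\quasile_\cl{FPT}$ (the remark following Lemma~\ref{lem:fptdecidable}), so that the same maximum is in fact computable in $n$, giving $\eta' \quasile \eta^+$ as well.

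The routine part is the bookkeeping in the second paragraph — tracking the encodings of the combined parameter space and taking the maximum of the two exponents while folding the decoding overhead into a parameter-dependent factor. The only genuinely delicate point is the greatest-element argument for \cl{FPT}: it is essential that $\eta'$ ranges only over $\calL^\quasile_\cl{FPT}$, because it is exactly the uniformity built into that set, via computability of $\mu_{\eta'}$, that upgrades ``$\eta^+$ is a greatest element in the nonuniform order'' to ``$\eta^+$ is a greatest element in the uniform order''.
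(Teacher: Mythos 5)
Your proposal is correct and matches the paper's own argument: the greatest lower bound from Lemma~\ref{lem:lattices} stays in the lattice by the decidability characterization of Lemma~\ref{lem:fptdecidable}, the bounds are the two parameterizations from Lemma~\ref{lem:bounded}, and the delicate point you single out --- that computability (of $\mu_{\eta'}$, equivalently of the finite maxima over the slices $\{x \st \length{x} \le n\}$) is what upgrades the greatest-element property from $\quasile_\rmnu$ to $\quasile$ --- is exactly the subtlety the paper's proof highlights. No gaps; nothing further needed.
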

\begin{proof}
  We shall phrase the proof as a proof for $\calL^\quasile_\cl{FPT}$.
  Only the first part is relevant for $\calL^{\quasile_\rmnu}_\clsu{FPT}$ as the rest of the proof for this partially ordered set is identical to Lemma~\ref{lem:nulattices}.

  That $\calL^\quasile_\cl{FPT}$ is a lattice follows from Lemma~\ref{lem:fptdecidable}.
  The greatest lower bound of two decidable parameterizations as constructed in the proof of Lemma~\ref{lem:lattices} is again decidable, hence $\calL^\quasile_\cl{FPT}$ contains greatest lower bounds.
  As before, the remainder of proving that $\calL^\quasile_\cl{FPT}$ is a lattice is now routine.

  Similar to Lemma~\ref{lem:nulattices}, we observe that $\binary^+ \times \Omega$ is a least element in $\calL^\quasile_\cl{FPT}$ for an arbitrary parameter space $\Omega$.
  The case for a greatest element is more subtle.
  In $\calL^\quasile_\cl{FPT}$, a parameterization is below another not simply when there is a finite bound on the gap function of the two, but only when there is such a bound that is computable.
  For this reason, given the parameterization $\eta \deq \{(x, k) \st \length{x} \le k\}$ of Lemma~\ref{lem:bounded}, the argument that for all slices in $\eta$ are finite is not sufficient to show that $\eta$ is a greatest element.
  Surely, it follows from Lemma~\ref{lem:fptdecidable} that $\eta$ is in $\calL^\quasile_\cl{FPT}$.
  That it is indeed also a greatest element requires that, uniformly in $k$, we can compute the number of elements in $\eta_k$.
  Because of this, the maximum
  \begin{equation*}
    \gap_{\eta', \eta}(n) = \max\{\mu_{\eta'}(x) \st \length{x} \le n\}
  \end{equation*}
  is computable and $\eta$ is a greatest element in $\calL^\quasile_\cl{FPT}$.
\end{proof}

While all parameterizations in $\calL^{\quasile_\rmnu}_\clsu{FPT}$ of which all slices are finite are $\quasile_\rmnu$-equivalent, not all parameterizations in $\calL^\quasile_\cl{FPT}$ of which all slices are finite are $\quasile$-equivalent.
In particular, the number of elements in a finite set in \cl{P} need not be computable.
It is for this reason that the existence of a greatest element in $\calL^\quasile_\cl{FPT}$ requires a more delicate proof than the existence of a greatest element in $\calL^{\quasile_\rmnu}_\clsu{FPT}$.

We have seen that the minimization function for parameterizations that put a set in \clsu{FPT} or \cl{FPT} are computable.
As a result of this, we find that the sets that can be put in \clsu{FPT} or \cl{FPT} are all decidable.
Indeed, the sets that can be put in \clsu{FPT} or \cl{FPT} are \emph{precisely} the decidable sets.
\begin{lemma}
  For every decidable set $A$, $\calF^{\quasile_\rmnu}_{(A, \clsu{FPT})}$ is a filter in $\calL^{\quasile_\rmnu}_\clsu{FPT}$ and $\calF^\quasile_{(A, \cl{FPT})}$ is a filter in $\calL^\quasile_\cl{FPT}$.
\end{lemma}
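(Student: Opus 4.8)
The plan is to follow the template of Lemma~\ref{lem:nuxpfilter}, verifying for each of the two partially ordered sets that $\calF$ is a nonempty up-set of the ambient lattice which is closed under greatest lower bounds. Two facts proved earlier will do most of the work: Lemma~\ref{lem:fptdecidable}, which converts membership of a parameterization in $\calL^{\quasile_\rmnu}_\clsu{FPT}$ (resp.\ in $\calL^\quasile_\cl{FPT}$) into uniform polynomial-time decidability of its slices (resp.\ of the parameterization itself, with a computable parameter-dependent factor); and the fact that an \clO{n^c}-approximation halts on \emph{every} input, so a trial run of such a machine always terminates within its time bound. For nonemptiness I would take the greatest element $\eta^{\ast} \deq \{(x, k) \st \length{x} \le k\}$ of Lemma~\ref{lem:bounded}: since $A$ is decidable, the machine that on input $(k, x)$ halts with a value other than $0$ and $1$ when $\length{x} > k$ and otherwise runs a fixed decision procedure for $A$ is, for every $k$, an \clO{n}-approximation for $A$ with domain $(\eta^{\ast})_k$, and on inputs of length at most $k$ its running time is bounded by a number computable from $k$ (run the decision procedure on the finitely many strings of length at most $k$ and take the maximum). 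Hence $(A, \eta^{\ast})$ lies in \cl{FPT}, a fortiori in \clsu{FPT}. This is the only place decidability of $A$ is needed; for undecidable $A$ the set $\calF$ is empty, hence not a filter.

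For the up-set property I would take $\eta_0 \in \calF^{\quasile_\rmnu}_{(A, \clsu{FPT})}$ with witnessing machine $\Psi_0$, and $\eta \in \calL^{\quasile_\rmnu}_\clsu{FPT}$ with $\eta_0 \quasile_\rmnu \eta$, and argue that $(A, \eta) \in \clsu{FPT}$. As in Lemma~\ref{lem:nuxpfilter}, $\eta_0 \quasile_\rmnu \eta$ gives, for every parameter value $k$ of $\eta$, a finite number $m_k \deq \gap_{\eta_0, \eta}(\length{k})$ such that every $x \in \eta_k$ already lies in $(\eta_0)_{k'}$ for some parameter value $k'$ of $\eta_0$ of length at most $m_k$. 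The witness for $(A, \eta)$ then proceeds on input $(k, x)$ as follows: decide whether $x \in \eta_k$ — possible uniformly in polynomial time by Lemma~\ref{lem:fptdecidable} — and if not, halt with a value other than $0$ and $1$; otherwise search through the parameter values $k'$ of $\eta_0$ in order of non-decreasing length (which may be enumerated, since the parameter spaces at issue may be taken decidable), test $x \in (\eta_0)_{k'}$ — decidable because $\eta_0$ is a decidable parameterization — and at the first success return $\Psi_0(k', x)$. This is correct because each $\Psi_0(k', \cdot)$ is a partial decision procedure for $A$; it halts on every $x \in \eta_k$ because some admissible $k'$ of length at most $m_k$ exists; and for each fixed $k$ it inspects only the finitely many parameter values of length at most $m_k$, so its running time stays polynomial of a fixed degree. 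Hence $\eta \in \calF^{\quasile_\rmnu}_{(A, \clsu{FPT})}$. The \cl{FPT} case is the same with $\quasile$ in place of $\quasile_\rmnu$: now $m_k$ is bounded by a \emph{computable} function of $k$, and the running times of $\Psi_0$ and of the decider for $\eta_0$ are computable, so the number of trials, and hence the resulting factor $f(k)$, is computable in $k$.

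I expect this up-set step to be the crux. The earlier results guarantee only, \emph{for each} $k$ separately, the \emph{existence} of a good parameter value of $\eta_0$ and hence of a fast partial decision procedure with domain $\eta_k$; to land inside \clsu{FPT} (or \cl{FPT}) one needs a \emph{single} machine serving all $k$ at once, and the real work is organizing the bounded search so that its length — and, in the strongly uniform case, its running time — depends on $k$ alone and never on the particular $x$.

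The remaining point, closure under greatest lower bounds, is routine, mirroring the final paragraph of Lemma~\ref{lem:nuxpfilter}. Given $\eta_1, \eta_2 \in \calF^{\quasile_\rmnu}_{(A, \clsu{FPT})}$ with witnessing machines $\Psi_1, \Psi_2$ and exponents $c_1, c_2$, the greatest lower bound $\eta$ built in the proof of Lemma~\ref{lem:lattices} satisfies $\eta_{(k_1, k_2)} = (\eta_1)_{k_1} \cup (\eta_2)_{k_2}$, so the machine that on input $((k_1, k_2), x)$ runs both $\Psi_1(k_1, x)$ and $\Psi_2(k_2, x)$ and returns whichever of the two outputs a value in $\{0, 1\}$ — the two cannot disagree on such a value, both being partial decision procedures for $A$ — is an \clO{n^{\max\{c_1, c_2\}}}-approximation for $A$ with exactly that domain; in the \cl{FPT} case the computable running-time factors of $\Psi_1$ and $\Psi_2$ combine by addition. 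Since this $\eta$ already lies in the ambient lattice (by the lemma, just above, that these partially ordered sets are bounded lattices), $\calF$ contains the greatest lower bound of any two of its members. Together with nonemptiness and the up-set property, this shows that $\calF^{\quasile_\rmnu}_{(A, \clsu{FPT})}$ is a filter in $\calL^{\quasile_\rmnu}_\clsu{FPT}$ and that $\calF^\quasile_{(A, \cl{FPT})}$ is a filter in $\calL^\quasile_\cl{FPT}$.
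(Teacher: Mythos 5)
Your proposal is correct and takes essentially the same route as the paper's proof: nonemptiness from decidability of $A$ via an explicit witness parameterization (the paper uses the step-count parameterization $\{(x,k)\st\text{$\Phi$ decides $x$ within $k$ steps}\}$ where you use the greatest element plus a brute-force time bound, both of which work), the up-set property via a single machine that first tests membership in the higher parameterization and then performs a search through parameter values of the lower one, bounded through the gap function (computably bounded in the \cl{FPT} case), and closure under greatest lower bounds by combining approximations exactly as in Lemma~\ref{lem:nuxpfilter}/Lemma~\ref{lem:nufptfilter}. Your explicit assumption that the relevant parameter spaces may be taken decidable matches the paper's own implicit hedge (it enumerates ``an unbounded subset of'' the parameter space), so no gap relative to the paper.
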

\begin{proof}
  Let $A$ be any decidable set.
  We shall first prove that $\calF^{\quasile_\rmnu}_{(A, \clsu{FPT})}$ and $\calF^\quasile_{(A, \cl{FPT})}$ are nonempty.
  For this, let $\Phi$ be a decision procedure for $A$ and consider the parameterization
  \begin{equation*}
    \eta \deq \{(x, k) \st \text{$\Phi$ decides membership of $x$ in at most $k$ steps}\}.
  \end{equation*}
  Immediately, we see that $\eta$ is in both $\calF^{\quasile_\rmnu}_{(A, \clsu{FPT})}$ and $\calF^\quasile_{(A, \cl{FPT})}$.
  The running times of the corresponding approximations for $A$ do not depend on the length of the input at all.

  Next, we shall prove that $\calF^\quasile_{(A, \cl{FPT})}$ is an up-set of $\calL^\quasile_\cl{FPT}$.
  The case for $\calF^{\quasile_\rmnu}_{(A, \clsu{FPT})}$ is simpler.
  Suppose we have a parameterization $\eta$ in $\calL^\quasile_\cl{FPT}$ and a parameterization $\eta' \subseteq \binary^+ \times \Omega'$ in $\calF^\quasile_{(A, \cl{FPT})}$ that is below $\eta$.
  We need to show that $A$ is in \cl{FPT} with parameterization $\eta$ as well.
  Let $\Psi$ be the Turing machine witnessing that $A$ is in \cl{FPT} with $\eta'$, and consider the following procedure on input $(x, k)$.
  \begin{enumerate}
  \item
    If $(x, k)$ is not in $\eta$, return nothing.
  \item
  \label{alg:upset:decide}
    For $\omega'$ in (an unbounded subset of) $\Omega'$ do:
    \\\-\quad If $\Psi(x, \omega') \in \{0, 1\}$ then return $\Psi(x, \omega')$.
  \end{enumerate}
  For precisely those $(x, k)$ that are in $\eta$, this procedure decides membership of $x$ in $A$.
  The time it needs to do so depends on where in the enumeration of (an unbounded subset of) $\Omega'$ we encounter an $\omega'$ such that $(x, \omega')$ is in $\eta'$.
  For each of the values in the enumeration up to and including this $\omega'$, the time needed for step~\ref{alg:upset:decide} is, for some $c$ depending on $\Psi$, in $\bigO(n^c)$ with a hidden constant depending on the parameter value.
  In the case of \cl{FPT} the dependence on the parameter value is upper bounded by some computable function.
  Because we are considering the uniform order on parameterizations, $\quasile$, it is furthermore possible to compute an upper bound on the set of parameter values considered in step~\ref{alg:upset:decide} of the procedure.
  Therefore, the running time of the above procedure is so that it witnesses that $A$ is in \cl{FPT} with parameterization $\eta$ and we have shown that $\calF^\quasile_{(A, \cl{FPT})}$ is an up-set of $\calL^\quasile_\cl{FPT}$.

  It remains to show that $\calF^{\quasile_\rmnu}_{(A, \clsu{FPT})}$ and $\calF^\quasile_{(A, \cl{FPT})}$ include all greatest lower bounds of pairs of their respective members.
  This can be done in the same way as it was done in the proof of Lemma~\ref{lem:nufptfilter} as that proof is constructive insofar concerned with greatest lower bounds.
\end{proof}
This lemma, together with Lemma~\ref{lem:nuxpfilter} and Lemma~\ref{lem:nufptfilter} are possible because of the way we have defined the orders on parameterizations.
In other words, these lemmas justify the definitions of the nonuniform and uniform order on parameterizations.

For the semi-uniform filter, we have an analogue of Theorem~\ref{thm:nufptprincipal}.
\begin{theorem}
\label{thm:sufptprincipal}
  For any set $A$ that is \immune{\cl{P}}, the filter $\calF^{\quasile_\rmnu}_{(A, \clsu{FPT})}$ is principal in $\calL^{\quasile_\rmnu}_\clsu{FPT}$.
\end{theorem}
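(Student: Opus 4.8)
The plan is to parallel the proof of Theorem~\ref{thm:nufptprincipal}, but with extra care, since semi-uniformity forces all slices of a parameterization to be produced by a single algorithm. First I would remark that we may assume $A$ is decidable, since otherwise $\calF^{\quasile_\rmnu}_{(A, \clsu{FPT})}$ is empty and there is nothing to prove. As $A$ is \immune{\cl{P}}, Theorem~\ref{thm:maximal} provides a maximal \cl{P}-slice $S$ for $A$; since $S$ is a \cl{P}-slice I may fix a constant $c$ and an \clO{n^c}-approximation $\Phi_0$ for $A$ with $\dom(\Phi_0) = S$, and I also fix a total decision procedure $M_A$ for $A$.

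Next I would exhibit the candidate principal parameterization, over the parameter space $\bbN$, namely
\begin{equation*}
  \eta^* \deq \{(x, k) \st x \in S\} \cup \{(x, k) \st \length{x} \le k\},
\end{equation*}
so that $\eta^*_k = S \cup \{x \st \length{x} \le k\}$. I would check that this is a parameterization (the fibre over $x \in S$ is all of $\bbN$, and the fibre over $x \notin S$ is $\{k \st \length{x} \le k\}$, both nonempty up-sets of $\bbN$) and that $(A, \eta^*) \in \clsu{FPT}$ via the machine $\Psi(k, x)$ that runs $\Phi_0(x)$ and returns that output if it is $0$ or $1$, otherwise runs $M_A(x)$ when $\length{x} \le k$ and returns its output, and otherwise returns nothing: its $k$-section has domain exactly $\eta^*_k$, agrees with $A$ there, and runs in time $\bigO(n^c)$ with a hidden constant depending on $k$ (the $M_A$-computations on inputs of length at most $k$ cost at most a constant depending only on $k$), which is exactly what \clsu{FPT} permits.

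The heart of the argument is showing $\eta^* \quasile_\rmnu \eta'$ for every $\eta' \in \calF^{\quasile_\rmnu}_{(A, \clsu{FPT})}$. Each fibre $\eta'_{k'}$ is the domain of an \clO{n^{c'}}-approximation for $A$, hence a \cl{P}-slice for $A$; since $\binary^+ \backslash S$ is a \cl{P}-core for $A$ (as shown in the proof of Theorem~\ref{thm:maximal}), the set $\eta'_{k'} \backslash S$ is finite. Because the encoding of the parameter space of $\eta'$ is injective, only finitely many parameter values $k'$ satisfy $\length{k'} \le n$, so $\{x \st \mu_{\eta'}(x) \le n\}$ is a finite union of such fibres and is therefore contained in $S \cup F$ for some finite set $F$ depending on $n$. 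Since $\mu_{\eta^*}$ is constant on $S$ and finite everywhere, $\gap_{\eta^*, \eta'}(n) = \max\{\mu_{\eta^*}(x) \st \mu_{\eta'}(x) \le n\}$ is finite for every $n$, that is, $\eta^* \quasile_\rmnu \eta'$. Hence $\eta^*$ is a least element of $\calF^{\quasile_\rmnu}_{(A, \clsu{FPT})}$ and the filter is principal.

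The only step requiring more than bookkeeping is handling semi-uniformity in the ``$\eta^* \in \calF^{\quasile_\rmnu}_{(A,\clsu{FPT})}$'' direction: one must assemble a single machine whose $k$-sections are all \clO{n^c}-approximations, which is exactly where decidability of $A$ and the finite tail $\{x \st \length{x} \le k\}$ enter; keeping $\eta^*$ below every competitor is then essentially the core-theoretic containment $\eta'_{k'} \subseteq S \cup (\text{finite})$, which is routine once Theorem~\ref{thm:maximal} is invoked.
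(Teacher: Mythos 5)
Your proof is correct and takes essentially the same route as the paper, whose proof is simply ``identical to the proof of Theorem~\ref{thm:nufptprincipal}'': both hinge on the maximal \cl{P}-slice $S$ supplied by being \immune{\cl{P}}, together with the fact that every slice of any competing parameterization in $\calF^{\quasile_\rmnu}_{(A, \clsu{FPT})}$ extends $S$ by at most finitely many elements. Your explicit parameterization $\eta^*$, the single witnessing machine built from an \clO{n^c}-approximation with domain $S$ plus a decider for $A$ on the finite tail, and the direct computation of $\gap_{\eta^*,\eta'}$ merely spell out the semi-uniformity details the paper leaves implicit.
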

\begin{proof}
  Identical to the proof of Theorem~\ref{thm:nufptprincipal}.
\end{proof}

For being principal in $\calL^\quasile_\cl{FPT}$, we would need an additional computable upper bound to the gap between parameterizations.
Still, a parameterization with imix would be below one without it even in the nonuniform order.
That is, in accordance with Lemma~\ref{lem:imix}, the nonuniform order on parameterizations distinguishes parameterizations with imix from those without.
Therefore, the corollary to Theorem~\ref{thm:nufptprincipal} works for the uniform case too.
The definitions of levelability and immunity with respect to semi-uniform and uniform parameterized complexity classes are the same as those with respect to nonuniform ones, with the sole difference that for uniform complexity classes they now refer to the uniform order on parameterizations.
\begin{corollary}
\label{cor:fptimmune}
  If a set is \immune{\cl{P}}, it is \immune{\clsu{FPT}} and \immune{\cl{FPT}}.
\end{corollary}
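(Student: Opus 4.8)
The plan is to retrace the route from Theorem~\ref{thm:nufptprincipal} to Corollary~\ref{cor:nufptimmune}, substituting the semi-uniform and strongly uniform machinery where necessary and relying on Theorem~\ref{thm:sufptprincipal} (and on its fixed-exponent analogue, obtained by replacing \cl{P} with \clO{n^c}). So let $A$ be \immune{\cl{P}}, fix a maximal \cl{P}-slice $S$ for $A$, and let $c_0$ be a constant with $S$ an \clO{n^{c_0}}-slice. By Theorem~\ref{thm:maximal}, $\binary^+ \backslash S$ is a maximal \cl{P}-core for $A$; hence it lies in \cl{P} and meets every \cl{P}-slice for $A$ in a finite set.

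The first step is to produce, for every $c \ge c_0$, a parameterization $\eta$ over $\bbN$ that puts $A$ in the semi-uniform fixed-parameter class with exponent $c$ --- and, when $A$ is decidable, in the strongly uniform one --- and that has a parameter value $k_0$ with $\eta_{k_0} = S$. Here I would take $\eta_k \deq S \cup \{y_1, \ldots, y_k\}$ with $y_1, y_2, \ldots$ an enumeration of $\binary^+ \backslash S$: this is a genuine parameterization, and each $\eta_k$ is an \clO{n^c}-slice for $A$ since the finitely many tabled strings only affect the hidden constant. A single machine $\Psi(k, x)$ witnessing semi-uniformity decides $x \in S$ in time \clO{n^c} and otherwise enumerates $\binary^+ \backslash S$ far enough to build the length-$k$ table, looks $x$ up, and returns the correct membership bit; for the strongly uniform variant the cost of assembling this table is a computable function of $k$ alone, giving an $f(k)\cdot\length{x}^c$ bound.

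The second step is to invoke the argument behind Theorem~\ref{thm:nufptprincipal} and Theorem~\ref{thm:sufptprincipal}: since $\eta$ has a parameter value whose column equals the maximal \cl{P}-slice $S$, it is a least element, under $\quasile_\rmnu$, of the filter of that class for $A$. Moreover $\eta$ has no imix, because for each $k$ the set $\eta_k \backslash S$ lies inside the \cl{P}-core $\binary^+ \backslash S$ and is therefore finite, so the parameter value $k_0$ already witnesses failure of imix. Now fix $c \ge c_0$ and let $\eta' \quasile \eta$ be any parameterization with which $A$ lies in the semi-uniform or strongly uniform class with exponent $c$; in either case $A$ lies in the semi-uniform class with exponent $c$ with $\eta'$, and $\eta' \quasile_\rmnu \eta$ because $\quasile \subset \quasile_\rmnu$. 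By leastness of $\eta$ we then have $\eta \quasile_\rmnu \eta'$, so $\eta'$ is $\quasile_\rmnu$-equivalent to $\eta$, and by Lemma~\ref{lem:imix} it cannot have imix. As $c$ was arbitrary, the tailored definitions of immunity for the fixed-parameter classes give that $A$ is \immune{\clsu{FPT}} and \immune{\cl{FPT}}.

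I expect the main obstacle to be the bookkeeping in the first step: checking that one machine can uniformly build the growing lookup tables over the \cl{P}-core and, in the strongly uniform case, that this extra work is bounded by a \emph{computable} function of the parameter --- which quietly uses that $A$ is decidable. For non-decidable $A$ the claim is vacuous, since no parameterization places such a set in \clsu{FPT} or \cl{FPT}; so the statement really concerns decidable sets, in keeping with the setting of the preceding lemma.
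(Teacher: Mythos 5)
Your proposal is correct and takes essentially the paper's own route: it builds from the maximal \cl{P}-slice a parameterization that is a $\quasile_\rmnu$-least element of the relevant filter (as in Theorems~\ref{thm:nufptprincipal} and~\ref{thm:sufptprincipal}), observes it lacks imix, and transfers the conclusion to the uniform order via $\quasile \subset \quasile_\rmnu$ and Lemma~\ref{lem:imix}; your explicit table-lookup machine merely spells out the uniformity the paper leaves implicit, and your remark on decidability matches the paper's restriction of \clsu{FPT} and \cl{FPT} to decidable sets. The only cosmetic point is that for \clsu{FPT} the parameterizations $\eta'$ to be ruled out should range over those $\quasile_\rmnu$-below $\eta$, which your $\quasile_\rmnu$-leastness argument already covers verbatim.
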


Of course, we \emph{can} show that filters with respect to \cl{FPT} are principal for sets in \cl{P}.
\begin{theorem}
\label{thm:fptprincipal}
  For any set $A$ that is in \cl{P}, the filter $\calF^\quasile_{(A, \cl{FPT})}$ is principal in $\calL^\quasile_\cl{FPT}$.
\end{theorem}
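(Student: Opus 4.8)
The plan is to show that the least element of the ambient lattice $\calL^\quasile_\cl{FPT}$ already lies in the filter $\calF^\quasile_{(A, \cl{FPT})}$; since a filter is by definition an up-set, such a least element of the lattice is then a least element of the filter, which is exactly what it means for the filter to be principal.

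Concretely, fix a decision procedure $\Phi$ for $A$ whose running time is in $\bigO(n^c)$ for some constant $c$; such a $\Phi$ exists because $A \in \cl{P}$. Let $\Omega$ be any parameter space, for instance $\bbN$ with its canonical order and encoding, and consider the parameterization $\eta \deq \binary^+ \times \Omega$, for which $\eta_k = \binary^+$ for every parameter value $k$. I would first check that $(A, \eta) \in \cl{FPT}$: take the two-input machine $\Psi$ that ignores its first argument and runs $\Phi$ on its second, take the same constant $c$, and take $f$ to be the constant function $1$. Then for every $k$ the partial application $\Psi(k, \cdot)$ is an $\clO{n^c}$-approximation for $A$ with domain exactly $\eta_k = \binary^+$, and its running time is bounded by $f(k) \cdot \length{x}^c = \length{x}^c$. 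Hence $\eta \in \calF^\quasile_{(A, \cl{FPT})}$.

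Next I would invoke the fact---already established in the proof that $\calL^\quasile_\cl{FPT}$ is a bounded lattice, ultimately via Lemma~\ref{lem:bounded}---that $\binary^+ \times \Omega$ is a least element of $\calL^\quasile_\cl{FPT}$ with respect to the uniform order. The point is that for any parameterization $\eta'$ one has $\gap_{\eta, \eta'}(n) \le c'$, where $c'$ is the minimum length of a parameter value in $\Omega$; a constant function is trivially computable, so this witnesses $\eta \quasile \eta'$. It is precisely here that the distinction between the uniform and the nonuniform order is harmless: a constant bound is a computable bound.

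Finally, combining the two previous steps, $\calF^\quasile_{(A, \cl{FPT})}$ is an up-set of $\calL^\quasile_\cl{FPT}$ that contains $\eta$, and $\eta \quasile \eta'$ holds for every $\eta' \in \calL^\quasile_\cl{FPT}$, a fortiori for every $\eta'$ in the filter. Therefore $\eta$ is a least element of $\calF^\quasile_{(A, \cl{FPT})}$, so the filter is principal. There is no serious obstacle here; the only points to be careful about are that the parameter dependence in the definition of \cl{FPT} may be taken to be a constant (so no genuine slicing is needed for a set that is already in \cl{P}), and that the trivial parameterization sits below every parameterization even in the uniform order because all of its gap functions are bounded by constants.
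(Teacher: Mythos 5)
Your proposal is correct and follows essentially the same route as the paper: exhibit the full parameterization $\binary^+ \times \Omega$ as one putting $A$ in \cl{FPT} (possible since $A \in \cl{P}$, with constant parameter dependence), note that it is the least element of $\calL^\quasile_\cl{FPT}$ even under the uniform order because the constant gap bound is computable, and conclude principality since the filter is an up-set containing this least element. Your write-up merely spells out the verification steps that the paper leaves implicit.
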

\begin{proof}
  Regardless of the parameter space $\Omega$, the full parameterization $\binary^+ \times \Omega$ is one with which $A$ is in \cl{FPT}.
  Since the class of this parameterization is the least element of the lattice of parameterizations, the filter must be principal.
\end{proof}

Whether or not there exists a set outside \cl{P} for which the filter with respect to \cl{FPT} is principal is an open problem.
We shall see that any such set is necessarily \immune{\cl{P}}.
As before, the converse of Corollary~\ref{cor:fptimmune} does not hold and not all \levelable{\cl{P}} sets are \levelable{\clsu{FPT}} or \levelable{\cl{FPT}}.
\begin{theorem}
  There are \levelable{\cl{P}} sets that are not \levelable{\clsu{FPT}} and not \levelable{\cl{FPT}}.
\end{theorem}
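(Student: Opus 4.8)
The plan is not to build a new set but to reuse the one constructed in the proof of Theorem~\ref{thm:nonnufptlevelable}, and to strengthen what is known about it: there we showed that $A$ is \levelable{\cl{P}} and that, for every $c$, the set $S_c \deq \{\langle d, x\rangle \st d \le c\}$ is a maximal (up to finite variations) \clO{n^{2c+2}}-slice for $A$. I would now argue the stronger claim that $A$ is \immune{\clXsu{\clO{n^{2c+2}}}} and \immune{\clX{\clO{n^{2c+2}}}} for every $c$. Once this is in hand the theorem follows quickly: no set can be both \levelable{} and \immune{} with respect to a fixed such class, because a common greatest lower bound of the two witnessing parameterizations lies in the corresponding filter (by the appropriate analogue of Lemma~\ref{lem:nuxpfilter}) and would have to both possess and lack imix; hence $A$ fails to be \levelable{\clXsu{\clO{n^{2c+2}}}} or \levelable{\clX{\clO{n^{2c+2}}}} for infinitely many values of the exponent, and since \levelable{\clsu{FPT}} and \levelable{\cl{FPT}} demand levelability at \emph{all} sufficiently large exponents, $A$ is neither \levelable{\clsu{FPT}} nor \levelable{\cl{FPT}}. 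As $A$ is \levelable{\cl{P}}, this completes the proof.

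For the immunity claims, the first observation is that $S_c$ is not merely an \clO{n^{2c+2}}-slice but one witnessed uniformly: a single machine — the recursive procedure for $A$ that diverges unless the first coordinate of its input is at most $c$ — decides $S_c$ in time \clO{n^{2c+2}}, with a running time bounded by a computable function of the input. To turn this into an actual parameterization, fix $c$, take $\bbN$ as parameter space, and set $\eta^{(c)}_k \deq S_c \cup \{x \st \length{x} \le k\}$. Each slice differs from $S_c$ by a finite set, so running $A$'s recursive procedure on the finitely many extra inputs (time bounded by a computable function of $k$) together with the fixed \clO{n^{2c+2}} routine on $S_c$ exhibits $\eta^{(c)}_k$ as an \clO{n^{2c+2}}-slice; collecting these into one machine $\Psi(k, \cdot)$ shows $A \in \clXsu{\clO{n^{2c+2}}}$ with $\eta^{(c)}$, and, the parameter dependence being computable, also $A \in \clX{\clO{n^{2c+2}}}$ with $\eta^{(c)}$.

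It then remains to show that every parameterization $\eta' \quasile_\rmnu \eta^{(c)}$ with which $A$ is in \clXsu{\clO{n^{2c+2}}} lacks imix (the case $\eta' \quasile \eta^{(c)}$ for \clX{\clO{n^{2c+2}}} follows from this via $\quasile \subset \quasile_\rmnu$ and the fact that membership in \clX{\clO{n^{2c+2}}} implies membership in \clXsu{\clO{n^{2c+2}}}). Since every $\eta'_k$ is an \clO{n^{2c+2}}-slice and $S_c$ is maximal among these, each $\eta'_k$ equals $S_c \cup F_k$ for a finite $F_k$. Because $\eta^{(c)}_0$ already contains $S_c$, finiteness of $\gap_{\eta', \eta^{(c)}}$ at the value that $\mu_{\eta^{(c)}}$ takes on $S_c$ (a constant) forces $\mu_{\eta'}$ to be bounded on $S_c$; directedness of the parameter space then yields a single $k^\ast$ with $S_c \subseteq \eta'_{k^\ast}$. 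For any $l$, choosing $m$ above both $l$ and $k^\ast$ gives $\eta'_l \subseteq \eta'_m = S_c \cup F_m$ while $S_c \subseteq \eta'_{k^\ast}$, so $\eta'_l \backslash \eta'_{k^\ast} \subseteq F_m$ is finite; hence no slice of $\eta'$ extends $\eta'_{k^\ast}$ by infinitely much, so $\eta'$ has no imix, as required.

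The step I expect to be the main obstacle is the careful interplay in the last paragraph: maximality of $S_c$ only controls the slices of $\eta'$, not $\eta'$ itself, and it controls them only ``from $\eta'_{k^\ast}$ onward'' — so one must combine it with directedness and with the finiteness of the gap at a constant argument to pin down imix. A related technical nuisance is making sure the padding in $\eta^{(c)}$ neither pushes the exponent past $2c+2$ nor breaks semi- or strong uniformity, and that the exponents actually produced, the numbers $2c+2$, although not every integer, are still cofinal enough to defeat the ``for all large exponents'' quantifier in the definitions of \levelable{\clsu{FPT}} and \levelable{\cl{FPT}}.
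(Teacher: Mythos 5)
Your proposal is correct and takes essentially the same route as the paper: the paper's own proof is a one-line observation that the set built in Theorem~\ref{thm:nonnufptlevelable} is constructed by an explicit procedure, so the maximal $\bigO(n^{2c+2})$-slices are uniformly witnessed and the argument carries over to the semi-uniform and fully uniform settings. Your parameterization $\eta^{(c)}$ and the maximality-plus-directedness argument ruling out imix are exactly the details the paper leaves implicit, worked out correctly.
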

\begin{proof}
  The proof of Theorem~\ref{thm:nonnufptlevelable} is constructive and applies in the semi-uniform and uniform situations too.
\end{proof}

The existence of \levelable{\clsu{FPT}} sets and \levelable{\cl{FPT}} sets can be shown similarly to how the existence of \levelable{\clnu{FPT}} sets was shown.
\begin{theorem}
\label{thm:fptlevelable}
  Every set outside \cl{P} from which there is a \lli{} reduction to itself is \levelable{\clsu{FPT}} and \levelable{\cl{FPT}}.
\end{theorem}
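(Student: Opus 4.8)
The plan is to run the proof of Theorem~\ref{thm:nufptlevelable} again, this time tracking uniformity. Recall that that proof, reduced to its combinatorial core, shows the following: if $A \notin \cl{P}$ carries a \lli{} self-reduction $f$, then for every constant $c$ large enough that $f$ is computable in time $\bigO(n^{c-1})$, the set $A$ has no maximal $\clO{n^c}$-slice. Nothing in that core is nonuniform. From a \cl{P}-approximation $\Phi$ with $\dom(\Phi) = S$ and from $f$ one builds, by fixed if-then-else bookkeeping, \cl{P}-approximations for $S' \deq \{x \st x \notin S \reland f(x) \in S\}$, for the orbits $S_x \deq \{x, f(x), f(f(x)), \ldots\}$, and for the unions $S \cup S'$ and $S \cup S_x$; by the \lli{} property of $f$ these all run in time $\bigO(n^c)$. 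This explicitness is precisely what lets the argument carry over to the semi-uniform and strongly uniform settings.

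First I would record the semi-uniform and strongly uniform counterparts of Lemma~\ref{lem:nufptnc}: that a set is \levelable{\clsu{FPT}} (resp.\ \levelable{\cl{FPT}}) exactly when, for all sufficiently large $c$, it has no $\clO{n^c}$-slice maximal among the $\clO{n^c}$-slices that stay inside a uniform family of approximations (resp.\ a uniform family carrying a computable parameter-dependent time bound). These should go through by the reasoning behind Corollary~\ref{cor:xplevelable} and Lemma~\ref{lem:nufptnc}, using that parameterizations in $\calL^{\quasile_\rmnu}_\clsu{FPT}$ and $\calL^\quasile_\cl{FPT}$ have computable minimization functions (the remark after Lemma~\ref{lem:fptdecidable}). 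Granting this, suppose $A$ is not \levelable{\clsu{FPT}} (resp.\ not \levelable{\cl{FPT}}); then for arbitrarily large $c$ — in particular for one with $f$ computable in time $\bigO(n^{c-1})$ — there is a \cl{P}-approximation $\Phi$ whose domain $S$ is an $\clO{n^c}$-slice for $A$ that is maximal in the applicable sense.

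From here the argument of Theorem~\ref{thm:nufptlevelable} runs verbatim: $S'$ is an $\clO{n^c}$-slice disjoint from $S$; each orbit $S_x$ is an $\clO{n^c}$-slice, the linear length bound on $f$ keeping the running time in $\bigO(n^c)$ and the length-increasing property making membership in $S_x$ decidable; since $A \notin \cl{P}$ the set $\binary^+ \backslash S$ is infinite, maximality of $S$ forces every orbit $S_x$ to enter $S$ and hence to meet $S'$, and each $y \in S'$ lies in $S_x$ for only finitely many $x$ (again by length-increase), so $S'$ is infinite; then $S \cup S'$ is an $\clO{n^c}$-slice infinitely larger than $S$ — and, the one added observation, its \cl{P}-approximation is assembled effectively from $\Phi$ and $f$, together with a computable parameter-dependent time bound in the \cl{FPT} case, so it again belongs to the relevant uniform family — contradicting the maximality of $S$. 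The main obstacle is therefore not the combinatorics, which is untouched, but the bookkeeping: isolating the right notion of ``maximal $\clO{n^c}$-slice'' once a uniform family of approximations (or a computable time bound) must be maintained, i.e.\ establishing the analogues of Lemma~\ref{lem:nufptnc} for \clsu{FPT} and \cl{FPT}, and checking that every slice manufactured in the argument comes equipped with such a witness built effectively from the data at hand.
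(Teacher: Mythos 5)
Your overall plan --- rerun the proof of Theorem~\ref{thm:nufptlevelable} and observe that every slice manufactured there ($S'$, the orbits $S_x$, and $S \cup S'$) comes with an approximation built effectively from $\Phi$ and $f$ --- is exactly the one-line justification the paper gives, and that part of your write-up is fine. The gap is at the step you yourself flag as ``bookkeeping'': passing from ``$A$ is not \levelable{\clsu{FPT}} (\levelable{\cl{FPT}})'' to a maximal slice that the combinatorial core can then contradict. Levelability with respect to these classes asks for a witness $\eta$ in the filter such that every filter member $\eta' \quasile_\rmnu \eta$ (respectively $\eta' \quasile \eta$) has imix. If you read your ``maximal in the applicable sense'' as maximality relative to the uniform family $\eta'$ (i.e.\ $S = \eta'_{k_0}$ is a top slice of a parameterization without imix), then the construction does not close: $S \cup S'$ is an \clO{n^c}-slice for $A$, but it need not be a slice of $\eta'$, so its existence contradicts nothing --- the nonuniform proof needed $S$ to be maximal among \emph{all} \clO{n^c}-slices, and that is precisely what a no-imix parameterization does not give you. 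If instead you mean absolute maximality, then the bridge you invoke --- ``these should go through by the reasoning behind Corollary~\ref{cor:xplevelable} and Lemma~\ref{lem:nufptnc}'' --- is not available: that reasoning rests on the principal parameterizations of Theorem~\ref{thm:nuxpprincipal}, obtained by enumerating \emph{all} \cl{P}-slices, and the paper explicitly points out that this enumeration is not effective (see the discussion of \clXsu{P}); moreover, by Theorem~\ref{thm:fptnonprincipal} the uniform and semi-uniform filters of the very sets at issue (which are not \immune{} with respect to these classes) are nonprincipal, so no least element can play the role it played in Corollary~\ref{cor:xplevelable}. Indeed, the paper itself leaves open whether class levelability in the uniform setting coincides with classical levelability, so the equivalence you want to ``record'' cannot simply be asserted.

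Concretely, what is missing is the construction of a suitable witness: you must exhibit, for every sufficiently large $c$, a specific parameterization $\eta$ (assembled effectively from $f$, a decider for $A$, and the running-time bookkeeping, so that it lies in the relevant filter with computable bounds in the \cl{FPT} case) for which you can show that any filter member $\eta' \quasile \eta$ without imix forces a configuration --- for instance, a single \clO{n^c}-slice $S$ absorbing, up to finite sets, all slices of $\eta$ --- that the $f$-orbit argument can actually refute. Choosing $\eta$ naively (say, the greatest parameterization, or finite slices plus orbits) does not suffice, so this choice, not the effectiveness of building $S'$ from $\Phi$ and $f$, is where the real work of transferring Theorem~\ref{thm:nufptlevelable} to the semi-uniform and uniform settings lies; your proposal acknowledges the obstacle but does not supply an argument that overcomes it.
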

\begin{proof}
  The proof of Theorem~\ref{thm:nufptlevelable} is constructive and applies in the semi-uniform and uniform situations too.
\end{proof}

Thus, by Lemma~\ref{lem:llireduction}, there exist \levelable{\clsu{FPT}} sets and \levelable{\cl{FPT}} sets.
Again, it could well be that some sets are neither \immune{\clsu{FPT}} nor \levelable{\clsu{FPT}} (and similarly for \cl{FPT}).
By Corollary~\ref{cor:nufptimmune}, every \immune{\cl{P}} set is \immune{\clsu{FPT}} and \immune{\cl{FPT}}.
Conversely, because every \cl{P}-slice for a decidable set can be made part of a parameterization with which the set is fixed-parameter tractable, being \levelable{\clsu{FPT}} or \levelable{\cl{FPT}} implies being \levelable{\cl{P}}.
This can be graphically depicted as in Figure~\ref{fig:partition}.
The same holds for \clsu{FPT}.
\begin{figure}[htb]
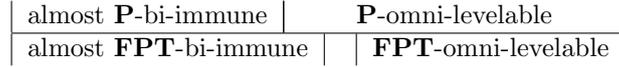

  \centering
  \begin{tabular}{|cccccc|}
    \multicolumn{2}{|c|}{\immune{\cl{P}}}	& \multicolumn{4}{|c|}{\levelable{\cl{P}}} \\
    \hline
    \multicolumn{3}{|c|}{\immune{\cl{FPT}}}	&	& \multicolumn{2}{|c|}{\levelable{\cl{FPT}}}
  \end{tabular}
  \caption{
    The universe of sets, represented by the horizontal line, can be divided according to levelability with respect to \cl{P} or with respect to \cl{FPT}.
    Note that \cl{P} is part of the \immune{\cl{P}} sets.
  }
  \label{fig:partition}
\end{figure}

As we did for \clnu{FPT}, we can answer the principality question for all sets that are not \levelable{\clsu{FPT}} or \levelable{\cl{FPT}}.
\begin{theorem}
\label{thm:sufptnonprincipal}
  For any set $A$ that is \levelable{\cl{P}} and not \levelable{\clsu{FPT}} (\levelable{\cl{FPT}}), the filter $\calF^{\quasile_\rmnu}_{(A, \clsu{FPT})}$ ($\calF^\quasile_{(A, \cl{FPT})}$) is nonprincipal.
\end{theorem}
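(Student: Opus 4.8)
The plan is to replay the proof of Theorem~\ref{thm:nufptnonprincipal} in the semi-uniform and strongly uniform settings, treating both cases in parallel since the skeleton is identical; the only delicate point is to see that the ingredients survive the extra uniformity. Recall the three beats of the nonuniform argument: (i)~$A$ not being \levelable{\clnu{FPT}} yields, through Lemma~\ref{lem:nufptnc} and the fact (following Theorem~\ref{thm:maximal}) that being \levelable{\clO{n^c}} is exactly not being \immune{\clO{n^c}}, an unbounded set of exponents $c$ for which $A$ has a maximal \clO{n^c}-slice; (ii)~since $A$ is \levelable{\cl{P}}, any such slice, being a \cl{P}-slice, extends to an infinitely larger \cl{P}-slice, hence to an infinitely larger maximal \clO{n^d}-slice for a suitable exponent $d$ from that set; (iii)~a least element of the filter cannot keep pace with this improvement. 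To transfer beat~(i), I would first establish the semi-uniform, respectively uniform, analogue of Lemma~\ref{lem:nufptnc}: that $A$ not being \levelable{\clsu{FPT}} (respectively \levelable{\cl{FPT}}) yields an unbounded set of exponents $c$ for which $A$ has a maximal \clO{n^c}-slice that moreover arises as the domain of a member of a semi-uniform, respectively strongly uniform, family of \clO{n^c}-approximations for $A$. The bridge from a bare \clO{n^c}-slice $S$ to a uniformly realised one is that $S$, being a set in \cl{P}, can be decided uniformly in any parameter value: over $\bbN$ the assignment $k \mapsto S \cup \{x \st \length{x} \le k\}$ is a genuine parameterization, it is decidable uniformly in $k$ (computing a per-$k$ table from the decision procedure of $A$, which is computable, so that the \cl{FPT} running-time bound is met as well), and it has no infinite extensions; with this, the analogues of Lemma~\ref{lem:nufptnc} go through as the original did.

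For the contradiction, suppose $\calF^{\quasile_\rmnu}_{(A, \clsu{FPT})}$ (respectively $\calF^\quasile_{(A, \cl{FPT})}$) were principal with least element $\eta$. Inflating the exponent, take the exponent $c^*$ witnessing the membership of $A$ with $\eta$ to lie in the unbounded set from beat~(i), so that $A$ is \immune{\clO{n^{c^*}}} and each slice $\eta_k$, being an \clO{n^{c^*}}-slice, is contained up to a finite variation in the unique maximal \clO{n^{c^*}}-slice $S^*$. Since $A$ is \levelable{\cl{P}}, $S^*$ is not a maximal \cl{P}-slice, so it extends to a \cl{P}-slice, hence to an \clO{n^e}-slice for some $e$, that is infinitely larger; picking $d$ in the unbounded set with $d \ge \max\{c^*, e\}$, the maximal \clO{n^d}-slice $T$ then differs from $S^*$ by infinitely many elements. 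Realise $T$ as a parameterization $\theta$ in the filter by the recipe of the previous paragraph; principality of $\eta$ forces $\eta \quasile_\rmnu \theta$ (respectively $\eta \quasile \theta$). Since every element of $T$ already lies in $\theta_k$ for the shortest parameter value $k$, the resulting finite gap bound confines $T$ to a fixed finite union of slices of $\eta$, hence, by directedness, to a single slice $\eta_{k_0}$; but $\eta_{k_0}$ differs only finitely from $S^*$, so $T$ would differ only finitely from $S^*$ after all --- a contradiction.

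The main obstacle is beat~(i). In the nonuniform case the analogue of Lemma~\ref{lem:nufptnc} traces back to Theorem~\ref{thm:nuxpprincipal} --- every set has a principal parameterization with respect to \clXnu{\clO{n^c}} --- whose construction enumerates the \clO{n^c}-slices for $A$ and is therefore inherently non-effective, so it has no verbatim counterpart for the semi-uniform or strongly uniform decomposition classes. The argument has to be run directly from the definitions of levelability for those classes, using the uniform-decidability bridge above to pass between bare and uniformly realised \clO{n^c}-slices. Once the unbounded supply of exponents with uniformly realisable maximal slices is secured, beats~(ii) and~(iii) are a routine transcription of the proof of Theorem~\ref{thm:nufptnonprincipal}, and moving from $\quasile_\rmnu$ to the uniform order $\quasile$ for \cl{FPT} changes nothing, since by Lemma~\ref{lem:fptdecidable} the gap functions involved remain computable.
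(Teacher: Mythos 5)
The auxiliary machinery you add is sound and supplies exactly the details the paper's one-line proof (``identical to Theorem~\ref{thm:nufptnonprincipal}'') leaves implicit: the padded assignment $k \mapsto S \cup \{x \st \length{x} \le k\}$ really does place any given \clO{n^c}-slice of a decidable set inside the semi-uniform and uniform filters with a computable $f(k)\cdot n^c$ bound, and your beats~(ii)--(iii) correctly derive the contradiction with a least element. The gap is beat~(i), which you flag as ``the main obstacle'' and then leave as a promissory note. The claim that failure of being \levelable{\clsu{FPT}} (resp.\ \levelable{\cl{FPT}}) yields unboundedly many exponents $c$ for which $A$ has a \emph{classically} maximal \clO{n^c}-slice is the semi-uniform/uniform analogue of Lemma~\ref{lem:nufptnc}, and the only proof of that lemma available goes through the principal parameterizations of Theorem~\ref{thm:nuxpprincipal}, whose construction is, as you note yourself, non-effective. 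Your ``uniform-decidability bridge'' points the wrong way for this step: it converts a given slice into a filter member, whereas here one must extract a classically maximal slice from the failure of a statement quantifying only over uniformly realizable parameterizations. Nothing you say excludes a decidable set that is \levelable{\clO{n^c}} for every large $c$ (so no maximal \clO{n^c}-slices exist at all) and yet fails \levelable{\clsu{FPT}} because the infinite extensions are never uniformly realizable --- exactly the asymmetry the paper itself warns about when it remarks that the \levelable{\clXsu{P}} sets could form a strict subset of the \levelable{\cl{P}} sets. So the assertion that ``the analogues of Lemma~\ref{lem:nufptnc} go through as the original did'' is unsubstantiated, and as written your plan does not close the theorem.

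The detour through classically maximal slices is avoidable: apply the levelability hypothesis directly to the putative least element. Suppose $\calF^{\quasile_\rmnu}_{(A, \clsu{FPT})}$ (resp.\ $\calF^\quasile_{(A, \cl{FPT})}$) were principal with least element $\eta^*$, witnessed with exponent $c^*$; principality makes the filter nonempty, so $A$ is decidable and your bridge is legitimate. Pick an exponent $c \ge c^*$ at which levelability fails; since $\eta^*$ belongs to the slicewise-\clO{n^c} filter, that failure produces a filter member without imix lying below $\eta^*$, hence equivalent to $\eta^*$ by leastness, so by Lemma~\ref{lem:imix} $\eta^*$ itself lacks imix and has a top slice $S^* = \eta^*_{k_0}$ containing every $\eta^*_k$ up to a finite set. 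Because $A$ is \levelable{\cl{P}}, the \cl{P}-slice $S^*$ extends by infinitely many elements to a \cl{P}-slice; your bridge puts that extension into the filter, and leastness of $\eta^*$ confines it, exactly as in your beat~(iii), to a single slice of $\eta^*$, i.e.\ to $S^*$ up to a finite set --- a contradiction. This is your own argument run on the top slice of $\eta^*$ rather than on a classically maximal slice, and it needs no analogue of Lemma~\ref{lem:nufptnc} at all.
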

\begin{proof}
  Identical to the proof of Theorem~\ref{thm:nufptnonprincipal}.
\end{proof}

In contrast to the nonuniform setting we can extend Theorem~\ref{thm:sufptnonprincipal} proving that the filters produced by \levelable{\clsu{FPT}} sets and \levelable{\cl{FPT}} sets are also nonprincipal.
\begin{theorem}
\label{thm:fptnonprincipal}
  For any set $A$ that is not \immune{\clsu{FPT}} (\immune{\cl{FPT}}), the filter $\calF^{\quasile_\rmnu}_{(A, \clsu{FPT})}$ ($\calF^\quasile_{(A, \cl{FPT})}$) is nonprincipal.
\end{theorem}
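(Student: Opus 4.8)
The plan is to argue by contradiction, producing a parameterization strictly below the putative least element of the filter. I phrase everything for \clsu{FPT}; the \cl{FPT} case differs only in bookkeeping. Suppose $\calF^{\quasile_\rmnu}_{(A, \clsu{FPT})}$ is principal with least element $\eta_0$, and fix a \clsu{FPT}-witness $\Psi_0$ for $(A, \eta_0)$ with exponent $c$, over a parameter space $\Omega$. The first step is a reduction: since $A$ is not \immune{\clsu{FPT}}, the defining property applied to $\eta_0$ yields some $\eta' \quasile_\rmnu \eta_0$ in the filter that has imix; as $\eta_0$ is least we also have $\eta_0 \quasile_\rmnu \eta'$, so by Lemma~\ref{lem:imix} $\eta_0$ has imix as well. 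Hence it suffices to prove that no parameterization with imix that puts $A$ in \clsu{FPT} can be the least element of $\calF^{\quasile_\rmnu}_{(A, \clsu{FPT})}$; I will do this by exhibiting $\eta^\star$ with $(A, \eta^\star) \in \clsu{FPT}$, $\eta^\star \quasile_\rmnu \eta_0$, and $\eta_0 \nquasile_\rmnu \eta^\star$.

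The heart of the argument is the construction of a single \cl{P}-slice $T$ for $A$ on which $\mu_{\eta_0}$ is unbounded. Let $\Phi_T$ be the clocked polynomial-time machine that, on input $x$, simulates $\Psi_0(k, \cdot)$ on $x$ for $\length{x}^{2c}$ steps for every string $k$ with $\length{k} \le \log\length{x}$ --- there are only $\bigO(\length{x})$ such $k$ --- and returns the first value in $\{0, 1\}$ so obtained (and diverges otherwise). Using efficient simulation \citep{arora2009computational} this runs within a fixed polynomial bound, and since $\Psi_0(k, \cdot)$ decides $A$ exactly on $\eta_{0, k}$, every $0/1$ output of $\Phi_T$ is correct; thus $\Phi_T$ is a \cl{P}-approximation for $A$ and $T \deq \dom(\Phi_T)$ is a \cl{P}-slice for $A$. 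That $\mu_{\eta_0}$ is unbounded on $T$ is shown non-constructively: for any $m$, directedness of $\Omega$ yields some $k^\star$ with $\eta_{0, k^\star} \supseteq \{x : \mu_{\eta_0}(x) < m\}$ (a finite union of slices), and then imix yields some $\kappa_m$ with $\eta_{0, \kappa_m} \backslash \eta_{0, k^\star}$ infinite; each element of this infinite set has $\mu_{\eta_0} \ge m$, so choosing one long enough for $\length{\kappa_m} \le \log\length{x}$ and for $\Psi_0(\kappa_m, \cdot)$ to finish within the clock bound puts it into $T$. The point is that $\Phi_T$ never needs to know the (non-computable) $\kappa_m$, because it tries all short $k$. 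Now put $\eta^\star \deq \{(x, k) \in \binary^+ \times \Omega \st (x, k) \in \eta_0 \relor x \in T\}$, so $\eta^\star_k = \eta_{0, k} \cup T$; one checks this is a parameterization, and running $\Phi_T$ first and falling back to $\Psi_0(k, \cdot)$ witnesses $(A, \eta^\star) \in \clsu{FPT}$ with a fixed exponent. Since $\eta^\star_k \supseteq \eta_{0, k}$ we have $\mu_{\eta^\star} \le \mu_{\eta_0}$, hence $\gap_{\eta^\star, \eta_0}(n) \le n$ and $\eta^\star \quasile_\rmnu \eta_0$; but $T \subseteq \eta^\star_{k_0}$ for the shortest parameter value $k_0$, so $\gap_{\eta_0, \eta^\star}(\length{k_0}) = \infty$ and $\eta_0 \nquasile_\rmnu \eta^\star$. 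This contradicts $\eta_0$ being least, proving the \clsu{FPT} statement. For \cl{FPT} one repeats the argument with $\quasile$ in place of $\quasile_\rmnu$: the bound $\gap_{\eta^\star, \eta_0}(n) \le n$ is computable while the infinite value $\gap_{\eta_0, \eta^\star}(\length{k_0})$ admits no bound, and the extra computability demanded by \cl{FPT} (and by membership of $\eta^\star$ in $\calL^\quasile_\cl{FPT}$) is available because the \cl{FPT}-witness for $\eta_0$ comes with a computable parameter dependence, on top of which the construction only composes further computable functions.

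The main obstacle is exactly the simultaneous requirement that $T$ be in \cl{P} with a fixed exponent (so that $\Psi_0$ together with $\Phi_T$ is still a legitimate \clsu{FPT}-witness) and that $\mu_{\eta_0}$ be unbounded on $T$: any single slice $\eta_{0, k}$, and any finite union of slices, has $\mu_{\eta_0}$ bounded by a length, whereas an honest infinite union of slices need not be polynomial-time decidable. The resolution is to let $\Phi_T$ be a uniform clocked search over \emph{all} parameter values of length up to $\log\length{x}$ --- genuinely polynomial and effective --- and to establish separately, via the purely existential consequences of imix and directedness, that this domain must nevertheless reach every level of $\mu_{\eta_0}$. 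Verifying that last point is the crux; the preliminary reduction through Lemma~\ref{lem:imix}, the gap computations, and the \cl{FPT} bookkeeping are all routine.
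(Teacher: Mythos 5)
Your overall strategy is the paper's: assume a least element $\eta_0$, observe via Lemma~\ref{lem:imix} that it must have imix, and then diagonalize by a polynomially clocked search through the slices of the witness $\Psi_0$, so that the new object reaches arbitrarily high slices of $\eta_0$ already at its lowest parameter value, making the gap from $\eta_0$ infinite and contradicting leastness. The paper packages this as a fresh parameterization over $\bbN$ with budget $k \cdot \length{x}^{c+2}$, dovetailing along an effective enumeration of (an unbounded subset of) $\Omega$; you package it as a single \cl{P}-slice $T$ adjoined to $\eta_0$. Your explicit directedness-plus-imix argument that $\mu_{\eta_0}$ is unbounded on $T$ spells out precisely what the paper leaves implicit, and your gap computations and the \cl{FPT} bookkeeping are fine.

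The one step that does not survive the paper's generality is the definition of $\Phi_T$: you simulate $\Psi_0(k, \cdot)$ on \emph{every} string $k$ with $\length{k} \le \log\length{x}$ and trust the first output in $\{0, 1\}$. A parameter space is only required to be encodable into $\binary^+$; the encoding need not be surjective, nor need its range be decidable, and the definitions of \clsu{FPT} and \cl{FPT} constrain $\Psi_0$ only on genuine parameter values. On a string that encodes no parameter value, $\Psi_0$ may halt with a wrong bit, in which case $\Phi_T$ is not a partial decision procedure for $A$, so neither $T$ nor your witness for $(A, \eta^\star)$ is legitimate. The repair is exactly the device the paper uses: dovetail, inside the polynomial clock, over an effective enumeration $\omega_1, \omega_2, \ldots$ of an unbounded subset of $\Omega$ (available in the uniform setting, where the relevant parameter spaces are decidable), and in your unboundedness argument replace $\kappa_m$ by an enumerated $\omega_i$ with $\kappa_m \quasile \omega_i$, which by the up-set property satisfies $\eta_{0, \kappa_m} \subseteq \eta_{0, \omega_i}$ and so still supplies infinitely many elements with $\mu_{\eta_0} \ge m$ that eventually enter $T$. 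With that change your proof goes through and coincides in substance with the paper's.
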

\begin{proof}
  We present a proof by contradiction for the uniform setting.
  The semi-uniform setting is subsumed in this proof.
  Assuming $\calF^\quasile_{(A, \cl{FPT})}$ is the principal filter induced by a parameterization $\eta \subseteq \binary^+ \times \Omega$ we construct a parameterization $\eta'$ such that we have $(A, \eta') \in \cl{FPT}$ and $\eta \nquasile \eta'$.

  Let $c$ and $\Psi$ be the constant and the Turing machine witnessing that $A$ is in \cl{FPT} with parameterization $\eta$ and let $\omega_1, \omega_2, \ldots$ be an effective enumeration of (an unbounded subset of) $\Omega$.
  Denote by $\psi_{\omega_i}$ the partial application of $\Psi$ to $\omega_i$.
  Thus, $\psi_{\omega_i}$ is a \clO{n^c}-approximation for $A$ with domain $\eta_{\omega_i}$.
  Consider the following partial decision procedure for $A$, uniformly defined for all $k \in \bbN$.
  On input $x$, the procedure takes the following steps.
  \begin{enumerate}
  \item
    Spend $k \cdot \length{x}^{c + 2}$ time computing the values $\psi_{\omega_1}(x), \psi_{\omega_2}(x), \ldots$.
  \item
    If any computed value is either $1$ or $0$, return it, otherwise return nothing.
  \end{enumerate}
  This partial decision procedure is a \clO{n^{c + 2}}-approximation for $A$ and moreover it defines a parameterization $\eta' \subseteq \binary^+ \times \bbN$ with which $A$ is in \cl{FPT} when we interpret $k$ as the parameter value.

  Because the exponent $c + 2$ is more than necessary to simulate any fixed number of \clO{n^c}-approximations for $A$, we are able to compute an increasing number of values $\psi_{\omega_1}, \psi_{\omega_2}, \ldots$ with increasing $\length{x}$ for any constant value of $k$.
  Because $\eta$ necessarily has imix, this enables us to decide membership for elements from arbitrary high slices in $\eta$.
  Hence the gap between $\eta$ and $\eta'$ is always infinite, proving $\eta \nquasile_\rmnu \eta'$ and also $\eta \nquasile \eta'$.
\end{proof}

Combining the above theorem with Theorem~\ref{thm:sufptprincipal} and Theorem~\ref{thm:sufptnonprincipal}, we get a complete picture of which principal filters in $\calL^{\quasile_\rmnu}_\clsu{FPT}$ can occur as filters of the form $\calF^{\quasile_\rmnu}_{(A, \clsu{FPT})}$ with some set $A$.
\begin{corollary}
  For any set $A$, the filter $\calF^{\quasile_\rmnu}_{(A, \clsu{FPT})}$ is principal if and only if $A$ is \immune{\cl{P}}.
\end{corollary}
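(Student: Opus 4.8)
The plan is to obtain this corollary by stitching together the three non-trivial theorems that precede it: no new construction is required, only a case analysis.

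For the implication ``$A$ is \immune{\cl{P}}'' $\Rightarrow$ ``$\calF^{\quasile_\rmnu}_{(A, \clsu{FPT})}$ is principal'' there is nothing left to prove, since this is exactly Theorem~\ref{thm:sufptprincipal}. So all the work is in the converse, which I would establish contrapositively. Recall that, by Theorem~\ref{thm:maximal}, a set fails to be \immune{\cl{P}} exactly when it is \levelable{\cl{P}}; hence it suffices to show that if $A$ is \levelable{\cl{P}}, then $\calF^{\quasile_\rmnu}_{(A, \clsu{FPT})}$ is not principal.

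First I would record that no set can be both \levelable{\clsu{FPT}} and \immune{\clsu{FPT}}. For if $\eta_1$ witnessed the former and $\eta_2$ the latter, then their greatest lower bound $\eta_3$ would lie in $\calF^{\quasile_\rmnu}_{(A, \clsu{FPT})}$, because that set is a filter and a filter contains the greatest lower bounds of its members; so $A$ would be in \clsu{FPT} with $\eta_3$. But then $\eta_3 \quasile_\rmnu \eta_1$ would force $\eta_3$ to have imix, while $\eta_3 \quasile_\rmnu \eta_2$ would force $\eta_3$ not to have imix --- impossible.

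With that in hand I would split on whether $A$ is \levelable{\clsu{FPT}}. If $A$ is \emph{not} \levelable{\clsu{FPT}}, then $A$ is \levelable{\cl{P}} and not \levelable{\clsu{FPT}}, so Theorem~\ref{thm:sufptnonprincipal} gives that $\calF^{\quasile_\rmnu}_{(A, \clsu{FPT})}$ is nonprincipal; if $A$ \emph{is} \levelable{\clsu{FPT}}, then by the observation above $A$ is not \immune{\clsu{FPT}}, and Theorem~\ref{thm:fptnonprincipal} gives the same conclusion. The two cases are exhaustive, which completes the contrapositive and hence the corollary. The step I expect to be delicate is not any calculation --- all the real content sits inside Theorems~\ref{thm:sufptprincipal}, \ref{thm:sufptnonprincipal} and~\ref{thm:fptnonprincipal} --- but checking that this split is genuinely exhaustive: the two non-principality theorems between them cover the classes ``\levelable{\cl{P}} and not \levelable{\clsu{FPT}}'' and ``not \immune{\clsu{FPT}}'', and these jointly exhaust the \levelable{\cl{P}} sets only because levelability and immunity with respect to \clsu{FPT} cannot coexist; it is this exclusivity, not any stronger statement such as ``\immune{\clsu{FPT}} implies \immune{\cl{P}}'', that makes the assembly go through. (One should also dispose of undecidable $A$ separately: then $\calF^{\quasile_\rmnu}_{(A, \clsu{FPT})}$ is empty and so not a principal filter, and such $A$ is not \immune{\cl{P}} either --- by Theorem~\ref{thm:sufptprincipal} that would make the filter principal and hence nonempty --- so the biconditional holds there as well.)
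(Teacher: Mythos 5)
Your main argument is exactly the paper's: the corollary is given there with no more justification than ``combining'' Theorems~\ref{thm:fptnonprincipal}, \ref{thm:sufptprincipal} and~\ref{thm:sufptnonprincipal}, and your case split on whether $A$ is \levelable{\clsu{FPT}}, together with the observation that levelability and immunity with respect to \clsu{FPT} exclude one another (which the paper records in its conclusion and which follows, as you argue, from filterhood plus the imix dichotomy of Lemma~\ref{lem:imix}), is precisely the intended assembly. That part is correct and needs no change.

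The final parenthetical, however, is wrong. An undecidable set can perfectly well be \immune{\cl{P}}: any set that is bi-immune for the computably enumerable sets is undecidable, and since every infinite set in \cl{P} is computably enumerable, such a set is \cl{P}-bi-immune, hence has $\binary^+$ as a maximal \cl{P}-core and is \immune{\cl{P}}. Your derivation of ``undecidable $\Rightarrow$ not \immune{\cl{P}}'' from Theorem~\ref{thm:sufptprincipal} is circular in exactly the spot where that theorem is fragile: for undecidable $A$ no parameterization at all puts $A$ in \clsu{FPT}, so $\calF^{\quasile_\rmnu}_{(A, \clsu{FPT})}$ is empty and the construction borrowed from Theorem~\ref{thm:nufptprincipal} (a parameterization in the filter having the maximal \cl{P}-slice as a slice) is unavailable. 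Consequently, for an undecidable \immune{\cl{P}} set the biconditional as literally stated fails rather than ``holds there as well''; the honest resolution is that the treatment of \clsu{FPT} and \cl{FPT} filters is implicitly restricted to decidable sets (the filter lemma of Section~\ref{sec:uniform} is stated only for those), a restriction the paper leaves tacit. This is a blemish in the paper's statement more than in your assembly, but your parenthetical should be replaced by that remark, not by the false claim.
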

For the uniform filters, we lack a uniform equivalent to Theorem~\ref{thm:sufptprincipal}.
However, Theorem~\ref{thm:fptnonprincipal} holds that whenever $\calF^\quasile_{(A, \cl{FPT})}$ is principal, a principal parameterization of that filter does not have imix.
Conversely, every parameterization that has imix does not occur as a principal parameterization in any filter of the form $\calF^\quasile_{(A, \cl{FPT})}$.
As parameterizations with imix are the most interesting from an applications point of view, we find that sets for which parameterized algorithms are attractive do not admit optimal parameterizations.

The proof of Theorem~\ref{thm:fptnonprincipal} has a clear kinship to that of the time hierarchy theorem~\citep{hartmanis1965computational}.
Observe though, that the time hierarchy theorem constitutes a hierarchy of \emph{sets}, whereas the current theorem is about a hierarchy of \emph{algorithms}.
The proof of Theorem~\ref{thm:fptnonprincipal} is made possible in its current form by our choice of Turing machines as a model of computation.
For other machine models, random access machines for example, different, but at the same time similar, proofs can be given.
To make the paper entirely machine model independent, we would have to take the jumps in the time bounds a bit bigger, though still polynomial of small degree as warranted by the sequential computation thesis~\Citep{emdeboas2014machine}, also known as the extended Church--Turing thesis~\citep{parberry1986parallel}.

We have made use of the time hierarchy theorem to break out of the constant exponent in the running time of fixed-parameter tractable algorithms.
When the exponent is allowed to vary, the diagonalization in the proof of Theorem~\ref{thm:fptnonprincipal} fails.
An immediate class that allows such variations in the exponent of the polynomial running time is the semi-uniform variant of \clXnu{P}.
\begin{definition}
  A set $A$ is in \clXsu{P} with parameterization $\eta$ if there is a Turing machine $\Psi$ taking two inputs, such that for every parameter value $k$ the partial application of $\Psi$ to $k$ yields an \cl{P}-approximation for $A$ with domain $\eta_k$.
\end{definition}

Not only does the proof of Theorem~\ref{thm:fptnonprincipal} fail on \clXsu{P}, the proof of Theorem~\ref{thm:nuxpprincipal} fails on \clXsu{P} as well.
The enumeration of slices in that proof is in general not effective and the resulting parameterization need not put the set under consideration in \clXsu{P}.
When focusing only on parameterizations that provably put a set in \clXsu{P}, we regain something akin to Theorem~\ref{thm:nuxpprincipal}.
\begin{theorem}
\label{thm:xpprincipal}
  Given any formal proof system, for any set $A$ there is a least parameterization among those provably, via a witnessing Turing machine, putting $A$ in \clXsu{P}.
\end{theorem}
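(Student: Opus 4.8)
The plan is to adapt the proof of Theorem~\ref{thm:nuxpprincipal}, replacing the noneffective enumeration of all \cl{P}-slices for $A$ by an \emph{effective} enumeration of the \cl{P}-slices that come with a proof. Fix the proof system. Because its proof-checking relation is decidable, the collection of pairs made up of a Turing machine $\Psi_i$ and a finite description (including a semi-decision procedure for membership) of a parameter space $\Omega_i$ for which the system proves the sentence ``$\Psi_i$, interpreted with parameter space $\Omega_i$, witnesses $A\in\clXsu{P}$'' is computably enumerable; fix such an enumeration $(\Psi_1,\Omega_1),(\Psi_2,\Omega_2),\dots$ (nonempty, e.g., when $A$ is decidable, which we assume). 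For each $i$, write $\eta^{(i)}$ for the parameterization induced by $\Psi_i$, so $\eta^{(i)}_k=\dom(\Psi_i(k,\cdot))$ is a \cl{P}-slice for $A$ for each $k\in\Omega_i$.

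Next I would build a single parameterization $\eta^*$ over $\bbN$ by dovetailing. At stage $m$, running the proof enumeration for $m$ steps and testing membership in the already-enumerated parameter spaces for $m$ steps yields a finite set $P_m$ of pairs $(i,k)$ with $k\in\Omega_i$, satisfying $P_m\subseteq P_{m+1}$ and $\bigcup_m P_m=\{(i,k)\st k\in\Omega_i\}$; put $\eta^*_m\deq\bigcup_{(i,k)\in P_m}\eta^{(i)}_k$. Since finitely many \cl{P}-approximations for $A$ amalgamate into one whose domain is the union of theirs, from $m$ one effectively obtains a \cl{P}-approximation $\Psi^*(m,\cdot)$ for $A$ with $\dom(\Psi^*(m,\cdot))=\eta^*_m$; hence $\Psi^*$ witnesses $(A,\eta^*)\in\clXsu{P}$, and a proof of this sentence can be put together from the witnessing proofs of the finitely many machines used at each stage plus the elementary amalgamation argument, so that $\eta^*$ lies in the collection at hand. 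That $\eta^*$ is a genuine parameterization is immediate: $\eta^*_m$ is nondecreasing in $m$, and $\bigcup_m\eta^*_m\supseteq\bigcup_k\eta^{(1)}_k=\binary^+$.

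Finally I would check that $\eta^*$ is below every member of the collection. Fix a certified $\eta^{(i)}$ and $n\in\bbN$. The set $\{x\st\mu_{\eta^{(i)}}(x)\le n\}$ is the union of the finitely many slices $\eta^{(i)}_k$ with $k\in\Omega_i$ and $\length{k}\le n$, and each such slice sits inside $\eta^*_m$ once $(i,k)\in P_m$; taking $m$ past all these thresholds gives $\{x\st\mu_{\eta^{(i)}}(x)\le n\}\subseteq\eta^*_m$, whence $\gap_{\eta^*,\eta^{(i)}}(n)\le\length{m}<\infty$. So $\eta^*\quasile_\rmnu\eta^{(i)}$ for every certified $\eta^{(i)}$, and $\eta^*$ is a least parameterization as claimed; when the certified parameter spaces are decidable the threshold $m$ is even computable from $i$ and $n$, giving $\eta^*\quasile\eta^{(i)}$ too.

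I expect the main obstacle to be the claim, in the second step, that $\eta^*$ itself is \emph{provably} a witness, i.e., that the fixed proof system proves ``$\Psi^*$ witnesses $(A,\eta^*)\in\clXsu{P}$''. The machine $\Psi^*$ is built uniformly and its correctness follows in an entirely elementary way from that of its constituents, but funnelling the individually provable correctness statements through the stagewise bookkeeping into one provable statement requires the proof system to carry out that bookkeeping — harmless for any system extending a modest base theory, which I would either stipulate or fold into the meaning of ``formal proof system''. (The construction is close in spirit to the universal-search arguments behind provably optimal algorithms, cf.~\citealp{hutter2002fastest}.)
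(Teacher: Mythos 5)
Your construction is in the same spirit as the paper's proof (a universal search over proofs, along the lines of \citet{hutter2002fastest}), but it is organized at a different granularity, and the difference matters. The paper does not enumerate certified witnessing machines together with their parameter spaces. Instead it lets the parameter $k \in \bbN$ bound the length of pairs consisting of a proof and a machine $\Phi$, where the proof certifies that the single machine $\Phi$ is a \cl{P}-approximation for $A$, and it sets $\eta_k$ to be the union of the domains of all such $\Phi$. Leastness then follows without ever touching the parameter space $\Omega'$ of a competitor: if $\Psi$ provably witnesses $(A, \eta') \in \clXsu{P}$, then for each individual $\omega \in \Omega'$ the machine ``$\Psi$ applied to $\omega$'' is provably a \cl{P}-approximation for $A$ with a certificate of length $\length{\langle \Psi, \omega\rangle} + \bigO(1)$, so $\eta'_\omega \subseteq \eta_{\length{\langle \Psi, \omega\rangle} + \bigO(1)}$ and $\gap_{\eta, \eta'}$ is bounded by a computable function of the form $n \mapsto n + \bigO(1)$, giving the uniform relation $\eta \quasile \eta'$.

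Your route needs two extra assumptions that this device avoids. First, to dovetail over the pairs $(i, k)$ with $k \in \Omega_i$ you require every certified parameter space to come with a semi-decision procedure; under the paper's definitions a parameter space need only be encodable into $\binary^+$ (decidability is an explicitly optional extra), so a certificate may describe $\Omega_i$ non-effectively, in which case your stages $P_m$ never exhaust its parameter values and $\eta^* \quasile_\rmnu \eta^{(i)}$ is not established for that competitor. Second, even where your argument applies it yields only the nonuniform order in general, since the threshold stage $m$ is produced by a pure existence argument; you recover $\quasile$ only under decidability of the $\Omega_i$, whereas the paper's proof concludes $\eta \quasile \eta'$ outright. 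Because the paper uses $\quasile_\rmnu$ for its semi-uniform classes, the weaker conclusion is arguably enough for the intended statement, but you should state which order you mean. The obstacle you flag---that $\eta^*$ should itself be \emph{provably} witnessing, which needs the proof system to formalize the stagewise bookkeeping---is a fair concern, but it is a shared debt rather than a new one: the paper's proof also only shows $(A, \eta) \in \clXsu{P}$ and $\eta \quasile \eta'$ for every certified $\eta'$, without verifying that its $\eta$ is certified; your remedy of folding a modest base theory into ``formal proof system'' is the natural one. If you reorganize the search around certificates for single \cl{P}-approximations as above, both of your extra assumptions disappear and the provability burden shrinks to instantiating a certified universal statement at a given $\omega$.
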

\begin{proof}
  Let $\mathfrak{F}$ be the fixed deductive system.
  A form of universal search, along the lines of \citet{hutter2002fastest}, through \cl{P}-approximations is possible in the parameterized setting.
  Consider the following partial decision procedure for $A$, uniformly defined for all $k \in \bbN$.
  On input $x$, the procedure takes the following steps.
  \begin{enumerate}
  \item
    \begin{enumerate}
    \item
      Set $M \deq \emptyset$.
    \item
      For all proofs $P$ in $\mathfrak{F}$ and all machines $\Phi$ both of length at most $k$ do:
      \\\-\quad If $P$ proves that $\Phi$ is a \cl{P}-approximation for $A$ then:
      \\\-\quad\quad Set $M = M \cup \{\Phi\}$.
    \end{enumerate}
  \item
    \begin{enumerate}
    \item
    \label{alg:xp:sim:sim}
      For $\Phi$ in $M$ do:
      \\\-\quad If $\Phi(x) \in \{0, 1\}$ then return $\Phi(x)$.
    \item
      return nothing.
    \end{enumerate}
  \end{enumerate}
  Since, for all values of $k$, the set $M$ is finite throughout the execution of this partial decision procedure, step \ref{alg:xp:sim:sim} can be executed in polynomial time and the procedure is a \cl{P}-approximation for $A$.
  Moreover, when we interpret $k$ as the parameter value, the above procedure defines a parameterization $\eta$ with which $A$ is in \clXsu{P}.

  It remains to show that $\eta$ lies below every other parameterization that provably puts $A$ in \clXsu{P}.
  Let $\Psi$ be the witness Turing machine corresponding to some parameterization $\eta' \subseteq \binary^+ \times \Omega$ with which $A$ is in \clXsu{P}.
  For any parameter value $\omega \in \Omega$, the slice $\eta'_\omega$ is included in the slice $\eta_{\length{\langle \Psi, \omega\rangle} + \bigO(1)}$, where the hidden constant depends on the length of an $\mathfrak{F}$-proof of $A$ being in \clXsu{P} with parameterization $\eta'$ and the overhead of a program applying $\Psi$ to $\omega$.
  Hence we have $\eta \quasile \eta'$ as desired.
\end{proof}

Thus, adding a provability requirement offsets the limitations we incurred by moving to a uniform setting.
The provability requirement in Theorem~\ref{thm:xpprincipal} enforces the effectiveness that was not present in the proof of Theorem~\ref{thm:nuxpprincipal}.
Additionally, from the existence of least parameterizations that provably put a set in \clXsu{P} it follows that sets are either \immune{\clXsu{P}} or \levelable{\clXsu{P}}.
\begin{corollary}
\label{cor:xpneither}
  When restricting to parameterizations that provably put a set in \clXsu{P}, sets are either \immune{\clXsu{P}} or \levelable{\clXsu{P}}.
\end{corollary}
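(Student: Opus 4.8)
The plan is to derive this directly from Theorem~\ref{thm:xpprincipal} together with Lemma~\ref{lem:imix}. By Theorem~\ref{thm:xpprincipal}, for the fixed proof system there is a parameterization $\eta$ that provably puts $A$ in \clXsu{P} and lies below, in $\quasile$, every other parameterization that provably puts $A$ in \clXsu{P}. I would use this single $\eta$ as the witness required by the definition of levelability/immunity in both branches of the dichotomy.

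First I would observe that any parameterization $\eta' \quasile \eta$ that provably puts $A$ in \clXsu{P} is $\quasile$-equivalent to $\eta$: by the minimality of $\eta$ we also have $\eta \quasile \eta'$, so $\eta$ and $\eta'$ lie in the same equivalence class. Now I split on whether $\eta$ has imix. If $\eta$ has imix, then by Lemma~\ref{lem:imix} no parameterization equivalent to $\eta$ can fail to have imix: were such an $\eta'$ without imix, one of $\eta \quasile \eta'$ and $\eta' \quasile \eta$ would fail, contradicting the established equivalence. Hence every $\eta' \quasile \eta$ that provably puts $A$ in \clXsu{P} has imix, so $A$ is \levelable{\clXsu{P}}. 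If instead $\eta$ does not have imix, the same appeal to Lemma~\ref{lem:imix} shows no $\eta'$ equivalent to $\eta$ has imix, so $A$ is \immune{\clXsu{P}}. In either case $A$ falls into exactly one of the two classes, which is the assertion.

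The only point that needs care, rather than a genuine obstacle, is the bookkeeping: the definitions of \levelable{\clXsu{P}} and \immune{\clXsu{P}} must be read in the provability-restricted setting, quantifying only over parameterizations that provably put $A$ in \clXsu{P} via a witnessing Turing machine, so that the least parameterization supplied by Theorem~\ref{thm:xpprincipal} is eligible as the witness $\eta$ and the parameterizations $\eta' \quasile \eta$ are precisely those to which Lemma~\ref{lem:imix} is applied. Once this is fixed the argument is immediate; there is no hard step, as all the work has been done in Theorem~\ref{thm:xpprincipal} and Lemma~\ref{lem:imix}.
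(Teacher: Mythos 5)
Your argument is correct and is exactly the route the paper intends: the corollary is stated as an immediate consequence of Theorem~\ref{thm:xpprincipal}, with the least provable parameterization serving as the witness and Lemma~\ref{lem:imix} (imix is invariant across a $\quasile$-equivalence class) supplying the dichotomy, including the bookkeeping point that all quantification is over provably witnessing parameterizations.
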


While a set is \levelable{\cl{P}} precisely when it is \levelable{\clXnu{P}}, it could be that the \levelable{\clXsu{P}} sets form a strict subset of the \levelable{\cl{P}} sets.

Without accounting for proofs, the partial decision procedure constructed in the proof of Theorem~\ref{thm:xpprincipal} would consider only the lengths of the \cl{P}-approximations $\Phi$.
Then, parameter values are much akin to instance complexity \citep{orponen1994instance}.
\begin{definition}
  The time $t$ bounded \emph{instance complexity} of a string $x$ with respect to a set $A$ is
  \begin{equation*}
    \ic^t(x : A) \deq \min\{\length{\Phi} \st \text{$\Phi$ is a $t$-approximation for $A$ and $x \in \dom(\Phi)$}\},
  \end{equation*}
  where a $t$-approximation for $A$ is a partial decision procedure for $A$ that runs in time $t$.
\end{definition}

The following theorem gives a time bounded version of Lemma~6.1 in~\citep{witteveen2016fpd}.

\begin{theorem}
\label{thm:nufptic}
  For any polynomial $p$, any set $A$ is in \clnu{FPT} with parameterization $\eta \deq \{(x, k) \st \ic^p(x : A) \le k\}$.
\end{theorem}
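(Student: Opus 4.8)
The plan is to verify the definition of \clnu{FPT} directly: produce, for each parameter value $k$, an \clO{n^c}-approximation for $A$ whose domain is exactly $\eta_k = \{x \st \ic^p(x : A) \le k\}$, with the exponent $c$ depending only on $p$ (and the fixed machine model) and \emph{not} on $k$. First I would check that $\eta$ is a genuine parameterization. For every $x$, the table-lookup procedure that answers correctly on $x$ and halts quickly on all other inputs is a $p$-approximation for $A$ with $x$ in its domain (assuming, as we implicitly may, that $p$ is at least linear so that such a procedure meets its time bound), so $\ic^p(x : A)$ is a finite natural number; and $\{k \st \ic^p(x : A) \le k\}$ is then plainly a nonempty up-set of $\bbN$.

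For the construction, fix $k$. There are only finitely many Turing machines of length at most $k$; let $M_k$ be the subset of those that happen to be $p$-approximations for $A$. This is the point at which the nonuniformity of \clnu{FPT} is indispensable: recognising membership in $M_k$ is not computable, but for a fixed $k$ we are free to hard-wire the finite set $M_k$ into our approximation. Define $\Phi^{(k)}$ to be the procedure that, on input $x$, simulates every $\Phi \in M_k$ on $x$ for $p(\length{x})$ steps (each such $\Phi$ halts within that many steps anyway), returns the first value in $\{0, 1\}$ produced by any of them, and otherwise halts with no verdict. Each $\Phi \in M_k$ is a partial decision procedure for $A$, so any $0/1$ output it produces is correct and different machines cannot disagree; hence $\Phi^{(k)}$ never gives a wrong answer and is a partial decision procedure for $A$. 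Its domain is $\bigcup_{\Phi \in M_k} \dom(\Phi)$, which is exactly $\{x \st \exists \Phi,\ \length{\Phi} \le k,\ \Phi\text{ a $p$-approximation for }A,\ x \in \dom(\Phi)\} = \eta_k$.

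It remains to bound the running time of $\Phi^{(k)}$. It performs $\length{M_k}$ simulations, each of a machine run for $p(\length{x})$ steps, and with the standard efficient simulation each costs a polynomial in $p(\length{x})$ whose degree is fixed by the machine model. The total is therefore in $\bigO\!\bigl(\length{M_k} \cdot p(n)^{\bigO(1)}\bigr)$, which, since $p$ is a fixed polynomial, lies in $\bigO(n^c)$ for a constant $c$ determined solely by $p$ and the simulation overhead; the cardinality $\length{M_k}$ only inflates the hidden constant. Thus for \emph{every} $k$ the set $\eta_k$ is an \clO{n^c}-slice for $A$ with the \emph{same} $c$, which is precisely the witness required for $(A, \eta) \in \clnu{FPT}$.

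I do not expect a serious obstacle. The two points deserving emphasis are, first, that selecting $M_k$ is legitimate only because we work in the nonuniform class (a uniform version would founder on the undecidability of recognising $p$-approximations for $A$), and second, that the exponent $c$ must be kept fixed as $k$ varies — which works because increasing $k$ only enlarges the finite machine set $M_k$, hence only the constant factor, never the degree of the polynomial.
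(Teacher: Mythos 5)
Your proposal is correct and follows essentially the same route as the paper: for each $k$ you combine the finitely many $p$-approximations for $A$ of length at most $k$ into a single \clO{n^c}-approximation whose domain is exactly $\eta_k$, with the exponent $c$ determined by $p$ (and simulation overhead) and independent of $k$. The extra details you supply — verifying that $\eta$ is a parameterization and that the hard-wiring of $M_k$ is licensed by nonuniformity — are sound elaborations of the paper's terser argument.
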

\begin{proof}
  Let $c$ be such that $p$ is in $\bigO(n^c)$.
  For any value of $k$, there are only finitely many $p$-approximations for $A$ of length at most $k$.
  These can be combined into a single \clO{n^c}-approximation for $A$ for which the domain is exactly $\eta_k$.
  Hence, $A$ is in \clnu{FPT} with parameterization $\eta$.
\end{proof}

Observe how this theorem is similar to Theorem~\ref{thm:nuxpprincipal}.
Stretching our definition of time bounded instance complexity a little, we can define a parameterization $\{(x, k) \st \ic^\cl{P}(x : A) \le k\}$.
The previous theorem can be adapted to show that $A$ is in \clXnu{P} with this parameterization and as seen in the proof of Theorem~\ref{thm:nuxpprincipal} this parameterization is a principal parameterization in $\calF^{\quasile_\rmnu}_{(A, \clXnu{P})}$.

Based on Theorem~\ref{thm:xpprincipal} and Theorem~\ref{thm:nufptic}, for a parameterization $\eta$ that puts a set in \cl{FPT}, we think of $\mu_\eta$ as conveying a sense of complexity.
Unlike instance complexity, the sense of complexity conveyed by $\mu_\eta$ can be called \emph{uniform} as it requires a uniform way of deriving approximations.
Indeed, there is a sort of converse to Theorem~\ref{thm:nufptic}.
\begin{theorem}
  For any set $A$ that is in \cl{FPT} with parameterization $\eta \subseteq \binary^+ \times \Omega$ there is a function $f$ and polynomial $p$ such that for all $x$ we have, up to an additive constant independent of $x$,
  \begin{equation*}
    \ic^{f(\mu_\eta(x))p}(x : A) \le \mu_\eta(x).
  \end{equation*}
\end{theorem}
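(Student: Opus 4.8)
The plan is to unwind the definition of \cl{FPT} membership for $A$ with parameterization $\eta$ and read off, for each $x$, an individual approximation whose size is controlled by $\mu_\eta(x)$. By Lemma~\ref{lem:fptdecidable}, membership of $(x,k)$ in $\eta$ is decidable by a procedure running in time $f(k)\cdot\length{x}^c$ for some computable $f$ and constant $c$, and there is a Turing machine $\Psi$ and constant $c'$ such that $\Psi(k,\cdot)$ is an \clO{n^{c'}}-approximation for $A$ with domain $\eta_k$, the partial application having running time bounded by $g(k)\cdot n^{c'}$ for some computable $g$. Since $\quasile$-equivalent parameterizations share the same minimization function up to a computable change of variable, we may as well work with a decidable parameter space, and — because $\mu_\eta$ is computable by the remark following Lemma~\ref{lem:fptdecidable} — on input $x$ we can compute $\mu_\eta(x)$ and then find \emph{some} parameter value $k_x$ with $\length{k_x}=\mu_\eta(x)$ and $(x,k_x)\in\eta$ by a bounded search over strings of that length.

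\textbf{Construction of the approximation.} For each value $m\in\bbN$, consider the machine $\Phi_m$ that, on input $x$, first checks whether $\mu_\eta(x)=m$ (rejecting the input as outside its domain otherwise), then searches for a parameter value $k$ of length $m$ with $(x,k)\in\eta$, and finally runs $\Psi(k,x)$ and returns its output. The description of $\Phi_m$ needs only a fixed program skeleton together with the number $m$, so $\length{\Phi_m}\le m + O(1)$ with the additive constant independent of $x$ and $m$. For any $x$ with $\mu_\eta(x)=m$, the search succeeds, $x\in\dom(\Psi(k_x,\cdot))=\eta_{k_x}$, and $\Phi_m$ correctly approximates $A$ on $x$; moreover $\Phi_m$ is a (total-behaviour) partial decision procedure for $A$ on all inputs, since whenever it outputs $0$ or $1$ that output comes from $\Psi$, which is an approximation for $A$. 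Thus $\Phi_{\mu_\eta(x)}$ witnesses $\ic^{t_x}(x:A)\le\mu_\eta(x)+O(1)$, where $t_x$ is the running time of $\Phi_{\mu_\eta(x)}$ on inputs of length $\length{x}$.

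\textbf{Bounding the running time.} The time used by $\Phi_m$ on $x$ decomposes as: computing $\mu_\eta(x)$, which by the decidability of $\eta$ is bounded by $h_1(\mu_\eta(x))\cdot\length{x}^{O(1)}$ for a computable $h_1$ (test $(x,k)\in\eta$ for successively longer $k$ until one succeeds, and the first success occurs at length $\mu_\eta(x)$); the search over the at most $2^{m}$ strings of length $m=\mu_\eta(x)$, each membership test costing $f(k)\cdot\length{x}^{c}$; and finally one run of $\Psi(k_x,x)$ costing $g(k_x)\cdot\length{x}^{c'}$. Since $\length{k_x}=\mu_\eta(x)$, every factor that is not polynomial in $\length{x}$ is bounded by a computable function of $\mu_\eta(x)$ alone; collecting these gives $t_x\le f^\ast(\mu_\eta(x))\cdot p(\length{x})$ for a single computable $f^\ast$ and a fixed polynomial $p$ (take $p(n)=n^{\max\{c,c'\}+1}$, say). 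Renaming $f^\ast$ to $f$ yields exactly $\ic^{f(\mu_\eta(x))p}(x:A)\le\mu_\eta(x)$ up to the additive constant, as claimed.

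\textbf{The main obstacle} I anticipate is bookkeeping the non-uniformity: the parameter value $k_x$ realizing the minimum length is not canonically given, so one must either fix a search order and argue it terminates at length $\mu_\eta(x)$, or absorb the ambiguity into the program $\Phi_m$; and one must be careful that the "time" superscript in instance complexity is allowed to depend on $x$ only through $\mu_\eta(x)$ — which is why the polynomial $p$ must be pulled out as a \emph{fixed} polynomial in $\length{x}$ while all $\mu_\eta$-dependence (including the $2^{\mu_\eta(x)}$ search and the computable bounds $f,g$ evaluated at arguments of length $\mu_\eta(x)$) is quarantined inside $f(\mu_\eta(x))$. Everything else is routine, paralleling the combination-of-approximations idea already used in the proof of Theorem~\ref{thm:nufptic}.
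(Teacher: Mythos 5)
Your construction is correct, but it takes a genuinely more roundabout route than the paper's. The paper's proof simply hard-codes a minimal-length parameter value $\omega$ (so $\length{\omega}=\mu_\eta(x)$ and $(x,\omega)\in\eta$) into the witnessing machine: the partial application $\Psi(\omega,\cdot)$ is already a \clO{n^c}-approximation for $A$ whose domain $\eta_\omega$ contains $x$, whose description length is bounded by $\length{\langle\Psi,\omega\rangle}=\mu_\eta(x)+\bigO(1)$, and whose running time $f(\omega)\cdot n^c$ becomes $\hat f(\mu_\eta(x))\cdot n^c$ after maximizing $f$ over parameter values of a given length; no decidability of $\eta$, no run-time computation of $\mu_\eta$, and no search are needed. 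You instead hard-code only the number $m=\mu_\eta(x)$ and reconstruct a suitable $k$ at run time, which forces you to invoke Lemma~\ref{lem:fptdecidable} and to absorb a search over the $2^m$ strings of length $m$ into $f(\mu_\eta(x))$; this works (provided, as you note, the search and the domain test $\mu_\eta(x)=m$ are truncated at length $m$ so that the stated time bound holds on \emph{all} inputs), and it buys you a slightly cleaner object, an approximation whose domain is exactly the level set $\{x\st\mu_\eta(x)=m\}$, at the cost of extra machinery. One caveat: the preliminary move ``we may as well work with a decidable parameter space'' via $\quasile$-equivalence should be dropped, because equivalence controls the minimization functions only up to a computable change of variable, not up to an additive constant, so passing to an equivalent parameterization would not preserve the claimed bound $\ic^{f(\mu_\eta(x))p}(x:A)\le\mu_\eta(x)+\bigO(1)$ for the original $\eta$; fortunately you never need it, since Lemma~\ref{lem:fptdecidable} applies to the given $\eta$ directly and strings of length $m$ that do not encode parameter values are simply rejected by the membership test.
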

\begin{proof}
  It suffices to show that for some $c$ and every $\omega \in \Omega$ there is a \clO{n^c}-approximation for $A$ of which the size is, up to an additive constant, bounded by $\length{\omega}$.
  Let $\Psi$ be the Turing machine witnessing that $A$ is in \cl{FPT} with parameterization $\eta$.
  By definition, there is a $c$ such that for every value $\omega$ the partial application of $\Psi$ to $\omega$ yields a \clO{n^c}-approximation for $A$.
  As this \clO{n^c}-approximation can be constructed from $\Psi$ and $\omega$, the length of its specification can, up to an independent additive constant, bounded by $\length{\langle \Psi, \omega\rangle}$.
  Because $\Psi$ is fixed for all $x$, the theorem follows.
\end{proof}

The behavior of the complexity measure embodied, for a parameterization $\eta$, by $\mu_\eta$ can be somewhat untangible.
When $\eta$ is not a principal parameterization for a given set, there are sharper parameterizations and hence sharper complexity measures possible with respect to that set.
Many sets, however, do not allow for principal parameterizations at all.
Furthermore, even when a parameterization $\eta$ is principal, there can be other parameterizations equal to it that give rise to a measure of complexity lower than $\mu_\eta$.
This improvement though, is of a bounded nature and a sense of optimality is still given to the complexity behavior of principal parameterizations.

Somewhat more abstract, for a set $A$ the filter $\calF^\quasile_{(A, \cl{FPT})}$ itself can be considered a representation of the distribution of complexity of instances with respect to $A$.
This view has the added benefit that it is applicable not only when the filter is principal.
As we did previously in the nonuniform context, we shall classify sets based on their filters with respect to  \cl{FPT}.
Where polynomial isomorphism of sets indicates a \emph{comparable} distribution of difficulty, having the same filter with respect to \cl{FPT} signifies that \emph{exactly the same} elements are difficult.
\begin{theorem}
  For any set $X$ in \cl{P} and any set $A$ we have
  \begin{equation*}
    \calF^\quasile_{(A, \cl{FPT})} = \calF^\quasile_{(A \symdiff X, \cl{FPT})}.
  \end{equation*}
\end{theorem}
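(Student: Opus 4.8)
The plan is to reduce the equality of filters to a single implication and then patch an arbitrary witnessing algorithm for $A$ with a polynomial-time subroutine for $X$. Since forming the symmetric difference with $X$ is an involution, $(A \symdiff X) \symdiff X = A$, and since $X \in \cl{P}$, it suffices to show: for every parameterization $\eta$, if $(A, \eta) \in \cl{FPT}$ then $(A \symdiff X, \eta) \in \cl{FPT}$. Applying this with $A \symdiff X$ in the role of $A$ (and the same $X$) yields the converse implication, so $\{\eta \st (A, \eta) \in \cl{FPT}\}$ and $\{\eta \st (A \symdiff X, \eta) \in \cl{FPT}\}$ are the same set of parameterizations, hence the same set of equivalence classes, hence $\calF^\quasile_{(A, \cl{FPT})} = \calF^\quasile_{(A \symdiff X, \cl{FPT})}$.

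For the implication, I would fix witnesses for $(A, \eta) \in \cl{FPT}$: a constant $c$, a Turing machine $\Psi$ taking two inputs, and a computable function $f$ such that for each parameter value $k$ the partial application $\Psi(k, \cdot)$ is an \clO{n^c}-approximation for $A$ with domain $\eta_k$ and running time bounded by $f(k) \cdot \length{x}^c$. Let $d$ be a constant with $X$ decidable in time $\bigO(n^d)$. Define $\Psi'$ on input $(k, x)$ to first simulate $\Psi(k, x)$; if that simulation halts with a value $b \in \{0, 1\}$, then $\Psi'$ decides whether $x \in X$ and returns $b$ when $x \notin X$ and $1 - b$ when $x \in X$; if the simulation halts with any value other than $0$ or $1$ (equivalently, if $x \notin \eta_k$), then $\Psi'$ halts with that same value.

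It then remains to verify that $\Psi'$ witnesses $(A \symdiff X, \eta) \in \cl{FPT}$. A case check on the two ways $\Psi'(k, x)$ can output $1$, respectively $0$, shows that $\Psi'(k, \cdot)$ is a partial decision procedure for $A \symdiff X$; and since $\Psi'(k, \cdot)$ outputs a bit on exactly the inputs on which $\Psi(k, \cdot)$ does, we get $\dom(\Psi'(k, \cdot)) = \eta_k$, so the parameterization obtained is literally $\eta$ and not merely something equivalent to it. Finally, the running time of $\Psi'(k, \cdot)$ is at most $f(k) \cdot \length{x}^c$ for the simulation, plus $\bigO(\length{x}^d)$ for the membership test in $X$, plus lower-order bookkeeping, which is bounded by $f'(k) \cdot \length{x}^{c'}$ with $c' \deq \max\{c, d\}$ and a computable $f'$ derived from $f$; this is exactly the form required by the definition of \cl{FPT}.

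I do not anticipate a genuine obstacle. The one place where care is needed is keeping the running time in the shape ``a computable function of the parameter times a fixed power of $\length{x}$'': this is why the hypothesis is $X \in \cl{P}$ and not merely that $X$ is decidable, since an arbitrary decidable $X$ would only permit the patch at the cost of an uncontrolled, parameter-independent but possibly super-polynomial, slowdown, which would not preserve membership in \cl{FPT}. A smaller point to get right is that the patched machine reports a verdict on precisely the same inputs as the original, so that the resulting filters contain exactly the same parameterizations rather than merely $\quasile$-equivalent ones.
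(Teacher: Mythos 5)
Your proposal is correct and is essentially the paper's argument spelled out in detail: the paper's proof is exactly the observation that, given $X \in \cl{P}$, the approximations witnessing $(A, \eta) \in \cl{FPT}$ can be uniformly transformed into approximations for $A \symdiff X$ with the same domain (and vice versa, via the involution you note). Your explicit bookkeeping of the exponent $\max\{c, d\}$ and of the unchanged domain matches what the paper leaves implicit.
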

\begin{proof}
  This follows from the fact that given $X$, the approximations for $A$ corresponding to a parameterization can be uniformly transformed into approximations for $A \symdiff X$ with the same domain, and vice versa.
\end{proof}

Compared to Theorem~\ref{thm:nufptsymdiffeq}, we needed the additional remark that the transformation used in the proof is uniform in the parameter.
With a similar addition we retrieve a version of Theorem~\ref{thm:nufptsymdiffsubeq} for \cl{FPT}.
\begin{theorem}
  For any two sets $A, B$ satisfying $\calF^\quasile_{(A, \cl{FPT})} = \calF^\quasile_{(B, \cl{FPT})}$ we have
  \begin{equation*}
    \calF^\quasile_{(A, \cl{FPT})} \subseteq \calF^\quasile_{(A \symdiff B, \cl{FPT})}.
  \end{equation*}
\end{theorem}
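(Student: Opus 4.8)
The plan is to follow the proof of Theorem~\ref{thm:nufptsymdiffsubeq}, now carrying the uniformity through, exactly as the preceding discussion announces. First I would fix an arbitrary parameterization $\eta \subseteq \binary^+ \times \Omega$ with $(A, \eta) \in \cl{FPT}$; it suffices to show $(A \symdiff B, \eta) \in \cl{FPT}$, since then every element of $\calF^\quasile_{(A, \cl{FPT})}$ lies in $\calF^\quasile_{(A \symdiff B, \cl{FPT})}$, which is the claimed inclusion. By the hypothesis $\calF^\quasile_{(A, \cl{FPT})} = \calF^\quasile_{(B, \cl{FPT})}$ we also have $(B, \eta) \in \cl{FPT}$, so I may fix constants $c_A, c_B$, computable functions $f_A, f_B$, and Turing machines $\Psi_A, \Psi_B$ witnessing that $A$, respectively $B$, is in \cl{FPT} with $\eta$. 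Thus for every $k \in \Omega$ the partial application $\Psi_A(k, \cdot)$ is a partial decision procedure for $A$ with domain exactly $\eta_k$ that halts on every input within time $f_A(k) \cdot \length{x}^{c_A}$, and similarly for $\Psi_B(k, \cdot)$ with respect to $B$.

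The core step is to assemble a single machine $\Psi$ taking two inputs: on $(k, x)$, run both $\Psi_A(k, x)$ and $\Psi_B(k, x)$ to completion; if they return values $a, b \in \{0, 1\}$, return $a \oplus b$; otherwise return nothing. Because each of $\Psi_A(k, \cdot)$ and $\Psi_B(k, \cdot)$ halts on every input, $\Psi(k, x)$ halts, and it returns a value in $\{0, 1\}$ precisely when $x \in \dom(\Psi_A(k, \cdot)) \cap \dom(\Psi_B(k, \cdot)) = \eta_k$; in that case it returns $A(x) \oplus B(x)$, which is the indicator of $A \symdiff B$ at $x$. One checks directly that $\Psi(k, \cdot)$ is then a partial decision procedure for $A \symdiff B$ with domain $\eta_k$. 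For the running time, evaluating $\Psi(k, \cdot)$ on an input of length $n$ costs at most $f_A(k) \cdot n^{c_A} + f_B(k) \cdot n^{c_B}$ plus a constant overhead for combining the two outputs, which is bounded by $f(k) \cdot n^c$ with $c \deq \max\{c_A, c_B\}$ and $f$ the computable function $k \mapsto f_A(k) + f_B(k) + 1$. Hence $(A \symdiff B, \eta) \in \cl{FPT}$, and since $\eta$ was an arbitrary element of $\calF^\quasile_{(A, \cl{FPT})}$, the inclusion follows.

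I do not expect a genuine obstacle; the single point requiring care --- and the only departure from the nonuniform Theorem~\ref{thm:nufptsymdiffsubeq}, as the text flags --- is that the whole construction must be uniform in the parameter. Concretely, I would make sure that $\Psi$ is presented as one machine consuming parameter values from the common space $\Omega$ (which is unproblematic, since $\Psi_A$ and $\Psi_B$ already do), that its domain equals $\eta_k$ for every $k$ rather than merely some subset of it, and that the combined running-time bound is witnessed by a genuinely computable $f$ built from $f_A$ and $f_B$. As a minor remark I would note, as in the nonuniform case, that only the inclusion $\calF^\quasile_{(A, \cl{FPT})} \subseteq \calF^\quasile_{(B, \cl{FPT})}$ of the hypothesis is actually used.
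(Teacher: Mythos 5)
Your proposal is correct and follows essentially the same route as the paper: since the hypothesis puts both $A$ and $B$ in \cl{FPT} with any $\eta$ in the filter, one uniformly combines the two witnessing machines into one that outputs the exclusive disjunction $\Psi_A(k, x) \oplus \Psi_B(k, x)$ on $\eta_k$, with the combined running-time bound $f(k)\cdot n^{\max\{c_A, c_B\}}$ witnessed by a computable $f$. Your additional bookkeeping (domain exactly $\eta_k$, computability of $f$) is exactly the uniformity point the paper flags, so there is nothing to correct.
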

\begin{proof}
  Given a parameterization $\eta$ with which $A$ and $B$ are in \cl{FPT}, let $\Psi_A$ and $\Psi_B$ be Turing machines that yield the respective approximations on partial application to a parameter value.
  From $\Psi_A$ and $\Psi_B$ it is possible to define an approximation for $A \symdiff B$ uniformly in a parameter value $k$.
  On input $x$, when $(x, k)$ is a member of $\eta$ this approximation simply outputs the exclusive disjunction of $\Psi_A(k, x)$ and $\Psi_B(k, x)$.
  It follows that $A \symdiff B$ is also in \cl{FPT} with parameterization $\eta$.
\end{proof}

We conjecture that the filter of the symmetric difference collapses to that of a set in \cl{P}.
\begin{conjecture}
  For any two sets $A, B$ we have
  \begin{equation*}
    \calF^\quasile_{(A, \cl{FPT})} = \calF^\quasile_{(B, \cl{FPT})} \iff \calF^\quasile_{(A \symdiff B, \cl{FPT})} = \calF^\quasile_{(\emptyset, \cl{FPT})}.
  \end{equation*}
\end{conjecture}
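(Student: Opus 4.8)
The plan is to peel off the easy, already-available half of the equivalence, reduce the right-hand side to a statement purely about $\cl{P}$, and then attack the remaining implication by constructing a parameterization that separates $A$ from $B$. Throughout I work with decidable $A$ and $B$ (for undecidable $A$ or $B$ both filters are the empty poset, and the statement is to be read, like the rest of this section, in the decidable setting). First I would simplify the right-hand side. Since $\emptyset \in \cl{P}$, Theorem~\ref{thm:fptprincipal} tells us $\calF^\quasile_{(\emptyset, \cl{FPT})}$ is principal, and its generator is the class of $\binary^+ \times \Omega$, which is the least element of the bounded lattice $\calL^\quasile_\cl{FPT}$; hence $\calF^\quasile_{(\emptyset, \cl{FPT})} = \calL^\quasile_\cl{FPT}$. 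Consequently, for any set $C$, the equality $\calF^\quasile_{(C, \cl{FPT})} = \calF^\quasile_{(\emptyset, \cl{FPT})}$ says exactly that $[\binary^+ \times \Omega] \in \calF^\quasile_{(C, \cl{FPT})}$, i.e.\ $(C, \binary^+ \times \Omega) \in \cl{FPT}$, which — instantiating the witness at a single parameter value — amounts to $C \in \cl{P}$. Taking $C \deq A \symdiff B$, the conjecture becomes: $\calF^\quasile_{(A, \cl{FPT})} = \calF^\quasile_{(B, \cl{FPT})}$ if and only if $A \symdiff B \in \cl{P}$. The implication from right to left is then immediate: if $X \deq A \symdiff B$ lies in $\cl{P}$, then $B = A \symdiff X$, and the theorem that the $\cl{FPT}$-filter is invariant under symmetric difference with a set in $\cl{P}$ gives $\calF^\quasile_{(A, \cl{FPT})} = \calF^\quasile_{(A \symdiff X, \cl{FPT})} = \calF^\quasile_{(B, \cl{FPT})}$.

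The substance is the forward implication, which I would prove by contraposition: assuming $D \deq A \symdiff B \notin \cl{P}$, I want a parameterization lying in exactly one of $\calF^\quasile_{(A, \cl{FPT})}$ and $\calF^\quasile_{(B, \cl{FPT})}$. The proof of the theorem stating $\calF^\quasile_{(A, \cl{FPT})} \subseteq \calF^\quasile_{(A \symdiff B, \cl{FPT})}$ actually establishes, for \emph{every} parameterization $\eta$, that $(A, \eta) \in \cl{FPT}$ and $(B, \eta) \in \cl{FPT}$ together imply $(A \symdiff B, \eta) \in \cl{FPT}$; so, by the symmetry of $A$ and $B$, it is enough to produce an $\eta$ with $(A, \eta) \in \cl{FPT}$ but $(D, \eta) \notin \cl{FPT}$ — then $(B, \eta) \notin \cl{FPT}$, so $\eta \in \calF^\quasile_{(A, \cl{FPT})} \setminus \calF^\quasile_{(B, \cl{FPT})}$. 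By Theorem~\ref{thm:maximal}, $D$ is either \immune{\cl{P}} or \levelable{\cl{P}}. If $D$ is \immune{\cl{P}}, let $S$ be a maximal \cl{P}-slice for $D$ and $C$ its complement; then $C$ is a \cl{P}-core for $D$, lies in $\cl{P}$, and is infinite because $D \notin \cl{P}$. Should $C$ fail to also be a \cl{P}-core for $A$, some \cl{P}-slice for $A$ meets $C$ infinitely, so there is an infinite $T_0 \in \cl{P}$ with $T_0 \subseteq C$ on which $A$ is decidable in polynomial time. Picking any parameterization $\zeta$ with $(A, \zeta) \in \cl{FPT}$ (one exists since $A$ is decidable) and setting $\eta_k \deq T_0 \cup \zeta_k$ gives a parameterization — decidable, with $\bigcup_k \eta_k = \binary^+$ — that still satisfies $(A, \eta) \in \cl{FPT}$ (test membership in $T_0$ first, otherwise run the machine for $\zeta$), yet every slice $\eta_k$ contains the infinite subset $T_0$ of the core $C$, so no $\eta_k$ is a \cl{P}-slice for $D$ and therefore $(D, \eta) \notin \cl{FPT}$.

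The main obstacle is everything this does not reach: the sub-case where $D$ is \immune{\cl{P}} but the hard region $C$ is simultaneously a \cl{P}-core for $A$ — so that the \cl{P}-slices of $A$ and those of $D$ are ``aligned'', each meeting $C$ only finitely — and the whole case where $D$ is \levelable{\cl{P}} and there is no maximal \cl{P}-slice to build around. In both situations the hypothesis $\calF^\quasile_{(A, \cl{FPT})} = \calF^\quasile_{(B, \cl{FPT})}$ pins down the complexity distributions of $A$ and $B$ only indirectly, and I do not see how to convert it into the control over the \emph{relative location} of the hard instances of $A$ and of $B$ that a separating parameterization requires. This is precisely where the problem takes on the character of the Berman--Hartmanis conjecture and where a new idea seems to be needed; the reduction of the first paragraph together with the sub-case just handled is, for now, the available partial progress, and the statement remains conjectural.
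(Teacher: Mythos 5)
This statement is stated in the paper as a \emph{conjecture}; the paper gives no proof of it (only the two flanking theorems that you also invoke), so there is no proof of the paper's to compare yours against, and any complete ``proof'' would have had to be treated with suspicion. You correctly recognize this. Your preliminary reduction is sound and matches the paper's own informal reading of the conjecture: since every parameterization in $\calL^\quasile_\cl{FPT}$ puts $\emptyset$ in \cl{FPT}, the filter $\calF^\quasile_{(\emptyset, \cl{FPT})}$ is the whole lattice, and a filter equals the whole lattice iff it contains the least element $\binary^+ \times \Omega$, which for a set $C$ happens iff $C \in \cl{P}$; so the right-hand side is exactly $A \symdiff B \in \cl{P}$. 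The backward implication then follows from the \cl{FPT}-analogue of Theorem~\ref{thm:nufptsymdiffeq}, as you say. Your partial attack on the forward implication is also correct as far as it goes: the XOR construction shows that any $\eta$ putting both $A$ and $B$ in \cl{FPT} puts $A \symdiff B$ in \cl{FPT}, so a separating $\eta$ with $(A,\eta) \in \cl{FPT}$ and $(A \symdiff B, \eta) \notin \cl{FPT}$ suffices; and in the sub-case where $D \deq A \symdiff B$ is \immune{\cl{P}} with maximal \cl{P}-core $C$ that is \emph{not} a \cl{P}-core for $A$, the set $T_0$ (intersection of $C$ with a \cl{P}-slice for $A$ meeting it infinitely) is an infinite set in \cl{P}, and adjoining it to every slice of any parameterization witnessing $A \in \cl{FPT}$ does yield a parameterization in $\calF^\quasile_{(A,\cl{FPT})} \setminus \calF^\quasile_{(B,\cl{FPT})}$. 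What remains --- the case where the hard region of $D$ is also a \cl{P}-core for $A$, and the case where $D$ is \levelable{\cl{P}} --- is precisely the open content of the conjecture, which the paper itself likens to the Berman--Hartmanis conjecture, and you rightly stop short of claiming it. In short: your proposal is honest partial progress consistent with everything the paper proves, not a proof, and the paper contains none either.
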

As with Conjecture~\ref{con:nusymdiff}, this conjecture implies a separation result.
However, uniformity constraints make this separation result more intricate than that obtained from Conjecture~\ref{con:nusymdiff}.

\section{Conclusion}

We have explored the algebraic structure of parameterizations underlying the parameterized analysis of computational complexity.
Under the general definitions of parameter spaces and parameterizations of Section~\ref{sec:parameterspaces}, parameterizations generate a lattice.
We found that for several parameterized complexity classes the parameterizations that put a given set in that parameterized complexity class form a filter in this lattice.
These filters hold information about the complexity make up of the sets that produce them.
Thus, in the examination of sets these filters can act as a proxy to their internal distribution of complexity.

Solely based on the filters induced by given sets with respect to parameterized complexity classes we could extend the classifications ``\immune{\cl{P}}'' and ``\levelable{\cl{P}}'' into the parameterized context.
We have seen that the classical classifications coincide with the parameterized classifications ``\immune{\clXnu{P}}'' and ``\levelable{\clXnu{P}}''.
Furthermore we have seen that for a parameterized complexity class $\cl{C} \in \{\clXnu{P},\allowbreak \clXsu{P},\allowbreak \clnu{FPT},\allowbreak \clsu{FPT},\allowbreak \cl{FPT}\}$ there are sets that are \immune{\cl{C}} as well as sets that are \levelable{\cl{C}}.
In particular, when we denote the class of sets outside \cl{P} that admit a \lli{} reduction to itself by \cl{LLI}, we have the following relations between classifications of sets.
\begin{align*}
  \text{in \cl{P}}	&\implies \text{\immune{\cl{P}}} \implies \text{\immune{\cl{C}}}
\shortintertext{and}
  \text{in \cl{LLI}}	&\implies \text{\rlap{\levelable{\cl{C}}}\phantom{\immune{\cl{P}}}} \implies \text{\levelable{\cl{P}}},
\end{align*}
where the last implications on each of these two lines become equivalences when \cl{C} is \clXnu{P}.
Indeed, there exists \immune{\cl{C}} sets and \levelable{\cl{C}} sets because \cl{P} is nonempty and, by Lemma~\ref{lem:llireduction}, \cl{LLI} is nonempty.
These implications expand on Figure~\ref{fig:partition} and it should be noted that no set can be both \immune{\cl{C}} and \levelable{\cl{C}}, and with respect to $\cl{C} \in \{\clnu{FPT}, \clsu{FPT}, \cl{FPT}\}$ it is not ruled out that some set is neither.

In defining levelability with respect to classes of fixed-parameter tractable sets we have made explicit the role of the exponent in the involved polynomial running time bounds.
Although the alternative characterizations of \clX{P} and \cl{FPT} on page~\pageref{eq:xpfpt} provide a good motivation for doing so, we wonder whether any set would meet the naive definition of being \levelable{\cl{FPT}}.
\begin{openproblem}
  Is there a set $A$ and a parameterization $\eta$ with which $A$ is in \cl{FPT} such that every parameterization $\eta' \quasile \eta$ with which $A$ is in \cl{FPT} has imix?
\end{openproblem}

Another aspect of parameterized complexity that is captured by filters for sets with respect to parameterized complexity classes is the existence of optimal parameterizations.
A parameterization that puts a set in one of our parameterized complexity classes is optimal when it is below all other such parameterizations in the order relevant for the particular parameterized complexity class.
It should be noted that the improvement signified by one parameterization being below another is far stronger than the improvements related to typical races in parameterized algorithmics, where a parameterization is held fixed.
For our nonuniform and semi-uniform parameterized complexity classes we use the nonuniform order on parameterizations.
For our strongly uniform parameterized complexity class the uniform order on parameterizations is the most natural.
An optimal parameterization for a set with respect to a parameterized complexity class then exists when the induced filter with respect to that class is principal.
Our results on principality of filters with respect to the different parameterized complexity classes can be summarized as in Table~\ref{tab:principality}.
\begin{table}[htb]
  \centering
  \begin{tabular}{l|cc}
    	& \emph{\immune{\cl{P}}}	& \emph{\levelable{\cl{P}}} \\
    \hline
    \clXnu{P}	& principal	& principal \\
    \clXsu{P} (provably)	& principal	& principal \\
    \clnu{FPT}	& principal	& Theorem~\ref{thm:nufptnonprincipal} \\
    \clsu{FPT}	& principal	& nonprincipal \\
    \cl{FPT}	& Theorem~\ref{thm:fptprincipal}	& nonprincipal \\
  \end{tabular}
  \caption{Principality of filters with respect to parameterized complexity classes depending on the classification of the set inducing the filter.}
  \label{tab:principality}
\end{table}

Every value in a cell of Table~\ref{tab:principality} is backed by one or more theorems or corollaries.
By extending the uniformity constraints of \clXsu{P} with provability constraints, we were able to obtain the same principality results as for \clXnu{P}.
Regarding fixed-parameter tractability we were able to obtain a necessary and sufficient condition for the principality of filters in the semi-uniform case.
A filter for a set with respect to \clsu{FPT} is principal if and only if the set is \immune{\cl{P}}.
Consequently, no \levelable{\clsu{FPT}} set has an optimal parameterization with respect to \clsu{FPT} as the \levelable{\clsu{FPT}} sets are all \levelable{\cl{P}}.
This is noteworthy because from a practical point of view a parameterized approach is only worthwhile for sets that are \levelable{\clsu{FPT}}.
In many cases attention is limited even further to only strongly uniform parameterized algorithms.
Such algorithms are of practical use in particular for \levelable{\cl{FPT}} sets.
As \levelable{\cl{FPT}} are again also \levelable{\cl{P}}, in Table~\ref{tab:principality} we see that no \levelable{\cl{FPT}} set admits an optimal parameterization with respect to \cl{FPT}.

When the filter induced by a set is nonprincipal, it is not possible to capture the structure responsible for the computational hardness of the set by a parameterization.
In that case, there are infinitely many distinct structural properties an element may have that can be used to decide membership of the element in the set in a way that defies the computational hardness of the set in general.

For two cells in Table~\ref{tab:principality} our results are incomplete.
The filter of a \levelable{\cl{P}} set with respect to \clnu{FPT} is nonprincipal when the set is not \levelable{\clnu{FPT}}.
However, for \levelable{\clnu{FPT}} sets the principality of the corresponding filters with respect to \clnu{FPT} is still unknown.
\begin{openproblem}
  Is there a \levelable{\clnu{FPT}} set $A$ for which $\calF^{\quasile_\rmnu}_{(A, \clnu{FPT})}$ is principal?
\end{openproblem}

We have a similar incomplete picture for the filter of an \immune{\cl{P}} set with respect to \cl{FPT}.
Such a filter is known to be principal for the rather trivial case when the set is in \cl{P}.
Otherwise we only know that filters of \levelable{\cl{P}} sets with respect to \cl{FPT} are nonprincipal.
\begin{openproblem}
  Is there a set $A$ outside \cl{P} for which $\calF^\quasile_{(A, \cl{FPT})}$ is principal?
\end{openproblem}

\bibliography{\jobname}
\end{document}